%% file: ys-202604-no-phonon-bec.tex
\documentclass[11pt,a4paper]{article}

\usepackage{fontspec}
\usepackage[titletoc,title,header]{appendix}

\usepackage{amsmath}
\usepackage{amssymb}
\usepackage{amsthm}
\usepackage{mathtools}
\usepackage{stmaryrd}  
\usepackage{xparse}    

\usepackage[margin=1in]{geometry}

\usepackage{graphicx}
\usepackage{xcolor}

\usepackage{hyperref}
\hypersetup{
  colorlinks=true,
  linkcolor=blue,
  citecolor=blue,
  urlcolor=blue,
  pdfauthor={Yoshitsugu Sekine},
  pdftitle={No-Go Theorem for Quasiparticle BEC}
}

\usepackage[]{natbib}
\theoremstyle{plain}
\newtheorem{thm}{Theorem}[section]

\newtheorem{prop}[thm]{Proposition}
\newtheorem{cor}[thm]{Corollary}

\newtheorem{fact}[thm]{Fact}

\theoremstyle{definition}
\newtheorem{defn}[thm]{Definition}

\theoremstyle{remark}
\newtheorem{rem}[thm]{Remark}

\input{mycommands.tex}

\title{No-Go Theorem for Quasiparticle BEC}

\author{%
Yoshitsugu Sekine\\{\small\texttt{4429sekine@gmail.com}}%
}

\date{\today}

\begin{document}

\maketitle

\begin{abstract}
We discuss a no-go theorem for Bose-Einstein condensation (BEC) of quasiparticles (phonons) from the viewpoint of operator algebras, using the van Hove model. The \(\beta\)-KMS states of the van Hove model satisfy the self-consistency condition of \cite{VIYukalov001}. However, the self-consistency condition is a constraint concerning the definition of the field, and is insufficient to establish the no-go theorem for BEC. In this paper, we prove the no-go theorem for BEC via two routes. First, imposing time cluster properties on the \(\beta\)-KMS states precludes BEC. Second, under nonlinear dispersion with \(s > 2\), the treatment of infrared divergences automatically reduces the algebra of physical observables, and BEC is mathematically excluded on the reduced algebra. In particular, the latter property admits an interpretation in terms of the ideal theory of the resolvent algebra.

\noindent\textbf{Keywords:} resolvent algebra, phonon Bose-Einstein condensation
\end{abstract}

\setcounter{tocdepth}{3}
\tableofcontents

\section{Introduction}\label{expedition0012078}

In \cite{VIYukalov001}, a no-go theorem for BEC of quasiparticles is discussed from the viewpoint of Hamiltonian design. That is, quasiparticles should be designed so that BEC is forbidden in their equilibrium states, and the no-go theorem for BEC of quasiparticles in equilibrium holds by definition. In particular, the self-consistency condition can be regarded as a condition that simultaneously determines the Hamiltonian and the associated equilibrium state. Here, we consider the converse problem: while there exist many widely accepted models involving quasiparticles, few of them have been rigorously verified to satisfy the no-go theorem for BEC. We therefore fix a concrete Hamiltonian, examine the existence of its equilibrium states, investigate whether BEC occurs in those equilibrium states, and discuss what conditions preclude BEC. We do not discuss BEC in non-equilibrium states. Furthermore, from the perspective of focusing on a model with the linear dispersion relation \(\omega(k)
= \abs{k}\) and a first-order perturbation of the field added to the free Hamiltonian, we refer to the quasiparticles specifically as phonons in the following discussion.

We first summarize the physical motivation of \cite{VIYukalov001}. The essential factor determining whether BEC can occur is the conservation of particle number. For conserved particles, lowering the temperature causes particles to accumulate in low-energy states, eventually leading to macroscopic occupation of the ground state. On the other hand, quasiparticles such as phonons, magnons, and bogolons are excitations generated by supplying energy from outside, and their density itself decreases with lowering temperature, ultimately vanishing. These quasiparticles have no independent particle number conservation law, and their chemical potential is fixed at \(0\), so there is no statistical-mechanical driving mechanism to push them into the ground state. Phenomena that often appear to be condensation of quasiparticles actually correspond to redefinitions of the ground state or the order parameter. Therefore, from any perspective---the properties of distribution functions, the density behavior in the low-temperature limit, or the conditions defining the field---BEC of non-conserved quasiparticles in equilibrium is excluded in principle.

We now explain the setting of this paper. We take the standpoint of verifying the claims of \cite{VIYukalov001} using operator algebras. Specifically, we fix the van Hove model \cite{AsaoArai26} and mathematically examine whether its equilibrium states actually exhibit BEC. The van Hove model is mathematically identical to the behavior of the phonon field in the Hubbard--phonon interaction system discussed in \cite{YoshitsuguSekine001,YoshitsuguSekine002}, and is also a concrete example of the generalized spin-boson model \cite{AraiHirokawa1,AraiHirokawa2}.

The main mathematical results are as follows. First, the \(\sminvtemperature\)-KMS states of the van Hove model satisfy the self-consistency condition of \cite{VIYukalov001}. However, the self-consistency condition is a constraint concerning the definition of the field, and this alone is insufficient to establish the no-go theorem for BEC. In this paper, we make this insufficiency explicit and prove the no-go theorem for BEC via two routes.

First, BEC in the van Hove model is described in the same form as BEC in the free Bose gas. This is controlled by the sesquilinear form \(\opform{q}_0\), and the results for the free Bose gas obtained in \cite{AsaoArai28,YoshitsuguSekine004} can be applied. Imposing time cluster properties---corresponding to the mildness of the equilibrium state---on the \(\sminvtemperature\)-KMS states yields the vanishing of BEC. The mildness of the equilibrium state for phonons in \cite{VIYukalov001} is a natural mathematical formulation of this time cluster property. It is also equivalent to the space cluster property, which corresponds to spatial mildness.

Second, when infrared divergences occur, under the higher-order nonlinear dispersion \(\omega_s(k)
= \abs{k}^s\) (\(s > 2\)), the treatment of infrared divergences automatically reduces the algebra of physical observables, and BEC is mathematically excluded on the reduced algebra. If the higher-order dispersion relation is renormalized into the background field in order to suppress the lattice collapse caused by infrared divergences, the reduction of the algebra of physical observables corresponds to this physical picture. In the free Bose gas or the van Hove model, the element responsible for generating BEC is carried by the sesquilinear form \(\opform{q}_0\), which generates the center of the von Neumann algebra obtained as the weak closure of the representation of the algebra of physical observables. According to the guiding principle presented in \cite{YoshitsuguSekine005}, the center describes quantities corresponding to classical components, and this is consistent with the interpretation that it represents a background field that should be removed.

Although \cite{YoshitsuguSekine005} emphasizes the formulation using the resolvent algebra rather than the conventional Weyl algebra, resolvents can be reduced to Weyl operators via the Laplace transform. Since computations with Weyl operators are more concise, we first establish the results on the Weyl algebra and then transplant them to the resolvent algebra. Preliminary computations on the Weyl algebra are carried out concretely on Fock space, and the representation-independent arguments proceed by axiomatically accepting these computational results.

The discussion of the van Hove model in this paper carries over directly to the Hubbard--phonon interaction system \cite{YoshitsuguSekine001,YoshitsuguSekine002}. Extensions to the spin-boson model \cite{FannesNachtergaeleVerbeure1,LorincziHiroshimaBetz3}, the Nelson model \cite{AsaoArai26,LorincziHiroshimaBetz3}, and more generally to many-body electron or electron-field--phonon interaction systems are left for future work. In this paper, we restrict ourselves to the behavior for the linear dispersion in three dimensions, emphasizing the physical setting. Appropriate generalizations and mathematical properties of the van Hove model will be discussed in a forthcoming paper. Although we employ techniques based on operator algebras \cite{BratteliRobinson1,BratteliRobinson2}, arguments based on functional integrals can also be constructed using the techniques of \cite{YoshitsuguSekine004}.

\section{Main Results}\label{main-results}

\subsection{Definitions}\label{definitions}

We set the basic complex Hilbert space and its real subspace as \[\sphilb{H}
=
\fun{\lp^{2}}{\fldreal^{d}},
\quad
\sphilb{H}_{\txtreal}
=
\fun{\lp_{\txtreal}^{2}}{\fldreal^{d}}
=
\fun{\lp^{2}}{\fldreal^{d};\fldreal}\]and assume \(d
= 3\) unless otherwise stated. The symplectic form is defined by \(\sigma(f,g)
= \opimag \bkt{f}{g}_{\sphilb{H}}\). As \(X\) for the real symplectic space \((X,\sigma)\), we choose the realification of the complex Hilbert space \(\sphilb{H}\), and the inner product on this real Hilbert space is given by \(\opreal \bkt{f}{g}\).

For a positive real number \(s
> 0\), the one-particle Hamiltonian (dispersion relation) defined on the wave-number space \(\greuctr{k}{d}\) is \(\omega(k)
= \omega_s(k)
= \abs{k}^{s}\). For a non-positive chemical potential \(\smchemicalpotential
\leq 0\), the non-negative self-adjoint operator is \(K_{\sminvtemperature,\smchemicalpotential}
=
\coth \frac{\sminvtemperature \rbk{\omega - \smchemicalpotential}}{2}\). For the associated non-degenerate non-negative symmetric sesquilinear form \(\opform{q}_{\txtnonzero,\smchemicalpotential}\), the associated inner product space and its completion are \[\sphilb{D}_{\sminvtemperature,\smchemicalpotential}
=
\pairbk{\fun{\opformdomain}{\opform{q}_{\txtnonzero,\smchemicalpotential}},\opform{q}_{\txtnonzero,\smchemicalpotential}},
\quad
\sphilb{H}_{\sminvtemperature,\smchemicalpotential}
=
\gtclos{\sphilb{D}_{\sminvtemperature,\smchemicalpotential}}^{\opform{q}_{\txtnonzero,\smchemicalpotential}}.\] We assume \(d
= 3\) and \(s
\geq 1\) in order to restrict to physically relevant situations.

\begin{rem}
In contrast to $s
= 1$, which is the original phonon dispersion relation, $s
> 1$ is regarded as a nonlinear term that suppresses excess phonons to prevent blow-up due to infrared and ultraviolet divergences,
and is introduced in order to consider the renormalization of this contribution into the background field.
As we shall see below,
the essential point is that the treatment strategy changes significantly between the original linear dispersion of acoustic phonons and nonlinear dispersions.
The same argument applies to other appropriate dispersion relations.
\end{rem}

In general, the bosonic Fock space over a Hilbert space \(\sphilb{H}\) is defined by \(\fun{\spfock_{\txtbsn}}{\sphilb{H}}
=
\bigoplus_{n=0}^{\infty}
\bigotimes_{\txtsym}^{n}
\sphilb{H}\), and for any \(f
\in \sphilb{H}\), the creation and annihilation operators on the bosonic Fock space are denoted by \(\opfockcran_{\txtfock}(f)\). The Segal field operator is defined by \[\opfocksegal_{\txtfock}(f)
=
\frac{1}{\sqrt{2}}
\rbk{\opfockcr_{\txtfock}(f) + \opfockan_{\txtfock}(f)}\]and the Weyl operator is defined by \(\opfockweyl_{\txtfock}(f)
=
\napiernum^{\imunit \opfocksegal_{\txtfock}(f)}\). The self-adjoint operator \(\physham_{\txtbsn,\txtfr}
= \fun{\opfocksndqntdiff_{\txtbsn}}{\omega}\) on Fock space is the second quantization operator determined by the dispersion relation, and its ground state is the Fock vacuum. The Hamiltonian of the van Hove model is defined as a perturbation of the free-field Hamiltonian by the Segal field operator with the source function \(\varrho\) as its variable; specifically, \[\physham_{\txtvanhowe}
=
\physham_{\txtbsn,\txtfr}
+\fun{\opfocksegal_{\txtfock}}{\frac{\varrho}{\sqrt{\omega}}}.\] By the Kato--Rellich theorem, this is a self-adjoint operator bounded from below. The infrared and ultraviolet cutoffs are imposed on the source \(\varrho\), and they will be specified in detail later.

\subsection{Quantities related to Bose-Einstein condensation}\label{quantities-related-to-bose-einstein-condensation}

Let \(\smnumberdensity_0(\sminvtemperature)\) denote the condensate density at inverse temperature \(\sminvtemperature
> 0\). We define the non-closed non-negative symmetric bilinear form corresponding to the condensate component as \[\opform{q}_{0}(f)
=
2 (2 \pi)^d \smnumberdensity_0(\sminvtemperature)
\abs{\faftr{f}(0)}^{2},
\quad
\opformdomain(\opform{q}_{0})
=
\fun{\lp^{1}}{\fldreal^{d}}
\cap
\fun{\lp^{2}}{\fldreal^{d}}\]and define the subspace \(\sphilb{D}_{0,\sminvtemperature,\smchemicalpotential}
=
\opformdomain(\opform{q}_0)
\cap
\sphilb{H}_{\sminvtemperature,\smchemicalpotential}\). When the value of the chemical potential \(\smchemicalpotential\) has no special significance, or when \(\smchemicalpotential
= 0\), we omit the subscript for the chemical potential from each of the above objects. Furthermore, for any \(f
\in \sphilb{D}_{0,\sminvtemperature}\), we define the sesquilinear form \[\opform{q}_{\txtbec}(f)
=
\opform{q}_{0}(f)
+\opform{q}_{\txtnonzero}(f).\]

\subsection{Weyl algebra and resolvent algebra}\label{expedition0012083}

Let \(\sphilb{H}\) be a complex Hilbert space, and define the symplectic form by the bilinear map \[\sigma
\colon \sphilb{H} \times \sphilb{H}
\to
\fldreal;
\quad
\sigma(f,g)
=
\opimag \bkt{f}{g}_{\sphilb{H}}.\] For any \(f,g
\in \sphilb{H}\), the \(\oacstar\)-algebra \[\oaweyl
=
\oaweyl(\sphilb{H}, \sigma)
=
\oaweyl(\sphilb{H})
=
\oacstar
\set{\opfockweyl(f)}{f \in \sphilb{H}}\]whose generators \(\opfockweyl(f)\) satisfy the Weyl relations \begin{equation}
\begin{aligned}
\faadj{\opfockweyl(f)}
&=
\opfockweyl(-f), \\
\opfockweyl(f) \opfockweyl(g)
&=
\napiernum^{-\frac{\imunit}{2} \opimag \bkt{f}{g}_{\sphilb{H}}}
\opfockweyl(f+g)
\end{aligned}
\end{equation} is called the Weyl algebra. Unless there is risk of confusion, we use appropriate abbreviations for the Weyl algebra as the occasion demands. We may specify an appropriate subspace rather than the full Hilbert space; the Weyl algebra over the full Hilbert space is also called the full Weyl algebra. When we wish to distinguish clearly from the Fock representation or the Araki--Woods representation of the Weyl algebra, the above abstractly defined Weyl algebra is also called the abstract Weyl algebra.

Let \((\sphilb{H}_{\repn},\repn)\) be a representation of the abstract Weyl algebra \(\oaweyl(\sphilb{H})\). When the unitary group \(t
\in \fldreal
\to \repn(\opfockweyl(tf))\) is strongly continuous for every \(f
\in \sphilb{H}\), the representation \((\sphilb{H}_{\repn},\repn)\) is called a regular representation. A state \(\omega\) on the Weyl algebra \(\oaweyl(\sphilb{H})\) is called a regular state when its GNS representation is regular. Furthermore, let \(\oaweyl(\sphilb{D})\) be the Weyl algebra over a pre-Hilbert space \(\sphilb{D}\). When a representation \(\pairbk{\sphilb{H},\repn}\) of the Weyl algebra satisfies that \(\fldreal \ni t
\mapsto \repn(\opfockweyl(tf))\) is strongly continuous for every \(f
\in \sphilb{D}\), the representation is called a regular representation. Furthermore, when the GNS representation of a state \(\oastate\) on the Weyl algebra is a regular representation, the state \(\oastate\) is called a regular state.

Let \(\fun{\spfock_{\txtbsn}}{\sphilb{H}}\) denote the Fock space over the complex Hilbert space \(\sphilb{H}\), and define the Segal field operator and the symplectic form \(\sigma\) by \[\opfocksegal_{\txtfock}(f)
=
\frac{1}{\sqrt{2}}
\rbk{\opfockcr_{\txtfock}(f) + \opfockan_{\txtfock}(f)},
\quad
\sigma_{\txtfock}(f,g)
=
\opimag \bkt{f}{g}_{\sphilb{H}}.\] We then define \[\repn_{\txtfock}
\colon \oaweyl(\sphilb{H},\sigma)
\to \opspbddlin{\fun{\spfock_{\txtbsn}}{\sphilb{H}}};
\quad
\repn_{\txtfock}(\opfockweyl(f))
=
\opfockweyl_{\txtfock}(f)
=
\napiernum^{\imunit \opfocksegal_{\txtfock}(f)}.\] We call this \(\repn_{\txtfock}\) the Fock representation of the abstract Weyl algebra, or simply the Fock representation of the Weyl algebra.

Following \cite{DetlevBuchholz001}, we introduce the definition and basic properties of the resolvent algebra. Let \((X,\sigma)\) be a symplectic space. Let \(\oaresolventalgebra_0\) be the universal unital \(\ast\)-algebra generated by the set \(\set{\oaresolvent(\lambda,f)}
{\lambda \in \fldmultiplicativegroup{\fldreal}, f \in \sphilb{H}}\), subject to the following resolvent relations: \begin{align}
\oaresolvent(\lambda,0)
&=
-\frac{\imunit}{\lambda} \idone, \\ 
\faadj{\oaresolvent(\lambda,f)}
&=
\oaresolvent(-\lambda,f), \\ 
\nu \oaresolvent(\nu \lambda, \nu f)
&=
\oaresolvent(\lambda, f), \\ 
\oaresolvent(\lambda,f) - \oaresolvent(\mu,f)
&=
\imunit
(\mu - \lambda)
\oaresolvent(\lambda,f) \cdot \oaresolvent(\mu,f) \\ 
&=
\imunit
(\mu - \lambda)
\oaresolvent(\mu,f) \cdot \oaresolvent(\lambda,f), \\ 
\commutator{\oaresolvent(\lambda,f)}{\oaresolvent(\mu,g)}
&=
\imunit
\sigma(f,g)
\oaresolvent(\lambda,f)
\oaresolvent(\mu,g)^2
\oaresolvent(\lambda,f), \label{expedition0012052} \\ 
\oaresolvent(\lambda,f)
\oaresolvent(\mu,g)
&=
\oaresolvent(\lambda+\mu, f+g)
\cdot
\rbkleft{\oaresolvent(\lambda,f)} \\
&\quad\rbkright{+
\oaresolvent(\mu,g)
+\imunit \sigma(f,g) \oaresolvent(\lambda,f)^2 \oaresolvent(\mu,g)}. 
\end{align} In particular, by condition \eqref{expedition0012052}, \(\oaresolvent(\lambda,f)\) and \(\oaresolvent(\mu,f)\) with the same \(f\) commute.

The \(\ast\)-algebra obtained by introducing an appropriate norm on \(\oaresolventalgebra_0\) and completing it is called the abstract resolvent algebra, or simply the resolvent algebra. For details on the norm, see \cite[P.2730, Definition 3.4]{BuchholzGrundling2}. In particular, by \cite[P.2730, Theorem 3.6 (iii)]{BuchholzGrundling2}, we have \(\norm{\oaresolvent(\lambda,f)}
= \frac{1}{\abs{\lambda}}\).

As a dense subalgebra, we choose the \(\ast\)-subalgebra generated by finite products of the generators \(\oaresolvent(\lambda,f)\) of \(\oaresolventalgebra(\sphilb{H},\sigma)\); in symbols, \[\oaresolventalgebra_{\txtfin}
=
\oastaralgebra
\set{\prod^{\txtfin} \oaresolvent(z_j,f_j)}{z_j \in \fldcmp \setminus \fldreal, f_j \in X}.\] Furthermore, the \(\ast\)-subalgebra obtained by restricting \(\sphilb{H}\) to an arbitrary subspace \(\sphilb{D}\) is denoted by \(\oaresolventalgebra_{\txtfin}(\sphilb{D},\sigma)\). In the discussion of Bose-Einstein condensation of the free Bose gas and the van Hove model, the first argument can become lengthy and the boundary with the second argument may be unclear; accordingly, we sometimes use a semicolon as a delimiter and write \(\oaresolvent(\lambda;f)\).

As is well known for ordinary resolvents, the resolvent is analytic in the first argument, and the same property holds for the general resolvent algebra. Using this, we extend the real variable \(\lambda
\in \fldreal\) of the resolvent algebra to a complex variable \(z
\in \fldcmp \setminus \imunit \fldreal\), obtaining the relations \begin{align}
\oaresolvent(z,0)
&=
-\frac{\imunit}{z} \idone, \\ 
\faadj{\oaresolvent(z,f)}
&=
\oaresolvent(-\cmpconj{z},f), \\ 
\nu \oaresolvent(\nu z, \nu f)
&=
\oaresolvent(z,f),
\quad
\nu \in \fldmultiplicativegroup{\fldreal}, \\ 
\oaresolvent(z,f) - \oaresolvent(w,f)
&=
\imunit
(w - z)
\oaresolvent(z,f) \cdot \oaresolvent(w,f) \\ 
&=
\imunit
(w - z)
\oaresolvent(w,f) \cdot \oaresolvent(z,f), \\ 
\commutator{\oaresolvent(z,f)}{\oaresolvent(w,g)}
&=
\imunit
\sigma(f,g)
\oaresolvent(z,f)
\oaresolvent(w,g)^2
\oaresolvent(z,f), \\ 
\oaresolvent(z,f)
\oaresolvent(w,g)
&=
\oaresolvent(z+w, f+g)
\cdot
\rbkleft{\oaresolvent(z,f)} \\
&\quad\rbkright{+
\oaresolvent(w,g)
+\imunit \sigma(f,g) \oaresolvent(z,f)^2 \oaresolvent(w,g)} 
\end{align} which are also called the resolvent relations.

Let \(\oaresolventalgebra(X,\sigma)\) be the resolvent algebra, and let \(S\) be a subset of the symplectic space \(X\). A representation \(\oarepn
\in \Rep(\oaresolventalgebra(X,\sigma), \sphilb{H}_{\oarepn})\) is called a regular representation on \(S\) if \(\Ker \oarepn(\oaresolvent(1,f))
= \setone{0}\) holds for every \(f
\in S\). A state \(\oastate\) on the resolvent algebra is called a regular state when its GNS representation is a regular representation on \(X\).

\begin{prop}[\cite{BuchholzGrundling2}]\label{expedition0011838}
Let $\pairbk{X,\sigma}$ be a symplectic space of arbitrary dimension, and let $S
\subset X$ be a non-degenerate finite-dimensional subspace.
\begin{enumerate}
\item
The norms of the full resolvent algebra $\oaresolventalgebra(X,\sigma)$ and the subalgebra $\oaresolventalgebra(X,\sigma)$ coincide on the $\ast$-subalgebra $$\oastaralgebra
\set{\oaresolvent(\lambda,f)}
{f \in S, \lambda \in \fldreal \setminus \setone{0}}.$$
In particular, $\oaresolventalgebra(S,\sigma)
\subset \oaresolventalgebra(X,\sigma)$ holds.

\item
The full resolvent algebra is the inductive limit of the net $\fml{\oaresolventalgebra(S,\sigma)}
{X \subset S}$ over non-degenerate finite-dimensional subspaces $S
\subset X$.

\item
Every regular representation of the full resolvent algebra $\oaresolventalgebra(X,\sigma)$ is faithful.
\end{enumerate}
In particular, the center of the full resolvent algebra is trivial.
\end{prop}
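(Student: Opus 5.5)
The proposition is quoted from \cite{BuchholzGrundling2}, so the plan is to reconstruct the argument of that paper in the order (1), (2), (3), and then derive the statement about the center.

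\textbf{Step (1): isometric inclusion.} For a non-degenerate finite-dimensional $S\subset X$, I would choose a symplectic complement $S^{\perp_\sigma}$, so that $X=S\oplus S^{\perp_\sigma}$. The aim is a conditional-expectation-type map, or a faithful family of representations, that is compatible with the inclusion $\oaresolventalgebra_0(S)\to\oaresolventalgebra_0(X)$.

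1. By definition, the norm on $\oaresolventalgebra_0(X,\sigma)$ is the supremum over regular representations. Restricting a regular representation of $X$ to the generators indexed by $S$ gives a regular representation of $S$. Hence $\|A\|_X\le\|A\|_S$ for $A$ in the $\ast$-algebra generated by the $\oaresolvent(\lambda,f)$ with $f\in S$.
2. For the reverse inequality, take any regular representation $\pi$ of $\oaresolventalgebra(S)$. Tensor it with a regular representation of $\oaresolventalgebra(S^{\perp_\sigma})$, for instance the Fock representation over the complement, defining $\oaresolvent(\lambda,f+g)$ through the resolvent of the summed self-adjoint fields. This gives a regular representation of $X$ that restricts to $\pi\otimes\idone$ on $S$.
3. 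Therefore $\|A\|_S\le\|A\|_X$, and equality holds. The identification $\oaresolventalgebra(S,\sigma)\subset\oaresolventalgebra(X,\sigma)$ follows.

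\textbf{Step (2): inductive limit.} This is essentially formal once (1) is known.

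1. The non-degenerate finite-dimensional subspaces form an upward directed net. Any finite set of vectors in $X$ lies in some such $S$: take their span and enlarge it by symplectic partners, which is possible since $\sigma$ is non-degenerate on $X$.
2. Hence the union of the $\oaresolventalgebra(S,\sigma)$ contains every generator of $\oaresolventalgebra_0(X)$.
3. By (1) the inclusions are isometric, so the closure of the union is $\oaresolventalgebra(X,\sigma)$. Therefore $\oaresolventalgebra(X,\sigma)=\varinjlim\oaresolventalgebra(S,\sigma)$.

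\textbf{Step (3): faithfulness of regular representations.} This is the main obstacle.

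1. By (2), a representation of an inductive limit of $C^*$-algebras is faithful if and only if its restriction to each $\oaresolventalgebra(S,\sigma)$ is faithful. Injectivity on the dense union of subalgebras suffices, since an injective $\ast$-homomorphism of $C^*$-algebras is isometric.
2. Fix $S$ with $\dim S=2n$. A regular representation $\pi$ of $X$ restricts to a regular representation of $S$. By the Stone--von Neumann theorem, applied through the Laplace-transform correspondence between resolvents and Weyl operators mentioned in the Introduction, this restriction is quasi-equivalent to a multiple of the Schr\"odinger representation on $\lp^2(\fldreal^n)$.
3. It then remains to show that the Schr\"odinger representation of $\oaresolventalgebra(S,\sigma)$ is faithful. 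Here the structure of $\oaresolventalgebra(S)$ is needed: its ideals are indexed by the non-regular representations associated with subspaces where fields are ``at infinity''. I would argue by induction on $n$ that any ideal killed by the Schr\"odinger representation must vanish, using the fact that $\oaresolventalgebra(S)$ contains the compacts and that the non-regular quotients are detected by limits of resolvents.
4. I expect this induction to be the hardest part. I would first try to cite the structural result of \cite{BuchholzGrundling2} that the universal norm is attained in the Schr\"odinger representation.

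\textbf{Consequence: trivial center.} Take a regular representation, say the Fock representation. It is faithful by (3) and irreducible, so any central element of $\oaresolventalgebra(X,\sigma)$ is mapped into the commutant of an irreducible family. Its image is therefore a scalar, and by faithfulness the element itself is a multiple of $\idone$.
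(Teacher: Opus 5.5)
There is a genuine gap, and it sits at the one non-formal point of the proposition. In Step~(1) you take the norm of $\mathcal{R}_0(X,\sigma)$ to be ``by definition'' the supremum over regular representations. That is not the norm used here. In Buchholz--Grundling (Definition~3.4, which the paper refers to) it is the universal C$^*$-seminorm, the supremum over \emph{all} representations. The resolvent algebra has many non-regular representations, e.g.\ the character with $\pi(R(\lambda,f))=0$ for all $f\neq 0$, or representations regular on a subspace $Y$ and vanishing off $Y$.

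Your inequality $\|A\|_X\le\|A\|_S$ survives, since any representation of $\mathcal{R}_0(X,\sigma)$ restricts to one of $\mathcal{R}_0(S,\sigma)$. The reverse inequality, however, only dominates the contribution of \emph{regular} representations of $\mathcal{R}(S,\sigma)$. Your tensor construction via $\phi_\pi(f)\otimes 1+1\otimes\phi_F(g)$ is not available when $\pi(R(\lambda,f))$ has a kernel. To close (1) as you set it up, you need exactly the finite-dimensional case of (3): the universal norm of $\mathcal{R}(S,\sigma)$ is attained in the Schr\"odinger representation $\pi_0$. So (1), (2), (3) and the corollary on the center all rest on this single fact.

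Your proposal is also internally inconsistent on this point. If the norm really were a supremum over regular representations, then by Stone--von Neumann every regular representation of $\mathcal{R}(S,\sigma)$ would be a multiple of $\pi_0$. All of them would then have the same norm, and faithfulness of $\pi_0$ would be trivial rather than ``the hardest part''.

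That key fact is never proved. Step~(3), items~3--4, only gestures at an induction over ideals and then defers to citing the source. What is needed is a weak-containment statement: $\ker\pi_0\subseteq\ker\pi$ for every irreducible representation $\pi$ of $\mathcal{R}(S,\sigma)$. This gives $\ker\pi_0=0$, because the universal representation is faithful. One workable route:
\begin{enumerate}
\item The commutator relation shows that $\ker\pi(R(\lambda,f))$ is invariant under every $\pi(R(\mu,g))$. Hence in an irreducible $\pi$ each $\pi(R(\lambda,f))$ is either injective or zero.
\item The product relation shows that the injective directions form a subspace $Y$ on which $\pi$ is regular, with the radical of $\sigma|_Y$ acting by a character.
\item Exhibit a vector state of such a $\pi$ as a weak$^*$ limit of normal states of $\pi_0$ whose distributions spread out in the directions off $Y$, so that $\int(i\lambda-x)^{-1}\,d\mu_n(x)\to 0$. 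Such limits annihilate $\ker\pi_0$.
\end{enumerate}
Once this is supplied, your tensor-product argument closes (1), since then $\|A\|_S=\|\pi_0(A)\|$. Steps~(2) and (3), and the center argument via a faithful irreducible regular (e.g.\ Fock) representation, then go through as you wrote them.
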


\subsection{Theorems}\label{theorems}

We define the Segal field operator satisfying the self-consistency condition by \(\opfocksegal_{\txtselfconsistent}(f)
=
\opfocksegal(f)
+\opreal \mathsf{m}(f)\), where \(\mathsf{m}(f)
= \opreal \bkt{\varrho / \omega^{3/2}}{f}\).

\begin{thm}[Selection criterion for physical phonons, Theorem \ref{expedition0012077}]
For the $\sminvtemperature$-KMS state $\oastate[\psi_{\txtvanhowe,\sminvtemperature}]$ of the van Hove model, we have $\fun{\oastate[\psi_{\txtvanhowe,\sminvtemperature}]}
{\opfocksegal_{\txtselfconsistent}(f)}
= 0$.
\end{thm}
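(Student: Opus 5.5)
The plan is to remove the linear term in the van Hove Hamiltonian by completing the square, which identifies the $\sminvtemperature$-KMS state as a shifted (coherent) version of the free quasi-free KMS state. The expectation of the Segal field is then exactly the shift, and $\opfocksegal_{\txtselfconsistent}$ is built to cancel it. For simplicity I take $\varrho$ real-valued; for complex $\varrho$ the same computation goes through with $\opreal$ applied in the right places.

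\emph{Step 1: completing the square.} Work formally in momentum space with $a(k)$, and set $c = \varrho/(\sqrt{2}\,\omega^{3/2})$. Then
\begin{equation*}
\physham_{\txtvanhowe}
= \int \omega(k)\,\bigl(a^\ast(k)+c(k)\bigr)\bigl(a(k)+c(k)\bigr)\,dk \;-\; \int \omega(k)\,c(k)^2\,dk ,
\end{equation*}
since the cross terms reproduce exactly $\frac{1}{\sqrt 2}\int (\varrho/\sqrt{\omega})(a^\ast+a)\,dk = \opfocksegal_{\txtfock}(\varrho/\sqrt{\omega})$. The constant $\int \omega c^2 = \frac12\norm{\varrho/\omega}^2$ is finite under the imposed cutoffs.

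Put $b = a + c$. Whenever $c\in\sphilb{H}$, this shift is implemented by the Weyl operator $U = \opfockweyl_{\txtfock}(\imunit\sqrt{2}\,c)$, up to sign conventions. This gives $\physham_{\txtvanhowe} = U\,\physham_{\txtbsn,\txtfr}\,U^\ast - \frac12\norm{\varrho/\omega}^2$, and, for the fields,
\begin{equation*}
U\opfocksegal_{\txtfock}(f)U^\ast = \opfocksegal_{\txtfock}(f) + \sqrt{2}\,\opreal\bkt{c}{f} = \opfocksegal_{\txtfock}(f) + \mathsf{m}(f).
\end{equation*}

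\emph{Step 2: the dynamics at the level of the Weyl algebra.} To avoid requiring $\varrho/\omega^{3/2}\in\lp^2$, which can fail in the infrared, I would state Step 1 on the abstract Weyl algebra. Consider the automorphism $\gamma(\opfockweyl(f)) = \napiernum^{\imunit\mathsf{m}(f)}\opfockweyl(f)$, defined for $f$ in the subspace where $\mathsf{m}(f)$ is finite. Then check directly, using the Fock computation of the van Hove time evolution as axiomatic input, that
\begin{equation*}
\tau^{\txtvanhowe}_t = \gamma^{-1}\circ\tau^{0}_t\circ\gamma ,
\end{equation*}
where $\tau^0_t(\opfockweyl(f)) = \opfockweyl(\napiernum^{\imunit t\omega}f)$. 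Explicitly, $\tau^{\txtvanhowe}_t(\opfockweyl(f))$ equals $\opfockweyl(\napiernum^{\imunit t\omega}f)$ times the phase $\napiernum^{\imunit(\mathsf{m}(\napiernum^{\imunit t\omega}f)-\mathsf{m}(f))}$, which follows from the Heisenberg equation for $a(k)$. Conjugation by an automorphism carries KMS states bijectively to KMS states. Hence $\psi_{\txtvanhowe,\sminvtemperature} = \psi_{0,\sminvtemperature}\circ\gamma$, where $\psi_{0,\sminvtemperature}$ is a $\sminvtemperature$-KMS state for the free dynamics. Taking $\psi_{0,\sminvtemperature}$ to be the quasi-free state built from $\opform{q}_{\txtnonzero}$, its generating functional is $\napiernum^{-\frac14\opform{q}_{\txtnonzero}(f)}$, with the normalization fixed by the paper's conventions.

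\emph{Step 3: reading off the one-point function.} We have
\begin{equation*}
\psi_{\txtvanhowe,\sminvtemperature}(\opfockweyl(tf)) = \napiernum^{\imunit t\,\mathsf{m}(f)}\,\napiernum^{-\frac{t^2}{4}\opform{q}_{\txtnonzero}(f)} .
\end{equation*}
Differentiating in $t$ at $t=0$ gives $\psi(\opfocksegal(f))$ in terms of $\mathsf{m}(f)$. Tracking the sign convention of $\gamma$ so that it matches Step 1, this yields $\psi(\opfocksegal(f)) = -\mathsf{m}(f)$. Since $\mathsf{m}(f)$ is real, adding $\opreal\mathsf{m}(f)$ gives $\psi(\opfocksegal_{\txtselfconsistent}(f)) = 0$. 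Regularity of the state, which makes the derivative meaningful, follows from the explicit Gaussian form.

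\emph{Main obstacle.} The delicate point is Step 2: the uniqueness, or at least the identification, of the KMS state. If the theorem concerns every $\sminvtemperature$-KMS state, then free KMS states may carry an additional condensate part $\opform{q}_0$. Such a part contributes only a phase average to the generating functional, of the form $\int \napiernum^{\imunit\theta}\,d\mu$, rather than a fixed mean field. I would first restrict to the state constructed in the paper, namely the Araki--Woods-type state for $\opform{q}_{\txtnonzero}$. For the condensate part, I would then argue that the gauge-invariant (rotationally averaged) extremal decomposition still has zero mean, since $\napiernum^{\imunit\theta}$ averages to zero. Checking that the domain of $\mathsf{m}$ matches the test-function space $\sphilb{D}_{\sminvtemperature}$ under the infrared cutoff is a secondary technical point.
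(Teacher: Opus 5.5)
Your strategy is sound and ends with the same step as the paper, but it reaches the generating functional by a different route. The paper's proof of this theorem is only your Step 3. It takes $\psi_{\mathrm{vH},\beta}(W(tf))=\exp\bigl(-it\opreal\mathsf m(f)-\tfrac{t^2}{4}\opform{q}_{\txtnonzero}(f)-\tfrac{t^2}{4}\opform{q}_0(f)\bigr)$ from Theorem~\ref{expedition0011637}, differentiates at $t=0$ to get $\psi_{\mathrm{vH},\beta}(\phi(f))=-\opreal\mathsf m(f)$, and rearranges. That formula was obtained by conjugating the finite-volume, cut-off Gibbs state with $V_{\kappa,\Lambda}=W_{\mathrm F}(i\mathsf m_{\kappa,\Lambda})$, then taking the thermodynamic limit and removing the cutoffs. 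Note that $V_{\kappa,\Lambda}$ is your $U$, since $\sqrt2\,c=\mathsf m_{\kappa,\Lambda}$. Your Step 2 instead works directly on $\mathcal W(\dom\mathsf m)$: the van Hove dynamics is conjugate to the free dynamics by a phase automorphism, so KMS states correspond bijectively. This avoids both the limits and the requirement $\mathsf m\in L^2$. It also gives the conclusion for the pullback of any free KMS state with vanishing one-point function, for example a gauge-invariant one. The paper's route, by contrast, singles out one specific limit state, including its Gaussian condensate factor.

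As written, however, Steps 2--3 contain a sign inversion, and this matters because the theorem is exactly a sign-sensitive cancellation.

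\emph{Direction of the conjugation.} The phase you state for $\tau^{\mathrm{vH}}_t$ is correct; it agrees with Proposition~\ref{expedition0011094}. But with $\gamma(W(f))=e^{i\mathsf m(f)}W(f)$ one computes $\gamma^{-1}\circ\tau^0_t\circ\gamma(W(f))=e^{-i(\mathsf m(e^{it\omega}f)-\mathsf m(f))}W(e^{it\omega}f)$. So the correct relation is $\tau^{\mathrm{vH}}_t=\gamma\circ\tau^0_t\circ\gamma^{-1}$ with $\gamma=\Ad U$, which is what your Step 1 already implies. Consequently $\psi_{\mathrm{vH},\beta}=\psi_{0,\beta}\circ\gamma^{-1}$ and $\psi_{\mathrm{vH},\beta}(W(tf))=e^{-it\mathsf m(f)}e^{-\frac{t^2}{4}\opform{q}_{\txtnonzero}(f)}$. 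Your displayed formula in Step 3 instead gives $\psi(\phi(f))=-i\,\frac{d}{dt}\psi(W(tf))\big|_{t=0}=+\mathsf m(f)$, the opposite of what is needed. The phrase ``tracking the sign convention'' has to be replaced by this actual computation.

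\emph{The real part.} With $\mathsf m(f)=\langle\omega^{-3/2}\varrho,f\rangle$ as in Definition~\ref{expedition0011100}, $\mathsf m$ is complex-linear in $f\in L^2(\mathbb R^3;\mathbb C)$. It is therefore not real merely because $\varrho$ is real. For $\gamma$ to be a $*$-automorphism, it must be defined with $\opreal\mathsf m(f)$; your own first equality in Step 1 already reads $\sqrt2\opreal\langle c,f\rangle$. With this change you get $\psi(\phi(f))=-\opreal\mathsf m(f)$, and the cancellation against $\phi_{\mathrm{sc}}(f)=\phi(f)+\opreal\mathsf m(f)$ is exact without any reality claim.

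\emph{The condensate.} No phase-averaging argument is needed here. In the state of the theorem, $\opform{q}_0$ enters as $-\tfrac{t^2}{4}\opform{q}_0(f)$, which is $O(t^2)$ and drops out on differentiating at $t=0$. Your caution is still justified: pulling back a non-gauge-invariant free KMS state with a fixed condensate phase $\theta$ would add a mean proportional to $\opreal(e^{i\theta}\hat f(0))$. So the statement genuinely concerns the constructed state.
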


This condition corresponds to the self-consistency condition of \cite{VIYukalov001}. However, the self-consistency condition is a constraint concerning the definition of the field, and this alone is insufficient to establish the no-go theorem for BEC: see Remark \ref{expedition0012090}.

For the \(\sminvtemperature\)-KMS states of the van Hove model, the following expression is obtained.

\begin{thm}[Theorem \ref{expedition0011612}, Theorem \ref{expedition0011637}]\label{expedition0012087}
The expectation value of the Weyl operator in the $\sminvtemperature$-KMS state $\oastate[\psi_{\txtvanhowe,\sminvtemperature}]$ for the van Hove model is $$\oastate[\psi_{\txtvanhowe,\sminvtemperature}]
(\opfockweyl(f))
=
\fnexp{-\imunit
\opreal \mathsf{m}(f)
-\oneoverfour
\opform{q}_{\txtnonzero}(f)
-\oneoverfour
\opform{q}_{0}(f)}.$$
\end{thm}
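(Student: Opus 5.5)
The plan is to reduce the van Hove dynamics to the free dynamics by a Bogoliubov (shift) automorphism, and then import the known description of the $\sminvtemperature$-KMS states of the free Bose gas.

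\textbf{Step 1: the dressing transformation.} On Fock space, completing the square gives, formally,
\[
\physham_{\txtvanhowe}
=
\fun{\opfockweyl_{\txtfock}}{g}\,\physham_{\txtbsn,\txtfr}\,\faadj{\fun{\opfockweyl_{\txtfock}}{g}}
+ E_0,
\]
with a constant $E_0$. The shift $g$ is chosen, up to a factor of $\imunit$ and of $\sqrt{2}$, proportional to $\varrho/\omega^{3/2}$. This holds whenever $\varrho/\omega^{3/2}\in\sphilb{H}$, which is exactly the infrared condition on the source.

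Conjugating with this Weyl operator gives the linear shift
\[
\opfocksegal_{\txtfock}(f)\mapsto \opfocksegal_{\txtfock}(f)+\opreal\mathsf{m}(f).
\]
Consequently the van Hove time evolution on $\oaweyl(\sphilb{H})$ is
\[
\tau^{\txtvanhowe}_t = \gamma\circ\tau^{0}_t\circ\gamma^{-1},
\]
where $\tau^0_t(\opfockweyl(f))=\opfockweyl(\napiernum^{\imunit t\omega}f)$ is the free evolution. The automorphism $\gamma$ acts by $\gamma(\opfockweyl(f))=\napiernum^{-\imunit\opreal\mathsf{m}(f)}\opfockweyl(f)$; this follows from the Weyl relations, since $\mathsf{m}$ is a real-linear functional.

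Because $\gamma$ is defined purely at the level of the abstract Weyl algebra, the argument is representation independent. In particular, $\oastate$ is $\tau^{\txtvanhowe}$-KMS if and only if $\oastate\circ\gamma$ is $\tau^0$-KMS. The sign convention for $\gamma$ versus $\gamma^{-1}$ should be fixed so that it reproduces the sign of $\mathsf{m}$ in $\opfocksegal_{\txtselfconsistent}$.

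\textbf{Step 2: classify the free KMS states.} I will invoke the results for the free Bose gas from \cite{AsaoArai28,YoshitsuguSekine004}. Every regular extremal $\sminvtemperature$-KMS state of the free dynamics at $\smchemicalpotential=0$ is a quasi-free state of the form
\[
\opfockweyl(f)\mapsto \fnexp{-\oneoverfour\opform{q}_{\txtnonzero}(f)-\oneoverfour\opform{q}_0(f)},
\]
possibly twisted by a phase $\napiernum^{\imunit\theta(f)}$ in the condensate sector. Averaging over that phase, or restricting to the gauge-invariant choice, removes the twist. For $s\ge1$ and $d=3$, the condensate term $\opform{q}_0$ arises from the zero mode in the thermodynamic limit. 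It is defined on $\opformdomain(\opform{q}_0)$ because $\faftr{f}(0)$ requires $f\in\lp^1$.

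Pulling these states back through $\gamma$ then gives the stated formula: $\gamma$ contributes only the phase factor $\fnexp{-\imunit\opreal\mathsf{m}(f)}$.

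\textbf{Main obstacle.} The hard part is Step 2 together with the domain issues, which amounts to identifying the most general KMS state rather than only the Gibbs limit. First, the formula should be stated for $f\in\sphilb{D}_{0,\sminvtemperature}$, which requires that $\mathsf{m}$ be finite there. Second, one must show that the $\sminvtemperature$-KMS condition for $\tau^0$ forces the two-point structure to be $\opform{q}_{\txtnonzero}$ plus a positive form supported at $k=0$.

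My first attempt would be the standard argument. The KMS condition applied to the pair $\opfockweyl(f)$, $\opfockweyl(g)$, combined with time invariance, pins down the spectral measure of the two-point function off $k=0$ to be $\coth(\sminvtemperature\omega/2)$. Any residual mass can sit only at $\omega=0$, and the resulting condensate density is $\smnumberdensity_0(\sminvtemperature)$.
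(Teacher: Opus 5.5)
Your Step 1 is essentially sound, but Step 2 has a genuine gap. The $\beta$-KMS condition for the free dynamics does not single out the factor $\exp(-\tfrac14\mathsf{q}_0(f))$, so no classification argument can deliver the stated formula.

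For the free gas at $\mu=0$ with a condensate, the extremal (factor) $\beta$-KMS states are coherent, of the form $\exp(-\tfrac14\mathsf{q}_{\mathrm{nz}}(f))\exp(\mathsf{i}\operatorname{Re}(\bar w\,\hat f(0)))$ with $|w|$ fixed by the condensate density. The zero mode enters through a \emph{linear} phase, not a Gaussian. So your claim that extremal states are $\exp(-\tfrac14\mathsf{q}_{\mathrm{nz}}-\tfrac14\mathsf{q}_0)$ up to a phase twist is false.

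Averaging over $\arg w$ does not repair this. It gives a Bessel factor $J_0(|w|\,|\hat f(0)|)$, which is the canonical-ensemble answer, not $\exp(-\tfrac14\mathsf{q}_0(f))$. The Gaussian appears only when $w$ itself is averaged against a complex Gaussian law, i.e.\ from the particular zero-mode statistics of the grand-canonical ensemble.

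In general, KMS, even together with gauge invariance, only gives $\exp(-\tfrac14\mathsf{q}_{\mathrm{nz}}(f))\,\chi(\hat f(0))$, with $\chi$ the characteristic function of an arbitrary rotation-invariant probability measure on $\mathbb{C}$. The state need not be quasi-free. Moreover, $\rho_0$ is not a function of $\beta$ at all; it is fixed by the total density chosen in the thermodynamic limit. The strategy in your ``main obstacle'' paragraph therefore cannot close.

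The statement concerns a specific state. In the paper, $\psi_{\mathrm{vH},\beta}$ is \emph{defined} as the limit of finite-volume grand-canonical Gibbs states, first $L\to\infty$, then $\kappa\to0$, $\Lambda\to\infty$. What you need is to identify your pulled-back state with this limit, and your Step 1 does exactly that once it is applied where it is legitimate.

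With cutoffs, $\mathsf{m}_{\kappa,\Lambda}\in\mathcal{H}$ and $W_{\mathrm{F}}(\mathsf{i}\,\mathsf{m}_{\kappa,\Lambda})$ is a genuine unitary. The finite-volume Gibbs state then factorizes as $e^{-\mathsf{i}\operatorname{Re}\langle\mathsf{m}_{\kappa,\Lambda},f\rangle}$ times the free Gibbs state; this is the paper's route. The Gaussian $\mathsf{q}_0$-term is inherited from the known thermodynamic limit of the free grand-canonical states \cite{AsaoArai28,YoshitsuguSekine004}. Removing the cutoffs only moves the phase, which converges for $f\in\dom\mathsf{m}$.

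Without cutoffs the Fock-space dressing does not exist. For the point source, $\|\varrho/\omega^{3/2}\|^2=\int|k|^{-3s}\,d^3k$ diverges at the origin for every $s\ge1$. So your abstract $\gamma$, defined on the Weyl algebra over $\dom\mathsf{m}$ rather than over all of $\mathcal{H}$, is the right object there.

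Once $\psi_{\mathrm{vH},\beta}=\varphi_\beta\circ\gamma$ is established, with $\varphi_\beta$ the free limit state, your transport argument proves the KMS property of the limit more convincingly than the paper's remark that it is ``preserved in the limit''.

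On signs: with your $\gamma(W(f))=e^{-\mathsf{i}\operatorname{Re}\mathsf{m}(f)}W(f)$, the paper's $\alpha_t(W(f))=e^{\mathsf{i}\mathsf{M}_t(f)}W(e^{\mathsf{i}t\omega}f)$ equals $\gamma^{-1}\circ\tau^0_t\circ\gamma$, not $\gamma\circ\tau^0_t\circ\gamma^{-1}$. Its KMS states are $\varphi\circ\gamma$, which gives the stated phase $-\mathsf{i}\operatorname{Re}\mathsf{m}(f)$.
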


By this expression, the occurrence of BEC is controlled by the sesquilinear form \(\opform{q}_{0}\). The no-go theorem for BEC is obtained via the following two routes.

First, when the degree of the dispersion relation is high, the treatment of infrared divergences reduces the algebra of physical observables, and BEC is mathematically excluded.

\begin{thm}
Let the dispersion relation be $\omega_s(k)
= \abs{k}^s$ for $s > 2$.
On the algebra of physical observables incorporating the infrared singularity condition,
$\opform{q}_0(f)
= 0$ holds for every $f
\in \sphilb{D}_{\txtirsingular,\sminvtemperature}$.
In particular, BEC does not occur.
\end{thm}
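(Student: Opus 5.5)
The plan is to make precise what ``the algebra of physical observables incorporating the infrared singularity condition'' is, and then show that every test function in it has vanishing Fourier transform at the origin. Since $\opform{q}_0(f) = 2(2\pi)^d \smnumberdensity_0(\sminvtemperature)\abs{\faftr{f}(0)}^2$, it suffices to prove $\faftr{f}(0) = 0$ for every $f \in \sphilb{D}_{\txtirsingular,\sminvtemperature}$.

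I would take $\sphilb{D}_{\txtirsingular,\sminvtemperature}$ to be the subspace of $\sphilb{D}_{0,\sminvtemperature}$ on which the expectation in Theorem \ref{expedition0012087} stays finite and the map $f \mapsto \oastate[\psi_{\txtvanhowe,\sminvtemperature}](\opfockweyl(f))$ is well defined. Concretely, $f$ must lie in the form domain of $\opform{q}_{\txtnonzero}$, i.e.
\[
\int_{\fldreal^3} \coth\frac{\sminvtemperature\omega_s(k)}{2}\,\abs{\faftr{f}(k)}^2\,dk < \infty ,
\]
and $\mathsf{m}(f)$ must be finite, since the source term $\varrho/\omega^{3/2}$ is infrared singular once the cutoff is removed at $k=0$.

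The first step is the small-$k$ analysis of the $\opform{q}_{\txtnonzero}$ integrand. Since $\coth(\sminvtemperature\abs{k}^s/2) \sim 2/(\sminvtemperature\abs{k}^s)$ as $k \to 0$, in three dimensions the integral $\int_{\abs{k}<1} \abs{k}^{-s}\,dk$ is finite exactly when $s<3$. So $\opform{q}_{\txtnonzero}$ alone forces $\faftr{f}(0)=0$ only for $s\geq 3$. For $2<s<3$ I would use the second constraint from $\mathsf{m}$. With $\varrho$ taken without an infrared cutoff (say $\varrho(0)\neq 0$), the pairing $\int \overline{\varrho(k)}\,\abs{k}^{-3s/2}\faftr{f}(k)\,dk$ near $0$ requires $\int_{\abs{k}<1}\abs{k}^{-3s/2}\,dk<\infty$, i.e.\ $3s/2<3$, i.e.\ $s<2$, whenever $\faftr{f}(0)\neq 0$. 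Hence for $s>2$ finiteness of $\mathsf{m}(f)$ forces $\faftr{f}(0)=0$. This is where the threshold $s>2$ enters. Continuity of $\faftr{f}$ for $f \in \lp^1 \cap \lp^2$ makes both arguments rigorous: if $\faftr{f}(0)\neq 0$, then $\abs{\faftr{f}}\geq c>0$ on a small ball and the integral diverges.

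The second step is to transfer this to the algebra. The Weyl algebra $\oaweyl(\sphilb{D}_{\txtirsingular,\sminvtemperature})$, and via Laplace transform the resolvent algebra $\oaresolventalgebra(\sphilb{D}_{\txtirsingular,\sminvtemperature},\sigma)$, are the reduced observable algebras. On these, Theorem \ref{expedition0012087} gives a state in which the $\opform{q}_0$ factor is identically $1$. The state therefore coincides with the condensate-free quasi-free state, so the condensate density is not detected and BEC does not occur. I would remark that the function $f \mapsto \faftr{f}(0)$, which generates the central (classical) part, is precisely what has been removed.

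The main obstacle is fixing the definition of $\sphilb{D}_{\txtirsingular,\sminvtemperature}$ so that it is a genuine symplectic subspace that is invariant under the free time evolution $\napiernum^{\imunit t\omega}$ (needed for the KMS state to restrict). Invariance itself is clear, because multiplication by a phase preserves both the integrability conditions and the vanishing at $0$. I would also need to make sure the $s>2$ condition comes from $\mathsf{m}$ and not only from $\opform{q}_{\txtnonzero}$. If the paper's cutoff on $\varrho$ differs, I would instead track the exponent with $\abs{\varrho(k)}\sim\abs{k}^{a}$ and adjust.
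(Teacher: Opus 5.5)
Your proposal is correct and follows essentially the same route as the paper. The paper's proof likewise argues that membership in $\dom \mathsf{m}$ imposes an infrared integrability condition on $\faftr{f}$ at $k=0$ which, for $s>2$, forces $\faftr{f}(0)=0$ and hence $\opform{q}_0(f)=0$.

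Your version is in fact more careful in two respects. First, you use the exponent $\abs{k}^{-3s/2}$ that follows from $\mathsf{m}=\omega^{-3/2}\varrho$. The paper's proof writes $\abs{k}^{-s}$, which taken literally would only give the threshold $s\geq 3$, the threshold you correctly attribute to $\opform{q}_{\txtnonzero}$ alone. Second, you justify the divergence through the continuity of $\faftr{f}$ for $f\in\lp^1$.
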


If the higher-order dispersion relation is renormalized into the background field in order to suppress the lattice collapse caused by infrared divergences, the reduction of the algebra of physical observables corresponds to this physical picture.

On the other hand, for \(1
\leq s \leq 2\), in particular for the linear dispersion \(s
= 1\), this theorem does not apply, and a physical assumption is required for the no-go theorem for BEC. We formulate the mildness of the equilibrium state for phonons in \cite{VIYukalov001} in terms of time cluster properties. By Theorem \ref{expedition0012087}, BEC in the van Hove model is described in the same form as in the free Bose gas, so the results of \cite{AsaoArai28,YoshitsuguSekine004} are applicable.

\begin{thm}
Suppose that the $\sminvtemperature$-KMS state of the van Hove model satisfies the time cluster property.
Then the van Hove model does not exhibit BEC.
\end{thm}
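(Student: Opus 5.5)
The approach is to reduce, via Theorem \ref{expedition0012087}, to the free Bose gas setting and then use time clustering to force the condensate form \(\opform{q}_0\) to vanish. By Theorem \ref{expedition0012087}, the \(\sminvtemperature\)-KMS state of the van Hove model has the form
\[
\oastate[\psi_{\txtvanhowe,\sminvtemperature}](\opfockweyl(f))
=
\fnexp{-\imunit \opreal \mathsf{m}(f)-\oneoverfour \opform{q}_{\txtnonzero}(f)-\oneoverfour \opform{q}_{0}(f)} .
\]
The linear term \(-\imunit \opreal \mathsf{m}(f)\) is a pure phase. It can be removed by the Bogoliubov (coherent) shift \(\opfockweyl(f)\mapsto \napiernum^{\imunit \opreal \mathsf{m}(f)}\opfockweyl(f)\). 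This shift is a \(\ast\)-automorphism of \(\oaweyl(\sphilb{D}_{0,\sminvtemperature})\), and it commutes with the free dynamics \(\alpha_t(\opfockweyl(f))=\opfockweyl(\napiernum^{\imunit t\omega}f)\) up to the phase \(\mathsf{m}(\napiernum^{\imunit t\omega}f)\). This is the same shift that implements the passage from \(\physham_{\txtvanhowe}\) to \(\physham_{\txtbsn,\txtfr}\). Consequently the shifted state is a \(\sminvtemperature\)-KMS state for the free dynamics, with two-point form \(\opform{q}_{\txtbec}=\opform{q}_0+\opform{q}_{\txtnonzero}\). This is precisely the form treated for the free Bose gas in \cite{AsaoArai28,YoshitsuguSekine004}. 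The time cluster property is preserved under this shift, because the phase factors cancel between \(\oastate(A\,\alpha_t(B))\) and \(\oastate(A)\oastate(B)\).

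The core step is to test the cluster property on Weyl operators. I would take \(A=\opfockweyl(f)\) and \(B=\opfockweyl(g)\) with \(f,g\in \sphilb{D}_{0,\sminvtemperature}\). By the Weyl relations,
\[
\oastate\bigl(\opfockweyl(f)\alpha_t(\opfockweyl(g))\bigr)
=
\napiernum^{-\frac{\imunit}{2}\sigma(f,g_t)}
\oastate(\opfockweyl(f+g_t)),
\qquad g_t=\napiernum^{\imunit t\omega}g .
\]
Expanding the quadratic forms, the ratio of this quantity to \(\oastate(\opfockweyl(f))\oastate(\opfockweyl(g))\) equals
\[
\exp\Bigl(-\tfrac12 \opreal \opform{q}_{\txtnonzero}(f,g_t)-\tfrac12\opreal\opform{q}_0(f,g_t)-\tfrac{\imunit}{2}\sigma(f,g_t)\Bigr).
\]
Next I would examine the behavior as \(t\to\infty\), taking \(f,g\) with compactly supported smooth Fourier transforms. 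For these, \(\opform{q}_{\txtnonzero}(f,g_t)\) and \(\sigma(f,g_t)\) are oscillatory integrals \(\int \overline{\faftr f}\,\faftr g\,\napiernum^{\imunit t|k|^s}K\,dk\). I expect them to tend to \(0\) by the Riemann--Lebesgue lemma in the radial variable. This requires \(K\sim 2/(\sminvtemperature|k|^s)\) near \(0\) to remain integrable in \(d=3\) against \(|k|^2dk\), which holds for \(s<3\), and in general on \(\sphilb{D}_{\sminvtemperature}\) by density. On the other hand, \(\faftr{g_t}(0)=\faftr g(0)\) because \(\omega(0)=0\). Hence
\[
\opform{q}_0(f,g_t)=2(2\pi)^d\smnumberdensity_0(\sminvtemperature)\,\overline{\faftr f(0)}\faftr g(0)
\]
for all \(t\). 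The ratio therefore converges to \(\exp(-(2\pi)^d\smnumberdensity_0\opreal\overline{\faftr f(0)}\faftr g(0))\). The time cluster property (including its averaged variant) forces this to equal \(1\). Choosing \(f=g\) with \(\faftr f(0)\neq0\), which is possible within \(\lp^1\cap\lp^2\), yields \(\smnumberdensity_0(\sminvtemperature)=0\), so \(\opform{q}_0\equiv0\) and no BEC occurs.

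I expect the main obstacle to be the Riemann--Lebesgue step for \(\opform{q}_{\txtnonzero}(f,g_t)\), because of the infrared singularity of \(K_{\sminvtemperature}\) at \(k=0\). The plan for this step is to first restrict to \(f,g\) whose Fourier transforms vanish near a small ball around \(0\) in the \(\opform{q}_{\txtnonzero}\) part. I would handle this by splitting \(\faftr f=\faftr f\chi_{|k|<\varepsilon}+\faftr f\chi_{|k|\ge\varepsilon}\) and bounding the small-ball contribution by \(\int_{|k|<\varepsilon}|k|^{2-s}dk\to0\), uniformly in \(t\). Only \(\opform{q}_0\) sees the exact point \(k=0\), so the obstruction is isolated there. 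The remaining task is to check that the notion of time cluster property used here matches the one in \cite{AsaoArai28,YoshitsuguSekine004}, so that their conclusion \(\smnumberdensity_0=0\) can be cited directly instead of being reproved.
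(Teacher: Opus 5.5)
You follow the paper's route. Its proof only notes that, by Theorem \ref{expedition0012087}, the van Hove KMS state is the free-gas state up to the phase $\napiernum^{-\imunit\opreal\mathsf{m}(f)}$, and then cites \cite{AsaoArai26,YoshitsuguSekine004} for ``time clustering $\Leftrightarrow$ no BEC'' in the free Bose gas. Your version fills in both halves. The exact cancellation $\napiernum^{\imunit\mathsf{M}_t(g)}\napiernum^{-\imunit\opreal\mathsf{m}(f+g_t)}=\napiernum^{-\imunit\opreal\mathsf{m}(f)-\imunit\opreal\mathsf{m}(g)}$, valid for every $t$, is the justification the paper leaves implicit for transferring clustering. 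The invariance $\faftr{g_t}(0)=\faftr{g}(0)$ combined with Riemann--Lebesgue then proves the cited free-gas implication rather than importing it. Once you have that cancellation, the detour through a shifted free KMS state is unnecessary.

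Two points need adjustment.

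\textbf{Domain of the test functions.} After the cutoffs are removed, the state and the dynamics exist only on $\oaweyl(\sphilb{D}_{\txtirsingular,\sminvtemperature})$, because $\mathsf{m}$ is undefined off $\dom\mathsf{m}$. So your shift is not an automorphism of $\oaweyl(\sphilb{D}_{0,\sminvtemperature})$, and the cluster hypothesis can only be tested on $f,g\in\sphilb{D}_{\txtirsingular,\sminvtemperature}$. Your closing choice of $f$ with $\faftr{f}(0)\neq0$ must be made in that space. For $s>2$ no such $f$ exists (Theorem \ref{expedition0011638}). In that regime clustering tells you nothing about $\smnumberdensity_0(\sminvtemperature)$, which is not even determined by the state; the conclusion holds instead because $\opform{q}_0\equiv0$ on the observable algebra. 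So split into cases:
\begin{itemize}
\item where $\dom\mathsf{m}$ admits test functions with $\faftr{f}(0)\neq0$ (small $s$, e.g.\ $s=1$), your argument gives $\smnumberdensity_0(\sminvtemperature)=0$;
\item otherwise BEC is invisible on the algebra.
\end{itemize}

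\textbf{The infrared step.} The obstacle you anticipate is not real. Write $\opform{q}_{\txtnonzero}(f,g_t)=\bkt{K_{\sminvtemperature}^{1/2}f}{\napiernum^{\imunit t\omega}K_{\sminvtemperature}^{1/2}g}$; both vectors lie in $L^2$ by the definition of the form domain. Multiplication by $\abs{k}^s$ has purely absolutely continuous spectrum, so Riemann--Lebesgue gives decay for all $f,g\in\sphilb{D}_{\sminvtemperature}$ at once. No $\varepsilon$-splitting and no separate restriction to $s<3$ is needed, and the same argument handles $\sigma(f,g_t)$.
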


By \cite{AsaoArai26,YoshitsuguSekine004}, for the equilibrium states of the free Bose gas, the validity of the time cluster property is equivalent to the validity of the space cluster property. Since the space cluster property also represents spatial mildness for the equilibrium state of phonons, the cluster property is a natural formulation of the mildness of the equilibrium state assumed in \cite{VIYukalov001}.

\section{Basic setup for the van Hove model}\label{expedition0011236}

\subsection{\texorpdfstring{The source \(\varrho\) and its cutoffs \(\varrho_{\kappa},\varrho_{\Lambda},\varrho_{\kappa,\Lambda}\)}{The source \textbackslash varrho and its cutoffs \textbackslash varrho\_\{\textbackslash kappa\},\textbackslash varrho\_\{\textbackslash Lambda\},\textbackslash varrho\_\{\textbackslash kappa,\textbackslash Lambda\}}}\label{the-source-varrho-and-its-cutoffs-varrho_kappavarrho_lambdavarrho_kappalambda}

The source \(\varrho\) representing the interaction term is a Dirac delta function with weight at the origin, that is, \(\varrho
= \diracdelta_0\), and the source with infrared and ultraviolet cutoffs \(\varrho_{\kappa,\Lambda}\) is defined by \[\faftr{\varrho_{\kappa,\Lambda}}(k)
=
\faftr{\varrho}(k) \fndef{\kappa \leq \abs{k} \leq \Lambda}(k),
\quad
\faftr{\varrho}(k)
=
\faftr{\diracdelta_0}(k)
=
1.\] In particular, when the infrared cutoff is removed by setting \(\kappa
= 0\), we write \(\varrho_{\Lambda}\), and when the ultraviolet cutoff is removed by setting \(\Lambda
= \infty\), we write \(\varrho_{\kappa}\).

Then for any \(\kappa
> 0\), the infrared regularization condition \(\varrho_{\kappa,\Lambda}
\in \dom \omega^{-\frac{3}{2}}\), namely \[\int_{\greuctr{k}{3}}
\frac{\abs{\faftr{\varrho_{\kappa,\Lambda}}(k)}^2}{\omega(k)^3}
\opdmsr{k}
=
\int_{\kappa \leq \abs{k} \leq \Lambda}
\frac{1}{\omega(k)^3}
\opdmsr{k}
<
\infty\]holds, together with the convergence in the topology of the space of tempered distributions \(\fun{\dsttempered}{\fldreal^{3}}\): \[\lim_{\kappa \to 0, \Lambda \to \infty}
\varrho_{\kappa,\Lambda}
=
\varrho.\] In particular, the source with cutoffs satisfies the condition \(\varrho
\in
\dom \omega^{-1}\) around the origin, that is, the integrability condition for any \(\kappa
> 0\): \[\int_{\abs{k} \leq \kappa}
\frac{\abs{\faftr{\varrho_{\kappa,\Lambda}}(k)}^2}{\omega(k)^2}
\opdmsr{k}
<
\infty.\]

We define the mean functional \(\mathsf{m}_{\kappa,\Lambda}, \mathsf{m}\) using the source and the domain \(\dom \mathsf{m}\) as follows.

\begin{defn}\label{expedition0011100}
So that they make sense as functions on the wave-number space via Fourier transform, we define $$\mathsf{m}_{\kappa,\Lambda}
=
\omega^{-\frac{3}{2}} \varrho_{\kappa,\Lambda},
\quad
\mathsf{m}
=
\omega^{-\frac{3}{2}} \varrho$$and denote the induced mean functionals by the same symbols.
Specifically,
\begin{equation}
\begin{aligned}
\mathsf{m}_{\kappa,\Lambda}(f)
&=
\bkt{\omega^{-\frac{3}{2}} \varrho_{\kappa,\Lambda}}{f}_{\sphilb{H}}, \\
\mathsf{m}(f)
&=
\int_{\greuctr{k}{3}}
\frac{\cmpconj{\faftr{\varrho}(k)} \faftr{f}(k)}{\omega(k)^{\frac{3}{2}}}
\opdmsr{k}.
\end{aligned}
\end{equation}
The domain of the former is the entire $\sphilb{H}$,
and the domain of the latter is $\dom \mathsf{m}$.
\end{defn}

Since the \(\lp^{2}\)-norm of the function \(\mathsf{m}\) without cutoff is \[\norm{\mathsf{m}}_{\fun{\lp^{2}}{\fldreal^{3}}}^2
=
\int_{\greuctr{k}{3}}
\frac{\abs{\faftr{\varrho}(k)}^2}{\omega(k)^3}
\opdmsr{k}\]the integrability of \(\mathsf{m}\) as a function can also be used for determining infrared singularity.

As discussed in the formulation via operator theory in \cite{AsaoArai26} (using different notation), the square integrability of \(\mathsf{m}\) as a function around the origin in wave-number space is a property that controls infrared divergence. When infrared (and ultraviolet) cutoffs are present, this square integrability is always guaranteed; otherwise, that is, when infrared divergence occurs, the condition imposing constraints on observable physical quantities is the specification of the subspace \(\dom \mathsf{m}\).

\begin{rem}
The better the properties of the source $\varrho$, the larger $\dom \mathsf{m}$ can be taken and the more physical quantities can be observed;
conversely, the worse the properties of $\varrho$, the smaller $\dom \mathsf{m}$ becomes,
and the fewer physical quantities can be observed without being affected by divergences.
As the notation indicates, this is also influenced by $\omega$
and by the spatial dimension.
Here we consider the domain of the functional $\mathsf{m}$ that appears in the definition of the automorphism group.

If the source $\varrho$ is a rapidly decreasing function and $\omega(k)
= \abs{k}$,
then in the infrared region there is no singularity around the origin,
and there is no need to incorporate the effect into $\dom \mathsf{m}$.
This is because of $\abs{k}^{-\frac{3}{2}}$ in the numerator and the $k^2$ arising from the Jacobian of the change of variables.
In the ultraviolet region as well, the rapid decrease of $\varrho$ suppresses the divergence at infinity,
and again there is no need to incorporate the effect into $\dom \mathsf{m}$.

If we take $\omega(k)
= \abs{k}$ and $\varrho$ to be a point source,
then as before,
in the infrared region there is no effect on $\dom \mathsf{m}$.
However, in the ultraviolet region, in order to suppress the divergence due to $\abs{k}^{\onehalf}$,
ultraviolet integrability such as $$\int_{\abs{k} \geq 1}
\abs{k}^{\onehalf}
\faftr{f}(k)
\opdmsr{k}
<
\infty$$is required, which affects $\dom \mathsf{m}$.

Next, suppose $\varrho$ is a point source and the dispersion relation is $\omega(k)
= k^{2+\delta}$ for $\delta
> 0$.
In this case, in the infrared region, $\abs{k}^{-3}$ in the numerator leaves $\abs{k}^{-1}$ even after cancellation by the Jacobian of the change of variables,
and this affects $\dom \mathsf{m}$ in order to avoid logarithmic divergence.
In particular, the behavior $\faftr{f}(k)
= \fun{O}{\abs{k}^{1+\delta}}$ as $k
\to 0$ is required,
and in particular $\faftr{f}(0)
= 0$ is demanded.
Conversely, in the ultraviolet region this contributes as a damping term,
affecting $\dom \mathsf{m}$ in the form of reducing constraints on its elements.
\end{rem}

For brevity of notation, we define auxiliary functionals \(\mathsf{M}_{\kappa,\Lambda,t}\) and \(\mathsf{M}_{t}\).

\begin{defn}\label{expedition0011628}
We define auxiliary functionals by
\begin{equation}
\begin{aligned}
\mathsf{M}_{\kappa,\Lambda,t}(f)
&=
\opreal \fun{\mathsf{m}_{\kappa,\Lambda}}{\rbk{\napiernum^{\imunit t \omega} - \idone} f}, \\
\mathsf{M}_{t}(f)
&=
\opreal \fun{\mathsf{m}}{\rbk{\napiernum^{\imunit t \omega} - \idone} f}.
\end{aligned}
\end{equation}
\end{defn}

\begin{prop}\label{expedition0011238}
The auxiliary functionals $\mathsf{M}_{\kappa,\Lambda,t}$ and $\mathsf{M}_{t}$ satisfy the cocycle condition
\begin{equation}
\begin{aligned}
\mathsf{M}_{\kappa,\Lambda,t+s}(f)
&=
\mathsf{M}_{\kappa,\Lambda,s}(\napiernum^{\imunit t \omega} f) + \mathsf{M}_{\kappa,\Lambda,t}(f), \\
\mathsf{M}_{t+s}(f)
&=
\mathsf{M}_{s}(\napiernum^{\imunit t \omega} f) + \mathsf{M}_{t}(f).
\end{aligned}
\end{equation}
\end{prop}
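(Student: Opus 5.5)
The plan is to verify the cocycle identity directly from Definition \ref{expedition0011628}, using only that $\mathsf{m}_{\kappa,\Lambda}$ and $\mathsf{m}$ are real-linear functionals and that $t \mapsto \napiernum^{\imunit t \omega}$ is a one-parameter group. Once the domain issue below is settled, the cutoff and non-cutoff cases are word for word the same, so I treat them together and write $\mathsf{m}_\bullet$ for either functional.

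The computation is a telescoping sum. Starting from the right-hand side, real-linearity of $f \mapsto \opreal \mathsf{m}_\bullet(f)$ gives
\[
\mathsf{M}_{s}(\napiernum^{\imunit t \omega} f) + \mathsf{M}_{t}(f)
=
\opreal \fun{\mathsf{m}_\bullet}{\napiernum^{\imunit s \omega}\napiernum^{\imunit t \omega} f - \napiernum^{\imunit t \omega} f + \napiernum^{\imunit t \omega} f - f}.
\]
The group law $\napiernum^{\imunit s \omega}\napiernum^{\imunit t \omega} = \napiernum^{\imunit (t+s) \omega}$ holds by functional calculus, or simply pointwise on the Fourier side as multiplication by $\napiernum^{\imunit (t+s)\abs{k}^s}$. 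The two middle terms cancel, and what remains is exactly $\opreal \fun{\mathsf{m}_\bullet}{(\napiernum^{\imunit (t+s)\omega} - \idone) f} = \mathsf{M}_{t+s}(f)$.

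The cutoff case has no domain issue, since $\mathsf{m}_{\kappa,\Lambda} \in \sphilb{H}$ and the functional is bounded on all of $\sphilb{H}$.

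The only real work is in the non-cutoff case, and this is where I expect the main obstacle. There I need $(\napiernum^{\imunit u \omega} - \idone) f \in \dom \mathsf{m}$ for $u = t, s, t+s$, and I need $\mathsf{M}_s$ to make sense at the argument $\napiernum^{\imunit t \omega} f$. I plan to handle this as follows.

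The integrand of $\mathsf{m}$ on the Fourier side is $\cmpconj{\faftr{\varrho}}\,\faftr{g}\,\omega^{-3/2}$. Multiplying $\faftr{f}$ by the unimodular factor $\napiernum^{\imunit t\omega}$ leaves the modulus of this integrand unchanged. Moreover $\abs{\napiernum^{\imunit u\omega} - 1} \le \min(2, \abs{u}\omega)$, so the factor $\napiernum^{\imunit u\omega} - 1$ improves infrared behaviour and does not worsen ultraviolet behaviour.

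So I will take $\dom \mathsf{m}$ to be the space of $f$ whose Fourier transform makes the integral absolutely convergent. With that choice, $\dom \mathsf{m}$ is invariant under $\napiernum^{\imunit t\omega}$. It is also stable under $f \mapsto (\napiernum^{\imunit u\omega}-\idone)f$, and that image is still absolutely integrable against the kernel. Then every term in the identity is well defined on $\dom \mathsf{m}$, and the telescoping argument applies verbatim.

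If the paper instead intends $\mathsf{M}_t$ to be defined on a larger set, namely all $f$ with $(\napiernum^{\imunit t\omega}-\idone)f \in \dom \mathsf{m}$, then the identity is to be read for $f$ in the intersection of the relevant domains. In that reading, the invariance argument above is exactly what guarantees that these domains are compatible.
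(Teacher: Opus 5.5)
Your proposal is correct and matches the paper's proof. The paper also splits $(e^{i(t+s)\omega}-1)f$ as $(e^{is\omega}-1)e^{it\omega}f + (e^{it\omega}-1)f$ and then uses additivity of $\operatorname{Re}\langle \mathsf{m},\cdot\rangle$. Your check that $\dom \mathsf{m}$ is invariant under $e^{it\omega}$ is something the paper leaves implicit here and only uses later, in the proof of Proposition \ref{expedition0012084}.
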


\begin{proof}
Since the argument is the same for either, we work with $\mathsf{M}_{t}(f)$, which has simpler notation.
A direct computation yields
\begin{equation}
\begin{aligned}
&\mathsf{M}_{t+s}(f)
=
\opreal \bkt{\mathsf{m}}{\rbk{\napiernum^{\imunit (t+s) \omega} - 1} f} \\
&=
\opreal \bkt{\mathsf{m}}{\rbk{\napiernum^{\imunit s \omega} - 1} \napiernum^{\imunit t \omega} f}
+\opreal \bkt{\mathsf{m}}{\rbk{\napiernum^{\imunit t \omega} - 1} f} \\
&=
\mathsf{M}_{s}(\napiernum^{\imunit t \omega} f)
+\mathsf{M}_{t}(f).
\end{aligned}
\end{equation}
\end{proof}

We now define the concrete van Hove model with infrared and ultraviolet cutoffs on Fock space by \[\physham_{\txtvanhowe,\kappa,\Lambda}
=
\physham_{\txtbsn,\txtfr}
+\fun{\opfocksegal_{\txtfock}}{\frac{\varrho_{\kappa,\Lambda}}{\omega^{\onehalf}}}
=
\physham_{\txtbsn,\txtfr}
+\fun{\opfocksegal_{\txtfock}}{\omega \mathsf{m}_{\kappa,\Lambda}}.\] By the Kato--Rellich theorem, this is also a self-adjoint operator bounded from below.

\subsection{Basic setup for finite temperature}\label{expedition0011278}

Here again we use the setup adopted in \cite{YoshitsuguSekine004}. Let \(I_L
= \closedinterval{-\frac{L}{2}}{\frac{L}{2}}\) be the closed interval of side length \(L
> 0\), and define the Hilbert space of one-particle states moving in the hypercube \(I_{L}^{3}\) by \[\fun{\lp^{2}}{I_{L}^{3}}
=
\set{f \in \fun{\mblfn_{\txtborel}}{I_{L}^{3};\fldcmp}}
{\int_{I_{L}^{3}} \abs{f(x)}^2 \opdmsr{x} < \infty}\]and the lattice for the bounded system in wave-number space by \[\setlattice_L
=
\frac{2 \pi}{L} \ringratint 
=
\set{\frac{2 \pi}{L} n}{n \in \ringratint}.\] We impose periodic boundary conditions on \(I_{L}^{3}\) for the one-particle Hamiltonian or dispersion relation \(\omega\). For a positive real number \(\smchemicalpotential
> 0\), we set \(\physham_{\txtbsn,\txtfr}(\smchemicalpotential)
= \fun{\opfocksndqntdiff_{\txtbsn}}{\omega - \smchemicalpotential}\), and as an operator on the bosonic Fock space \(\fun{\spfock_{\txtbsn}}{\fun{\lp^{2}}{I_L^d}}\), we define the Hamiltonian of the van Hove model for the bounded system with chemical potential by \[\physham_{\txtvanhowe,I_{L}^{3},\kappa,\Lambda}(\smchemicalpotential)
=
\physham_{\txtbsn,\txtfr}(\smchemicalpotential)
+\fun{\opfocksegal_{\txtfock}}{\omega \mathsf{m}_{\kappa,\Lambda}}.\] By the Kato--Rellich theorem, this is a self-adjoint operator bounded from below.

The automorphism group \(\alpha_{\txtvanhowe,I_{L}^{3},\kappa,\Lambda,\smchemicalpotential}\) generated by this Hamiltonian is defined by \[\alpha_{\txtvanhowe,I_{L}^{3},\kappa,\Lambda,\smchemicalpotential,t}
=
\Ad U_{I_{L}^{3},\kappa,\Lambda,\smchemicalpotential},
\quad
U_{I_{L}^{3},\kappa,\Lambda,\smchemicalpotential}
=
\napiernum^{\imunit t \physham_{\txtvanhowe,I_{L}^{3},\kappa,\Lambda}(\smchemicalpotential)}.\] For inverse temperature \(\sminvtemperature
> 0\), the density operator \(\opdmat_{\txtvanhowe,I_{L}^{3},\kappa,\Lambda,\sminvtemperature,\smchemicalpotential}\) and the grand partition function \(\smgrandpartitionfunc_{\txtvanhowe,I_{L}^{3},\kappa,\Lambda,\sminvtemperature,\smchemicalpotential}\) are defined by \begin{equation}
\begin{aligned}
\opdmat_{\txtvanhowe,I_{L}^{3},\kappa,\Lambda,\sminvtemperature,\smchemicalpotential}
&=
\frac{1}
{\smgrandpartitionfunc_{\txtvanhowe,I_{L}^{3},\kappa,\Lambda,\sminvtemperature,\smchemicalpotential}}
\napiernum^{-\sminvtemperature \physham_{\txtvanhowe,I_{L}^{3},\kappa,\Lambda}(\smchemicalpotential)},
\\ 
\smgrandpartitionfunc_{\txtvanhowe,I_{L}^{3},\kappa,\Lambda,\sminvtemperature,\smchemicalpotential}
&=
\sqfun{\trace}{\napiernum^{-\sminvtemperature
\physham_{\txtvanhowe,I_{L}^{3},\kappa,\Lambda}(\smchemicalpotential)}}.
\end{aligned}
\end{equation} For the infinite system, the same definitions are made using notation with \(I_{L}^{3}\) removed. When the chemical potential is \(0\), we simply omit the notation for the chemical potential.

For the local density operator \[\smlocaldensityoperator_{\sminvtemperature,\smchemicalpotential}
=
\frac{1}{\napiernum^{\sminvtemperature (\omega - \smchemicalpotential)} - 1}\]the operator \(K_{\sminvtemperature,\smchemicalpotential}\) is defined by \[K_{\sminvtemperature,\smchemicalpotential}
=
2 \smlocaldensityoperator_{\sminvtemperature,\smchemicalpotential} + 1
=
\frac{1 + \napiernum^{-\sminvtemperature (\omega - \smchemicalpotential)}}
{1 - \napiernum^{-\sminvtemperature (\omega - \smchemicalpotential)}}.\] When the chemical potential is \(0\), we simply omit the notation for the chemical potential and write \(\smlocaldensityoperator_{\sminvtemperature}\) or \(K_{\sminvtemperature}\). In systems where the chemical potential is set to \(0\), the algebra of physical quantities is constrained. In particular, the one-particle subspace based on the system with infrared and ultraviolet divergences removed is defined by \[\sphilb{D}_{\txtirsingular,\sminvtemperature}
=
\dom \mathsf{m} \cap \sphilb{D}_{\sminvtemperature}.\] Under the above setup, the algebra of physical quantities as a Weyl algebra is \[\oaweyl_{\txtirsingular,\sminvtemperature}
=
\oaweyl(\sphilb{D}_{\txtirsingular,\sminvtemperature})
=
\oacstar
\set{\opfockweyl(f)}
{f \in \sphilb{D}_{\txtirsingular,\sminvtemperature}}.\]

\section{Bounded systems on the Weyl algebra}\label{expedition0011587}

As discussed in \cite{YoshitsuguSekine002}, the problem essentially reduces to the free Bose gas. However, since there are few references that discuss the van Hove model at finite temperature, we include the basic arguments for the reader's convenience. The basic strategy is the same as the ground state discussion in \cite{AsaoArai26}, namely a unitary transformation by Weyl operators under the assumption of infrared regularization.

Since the computational results are concise, we first compute various quantities for bounded systems at finite temperature in the Fock representation of the Weyl algebra. In particular, we work under the assumption of infrared and ultraviolet cutoffs in the Fock representation. In this section, since we work with nonzero chemical potential under infrared and ultraviolet regularization, the Hilbert space generating the Weyl algebra is the entire \(\sphilb{H}_L
= \fun{\lp^{2}}{I_{L}^{3}}\).

The following statement is obvious by the Fourier transform.

\begin{prop}\label{expedition0011286}
The one-particle Hamiltonian or dispersion relation $\omega$ is a non-negative self-adjoint operator on $\fun{\lp^{2}}{I_{L}^{3}}$.
Its spectrum is discrete and satisfies $$\opspec{\omega}
=
\opspec[\txtdiscrete]{\omega}
=
\set{\omega(k)}{k \in \setlattice_L^3}.$$
\end{prop}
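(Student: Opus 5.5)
The plan is to diagonalize $\omega$ by the Fourier series on the torus. Periodic boundary conditions on $I_L^3$ identify $\fun{\lp^{2}}{I_{L}^{3}}$ with a torus $\mathbb{T}^3_L$. The plane waves
$$e_k(x)=L^{-3/2}\napiernum^{\imunit k\cdot x},\qquad k\in\setlattice_L^3,$$
form a complete orthonormal basis; completeness is the standard completeness of Fourier series. The associated Fourier coefficient map
$$\mathcal{F}_L\colon \fun{\lp^{2}}{I_{L}^{3}}\to \ell^2(\setlattice_L^3)$$
is therefore unitary.

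I will define $\omega$ on the torus as the Fourier multiplier $\mathcal{F}_L^{-1} M_\omega \mathcal{F}_L$. Here $M_\omega$ is multiplication by the real, non-negative sequence $\omega(k)=\abs{k}^s$ on its maximal domain
$$\{(c_k)\in \ell^2 : \textstyle\sum_k \abs{k}^{2s}\abs{c_k}^2<\infty\}.$$
This is exactly what ``imposing periodic boundary conditions'' means for the dispersion relation. A maximal multiplication operator by a real function is self-adjoint, and non-negativity of the function gives non-negativity of the operator. Unitary equivalence transports both properties back to $\fun{\lp^{2}}{I_{L}^{3}}$.

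For the spectrum, each $e_k$ is an eigenvector with eigenvalue $\omega(k)$, so $\set{\omega(k)}{k\in\setlattice_L^3}\subset \opspec[\txtdiscrete]{\omega}$. Each such eigenvalue has finite multiplicity, because only finitely many lattice points lie on a given sphere $\abs{k}=r$. Each is also isolated, because $\set{\abs{k}^s}{k\in\setlattice_L^3}$ has no accumulation point: a ball in $\fldreal^3$ contains finitely many lattice points. Consequently the set of values is closed. For $\lambda$ outside this closed discrete set, $\inf_k\abs{\omega(k)-\lambda}>0$, so $(M_\omega-\lambda)^{-1}$ is a bounded multiplication operator and $\lambda\notin\opspec{\omega}$. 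Hence $\opspec{\omega}$ equals the closure of the values, which is the set itself, and every point of it is an isolated eigenvalue of finite multiplicity. This gives $\opspec{\omega}=\opspec[\txtdiscrete]{\omega}$.

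There is no real obstacle here. The only point needing care is that the value set is closed and discrete with finite multiplicities. This follows from the local finiteness of $\setlattice_L^3$ together with $\abs{k}^s\to\infty$.
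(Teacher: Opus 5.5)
Your proposal is correct and takes essentially the paper's approach. The paper simply asserts that the statement is obvious by the Fourier transform; your diagonalization of $\omega$ as a Fourier multiplier in the plane-wave basis on the torus supplies the details it omits. So does your argument that the value set $\{|k|^s : k \in (2\pi/L)\mathbb{Z}^3\}$ is closed and discrete with finite multiplicities.
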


\begin{prop}
For any inverse temperature $\sminvtemperature
> 0$, we have $$\sqfun{\trace_{\fun{\lp^{2}}{I_{L}^{3}}}}{\napiernum^{-\sminvtemperature \omega}}
< \infty$$
\end{prop}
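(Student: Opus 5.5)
Since $\omega$ acts on $\fun{\lp^{2}}{I_{L}^{3}}$ as multiplication by $\omega(k)$ in the Fourier basis $\{e_k\}_{k\in\setlattice_L^3}$, $e_k(x)=L^{-3/2}\napiernum^{\imunit k\cdot x}$, I would work directly in that basis. By Proposition \ref{expedition0011286}, $\napiernum^{-\sminvtemperature\omega}$ is then diagonal with eigenvalues $\napiernum^{-\sminvtemperature\abs{k}^s}$, $k\in\setlattice_L^3$. Since the operator is non-negative, its trace is the sum of its diagonal entries in this basis:
\[
\sqfun{\trace_{\fun{\lp^{2}}{I_{L}^{3}}}}{\napiernum^{-\sminvtemperature \omega}}
=\sum_{n\in\ringratint^3}\napiernum^{-\sminvtemperature (2\pi/L)^s\abs{n}^s}.
\]
So the claim reduces to the convergence of this lattice sum.

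To bound the sum, I would split off the term $n=0$, which contributes exactly $1$. For the remaining terms I would use $\abs{n}\ge\abs{n}_\infty\ge1$ together with $s\ge1$, which gives $\abs{n}^s\ge\abs{n}_\infty$. Setting $c=\sminvtemperature(2\pi/L)^s>0$, each term with $n\neq0$ is then at most $\napiernum^{-c\abs{n}_\infty}$.

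Next I would count lattice points by shells. The number of $n\in\ringratint^3$ with $\abs{n}_\infty=m$ is $(2m+1)^3-(2m-1)^3\le 26m^2$. This gives
\[
\sum_{n\in\ringratint^3}\napiernum^{-c\abs{n}^s}\le 1+26\sum_{m\ge1}m^2\napiernum^{-cm}<\infty,
\]
because the right-hand series converges by the ratio test.

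Alternatively, one can avoid the $s\ge1$ assumption entirely. For any $s>0$ one has the factorized bound $\abs{n}^s\ge 3^{-s/2}\sum_i\abs{n_i}^s$, which reduces the sum to the cube of a one-dimensional sum $\sum_{j\in\ringratint}\napiernum^{-c'\abs{j}^s}$. That one-dimensional sum is finite by comparison with $\int_{\fldreal}\napiernum^{-c'\abs{x}^s}\opdmsr{x}$.

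There is no real obstacle here. The only points needing care are the justification for computing the trace in the Fourier basis, which is legitimate because $\napiernum^{-\sminvtemperature\omega}$ is a non-negative operator, and the elementary lattice-point counting.
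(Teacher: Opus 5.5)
Your proof is correct and takes essentially the paper's route: you reduce the trace to the lattice sum $\sum_{n\in\ringratint^{3}}\napiernum^{-\sminvtemperature(2\pi/L)^s\abs{n}^s}$ and bound it by a shell decomposition with $O(m^2)$ lattice points per shell. The differences are minor: you use cubic shells $\abs{n}_\infty=m$ with the exact count $24m^2+2$ where the paper uses Euclidean shells and a volume comparison, and you track the factor $(2\pi/L)^s$ and the use of $s\geq 1$ more carefully than the paper does.

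Your optional remark has a small slip. The bound $\abs{n}^s\geq 3^{-s/2}\sum_i\abs{n_i}^s$ fails for $0<s<1$ (take $n=(1,1,1)$), which is exactly the range the remark was meant to cover. Use $\abs{n}^s\geq\max_i\abs{n_i}^s\geq\frac{1}{3}\sum_i\abs{n_i}^s$ instead.
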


\begin{proof}
Since the spectrum of the dispersion relation $\omega$ is discrete for the bounded system, we have $$S
=
\sqfun{\trace_{\fun{\lp^{2}}{I_{L}^{3}}}}{\napiernum^{-\sminvtemperature \omega}}
=
\sum_{k \in \setlattice_L^3} \napiernum^{-\sminvtemperature \omega(k)}.$$
It remains to verify that the series on the right-hand side converges.

For brevity, set $a
= \frac{2 \pi}{L}$.
Considering a shell decomposition by radius, we obtain $$S
\leq
1
+\sum_{m=0}^{\infty}
N_m \napiernum^{-\sminvtemperature m},
\quad
N_m
=
\setcardop \set{n \in \ringratint^3}{m \leq \abs{n} \leq m+1}.$$
By volume comparison, the number of lattice points can be estimated by constants $C,C'$ depending only on the dimension: $$N_m
\leq
C \rbk{\rbk{m+1}^3 - m^3}
\leq
C' \rbk{m^2 + 1}.$$
Therefore, using an appropriate positive constant $C''$, we obtain $$S
\leq
1 + C'' \sum_{m=0}^{\infty} (m^2+1) \napiernum^{-\sminvtemperature a m}
<
\infty.$$
\end{proof}

We define the unitary transformation introduced in \cite{AsaoArai26} by \(V_{\kappa,\Lambda}
= \opfockweyl_{\txtfock}(\imunit \mathsf{m}_{\kappa,\Lambda})\). Essentially the same transformation is used in the discussion of the Hubbard--phonon interaction system \cite{YoshitsuguSekine001,YoshitsuguSekine002}. For various computations, we cite from \cite{AsaoArai26,AsaoArai28} basic properties of Weyl operators and results on commutation relations with creation and annihilation operators.

\begin{fact}\label{expedition0010960}
Let $f,g$ be arbitrary elements of the Hilbert space $\sphilb{H}$.
\begin{enumerate}
\item
The Weyl operator leaves the domain of creation and annihilation operators invariant.
That is, $$\napiernum^{\imunit \opfocksegal(f)} \opfockcran(g)
= \opfockcran(g)$$holds.
Furthermore, the operator identities
\begin{align}
\napiernum^{\imunit \opfocksegal(f)}
\opfockan(g)
\napiernum^{-\imunit \opfocksegal(f)}
&=
\opfockan(g)
-\frac{\imunit}{\sqrt{2}}
\bkt{g}{f}, \\ 
\napiernum^{\imunit \opfocksegal(f)}
\opfockcr(g)
\napiernum^{-\imunit \opfocksegal(f)}
&=
\opfockcr(g)
+\frac{\imunit}{\sqrt{2}}
\bkt{f}{g} 
\end{align}
hold.

\item
The Weyl operator leaves the domain of the Segal field operator invariant.
That is, $$\napiernum^{\imunit \opfocksegal(f)} \dom \opfocksegal(g)
= \dom \opfocksegal(g)$$holds.
Furthermore, the operator identity
\begin{align}
\napiernum^{\imunit \opfocksegal(f)}
\opfocksegal(g)
\napiernum^{-\imunit \opfocksegal(f)}
=
\opfocksegal(g)
-\opimag \bkt{f}{g} 
\end{align}
holds.
\end{enumerate}
\end{fact}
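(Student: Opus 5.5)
The plan is to derive item 2 first, directly from the Weyl relations and Stone's theorem, and then obtain item 1 from item 2 by writing \(\opfockcran(g)\) as complex linear combinations of Segal fields. (Item 1 should be read as invariance of \(\dom \opfockcran(g)\) under the Weyl operator; that domain invariance is what we prove.)

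\textbf{Step 1 (item 2).} In the Fock representation the Weyl relations give
\[\opfockweyl_{\txtfock}(f)\,\opfockweyl_{\txtfock}(sg)\,\opfockweyl_{\txtfock}(-f)
= \napiernum^{-\imunit s \opimag \bkt{f}{g}}\,\opfockweyl_{\txtfock}(sg),
\quad s \in \fldreal,\]
by applying the product rule twice. The left-hand side is the one-parameter unitary group \(s \mapsto \napiernum^{\imunit s\, U \opfocksegal_{\txtfock}(g) U^{*}}\) with \(U=\opfockweyl_{\txtfock}(f)\). The right-hand side is the group generated by \(\opfocksegal_{\txtfock}(g)-\opimag\bkt{f}{g}\). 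By uniqueness of the generator in Stone's theorem,
\[U\opfocksegal_{\txtfock}(g)U^{*}=\opfocksegal_{\txtfock}(g)-\opimag\bkt{f}{g}\]
as self-adjoint operators, including equality of domains. Since \(\opfocksegal_{\txtfock}(g)-c\) has the same domain as \(\opfocksegal_{\txtfock}(g)\), this gives \(U\dom\opfocksegal_{\txtfock}(g)=\dom\opfocksegal_{\txtfock}(g)\).

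\textbf{Step 2 (item 1).} On the finite-particle subspace, and more generally on \(\dom N^{1/2}\) with \(N\) the number operator, we have
\[\opfockan_{\txtfock}(g)=\tfrac{1}{\sqrt2}\bigl(\opfocksegal_{\txtfock}(g)+\imunit\opfocksegal_{\txtfock}(\imunit g)\bigr),
\qquad
\opfockcr_{\txtfock}(g)=\tfrac{1}{\sqrt2}\bigl(\opfocksegal_{\txtfock}(g)-\imunit\opfocksegal_{\txtfock}(\imunit g)\bigr).\]
These follow from the antilinearity of \(g\mapsto\opfockan_{\txtfock}(g)\). Conjugating with Step 1 gives the constant shift
\[-\tfrac{1}{\sqrt2}\bigl(\opimag\bkt{f}{g}+\imunit\opreal\bkt{f}{g}\bigr)=-\tfrac{\imunit}{\sqrt2}\bkt{g}{f}\]
for the annihilation operator. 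The analogous computation gives \(+\tfrac{\imunit}{\sqrt2}\bkt{f}{g}\) for the creation operator, matching the stated formulas.

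\textbf{Step 3 (domains; the main obstacle).} The Segal-field combinations are identities only on a core, while \(\opfockcran_{\txtfock}(g)\) are closed operators. The plan has three parts.
\begin{enumerate}
\item Show that \(U\) leaves \(\dom N^{1/2}\) invariant. Use the explicit action of \(\opfockweyl_{\txtfock}(f)\) on coherent vectors together with the bound \(\norm{N^{1/2}U\psi}\le C(\norm{N^{1/2}\psi}+\norm{\psi})\). Alternatively, verify the commutator identity \(U N U^{*}=N-\sqrt2\,\opimag(\ldots)\) on finite-particle vectors and close.
\item On \(\dom N^{1/2}\), which is a core for \(\opfockan_{\txtfock}(g)\) and \(\opfockcr_{\txtfock}(g)\), the identities of Step 2 hold pointwise.
\item Take closures. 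Both sides of each identity are closed operators, namely a closed operator plus a bounded constant, and they agree on a core that \(U\) maps onto itself. Hence they agree as operators, and \(U\dom\opfockcran_{\txtfock}(g)=\dom\opfockcran_{\txtfock}(g)\).
\end{enumerate}

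If the \(N^{1/2}\)-invariance in Step 3 proves awkward, I would instead differentiate \(t\mapsto\napiernum^{\imunit t\opfocksegal(f)}\opfockan(g)\napiernum^{-\imunit t\opfocksegal(f)}\psi\) on finite-particle vectors \(\psi\), which are analytic for \(\opfocksegal(f)\). This uses \(\commutator{\opfocksegal(f)}{\opfockan(g)}=-\tfrac{1}{\sqrt2}\bkt{g}{f}\) and integrates from \(0\) to \(1\). That route still needs the same core argument at the end.
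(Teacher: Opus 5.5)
Your proof is correct. The paper gives no argument for this Fact; it imports it from \cite{AsaoArai26,AsaoArai28}. So you are supplying a self-contained derivation rather than a variant of one in the text. You are also right to read the first display of item~1 as the domain statement $e^{i\phi(f)}\operatorname{dom}a^{\#}(g)=\operatorname{dom}a^{\#}(g)$; as printed it is a typo.

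Your ordering is what makes the argument efficient. The groups $s\mapsto UW(sg)U^{*}$ and $s\mapsto e^{-is\operatorname{Im}\langle f,g\rangle}W(sg)$ coincide, so Stone's uniqueness gives item~2 together with its domain invariance, with no core analysis. All the domain work is then pushed into item~1. The cost is that you take the Fock Weyl relations as input. That is legitimate here, since the paper posits $\pi_{\mathrm F}$ as a representation of the abstract Weyl algebra. It would be circular in a development that derives the Weyl relations from conjugation identities of this kind.

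Step~3(a) is the only part left as a sketch. In your second variant, note that $U$ does not preserve finite-particle vectors, so you would first need $U^{*}\mathcal F_0\subset\operatorname{dom}N$. Note also that $UNU^{*}=N-\phi(ig)$ is not quite right: the correct identity is $UNU^{*}=N-\phi(if)+\tfrac12\|f\|^2$, which carries a constant.

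You can avoid the number operator altogether by identifying $a(g)$ exactly with the Segal-field combination.
\begin{itemize}
\item On the finite-particle space $\mathcal F_0$ one has $\|\phi(g)\psi\|^2+\|\phi(ig)\psi\|^2=2\|a(g)\psi\|^2+\|g\|^2\|\psi\|^2$.
\item Closing from this core gives $\operatorname{dom}a(g)\subset\operatorname{dom}\phi(g)\cap\operatorname{dom}\phi(ig)$, with $a(g)=\tfrac{1}{\sqrt2}(\phi(g)+i\phi(ig))$ there.
\item Conversely, for $\psi$ in the intersection and $\chi\in\mathcal F_0$, one has $\langle a^{*}(g)\chi,\psi\rangle=\langle\chi,\tfrac{1}{\sqrt2}(\phi(g)+i\phi(ig))\psi\rangle$.
\item Since $\mathcal F_0$ is a core for $a^{*}(g)=a(g)^{*}$, this puts $\psi$ in $\operatorname{dom}(a^{*}(g))^{*}=\operatorname{dom}a(g)$.
\end{itemize}
Thus $\operatorname{dom}a(g)=\operatorname{dom}\phi(g)\cap\operatorname{dom}\phi(ig)$, and likewise for $a^{*}(g)$. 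Both the invariance under $U$ and the shifted identities of item~1 then follow immediately from your Step~1 applied to $g$ and to $ig$.
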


\begin{prop}\label{expedition0012088}
The Hamiltonian $\physham_{\txtvanhowe,I_{L}^{3},\kappa,\Lambda}(\smchemicalpotential)$ of the van Hove model is unitarily transformed by the operator $V_{\kappa,\Lambda}
= \opfockweyl_{\txtfock}(\imunit \mathsf{m}_{\kappa,\Lambda})$ as $$V_{\kappa,\Lambda}
\physham_{\txtbsn,\txtfr}(\smchemicalpotential)
\inv{V_{\kappa,\Lambda}}
=
\physham_{\txtvanhowe,I_{L}^{3},\kappa,\Lambda}(\smchemicalpotential)
-\fun{\physgse}{\physham_{\txtvanhowe,I_{L}^{3},\kappa,\Lambda}(\smchemicalpotential)}.$$
Here $\fun{\physgse}{\physham_{\txtvanhowe,I_{L}^{3},\kappa,\Lambda}(\smchemicalpotential)}$ is the ground state energy.
\end{prop}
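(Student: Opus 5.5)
The plan is to compute the conjugation mode by mode, using the discreteness of the spectrum of \(\omega\) on \(\fun{\lp^{2}}{I_{L}^{3}}\) (Proposition \ref{expedition0011286}) together with the cutoffs, and then to extend the resulting operator identity from a core.

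\emph{Algebraic identity.} Write \(\physham_{\txtbsn,\txtfr}(\smchemicalpotential)=\sum_{k\in\setlattice_L^3}(\omega(k)-\smchemicalpotential)\,\opfockcr(e_k)\opfockan(e_k)\), with \(e_k\) the plane-wave basis. Apply Fact \ref{expedition0010960} with \(f=\imunit\mathsf{m}_{\kappa,\Lambda}\). It gives
\[
V_{\kappa,\Lambda}\opfockan(e_k)\inv{V_{\kappa,\Lambda}}=\opfockan(e_k)+c_k,
\quad
c_k=\tfrac{1}{\sqrt2}\bkt{e_k}{\mathsf{m}_{\kappa,\Lambda}},
\]
and the adjoint relation for \(\opfockcr(e_k)\) with \(\cmpconj{c_k}\). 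The signs follow from \(\bkt{g}{\imunit m}=\imunit\bkt{g}{m}\); I will fix them carefully. Expanding \((\opfockcr+\cmpconj{c_k})(\opfockan+c_k)\) and summing over \(k\) gives three pieces:
\begin{itemize}
\item \(\physham_{\txtbsn,\txtfr}(\smchemicalpotential)\) itself;
\item the linear piece \(\opfocksegal_{\txtfock}\bigl((\omega-\smchemicalpotential)\mathsf{m}_{\kappa,\Lambda}\bigr)\), up to the factor and sign conventions of \(\opfocksegal\);
\item the constant \(\onehalf\bkt{\mathsf{m}_{\kappa,\Lambda}}{(\omega-\smchemicalpotential)\mathsf{m}_{\kappa,\Lambda}}\).
\end{itemize}
The cutoffs \(\kappa\le\abs{k}\le\Lambda\) make \(\mathsf{m}_{\kappa,\Lambda}\in\dom\omega\), and only finitely many lattice points lie in the support. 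So every sum here is finite and the expansion is legitimate on the finite-particle vectors.

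\emph{Matching with the Hamiltonian.} I would then compare the linear term with \(\opfocksegal_{\txtfock}(\omega\mathsf{m}_{\kappa,\Lambda})\). At \(\smchemicalpotential=0\) they coincide. The constant is then \(-\physgse\); completing the square shows it equals \(-\onehalf\norm{\omega^{-1/2}\varrho_{\kappa,\Lambda}}^2\) with the correct sign once \(\mathsf{m}\) enters with the sign dictated by the Fact. Since \(V_{\kappa,\Lambda}\physham_{\txtbsn,\txtfr}\inv{V_{\kappa,\Lambda}}\) has the Fock vacuum transformed into a ground state, this constant is exactly the infimum of the spectrum. That identifies it with the ground state energy.

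This matching is the step I expect to be the main obstacle. For \(\smchemicalpotential\neq0\) the naive computation produces \((\omega-\smchemicalpotential)\mathsf{m}_{\kappa,\Lambda}\) rather than \(\omega\mathsf{m}_{\kappa,\Lambda}\). My first attempt would be to read \(\mathsf{m}_{\kappa,\Lambda}\) in \(V_{\kappa,\Lambda}\) as the \(\smchemicalpotential\)-dependent shift \((\omega-\smchemicalpotential)^{-1}\omega^{-1/2}\varrho_{\kappa,\Lambda}\). This reduces to \(\omega^{-3/2}\varrho_{\kappa,\Lambda}\) at \(\smchemicalpotential=0\), and the same computation then gives
\[
\physgse=-\onehalf\bigl\|(\omega-\smchemicalpotential)^{-1/2}\omega^{-1/2}\varrho_{\kappa,\Lambda}\bigr\|^2 .
\]
This is finite because \(\smchemicalpotential\le 0\) and \(\kappa>0\).

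\emph{Domain argument.} Both sides are self-adjoint, by Kato--Rellich for the right-hand side and by unitarity of \(V_{\kappa,\Lambda}\) for the left-hand side. The identity holds on the finite-particle subspace inside \(\dom\physham_{\txtbsn,\txtfr}\), which is a core for both operators. Equality of the closures then gives the operator identity. The invariance of \(\dom\opfocksegal\) under Weyl operators from Fact \ref{expedition0010960}(2) ensures that \(V_{\kappa,\Lambda}\) maps this domain onto itself.
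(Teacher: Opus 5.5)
Your route is the paper's. Its proof is the one-line instruction to conjugate the free Hamiltonian by the Weyl operator using Fact~\ref{expedition0010960}, and your mode-by-mode completion of the square is exactly that computation. Your signs are correct: $V_{\kappa,\Lambda}\,a(g)\,V_{\kappa,\Lambda}^{-1}=a(g)+\tfrac{1}{\sqrt2}\langle g,\mathsf{m}_{\kappa,\Lambda}\rangle$, and hence
\[
V_{\kappa,\Lambda}\,d\Gamma(\omega-\mu)\,V_{\kappa,\Lambda}^{-1}
=d\Gamma(\omega-\mu)+\phi_{\mathrm{F}}\bigl((\omega-\mu)\mathsf{m}_{\kappa,\Lambda}\bigr)
+\tfrac{1}{2}\langle\mathsf{m}_{\kappa,\Lambda},(\omega-\mu)\mathsf{m}_{\kappa,\Lambda}\rangle .
\]
Three points need adjusting.

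\textbf{The case $\mu\neq0$.} The obstacle you anticipate is a defect of the statement, not of your argument. Write $H_{\mathrm{vH}}(\mu)$ for the cutoff Hamiltonian on $I_L^3$. With the $\mu$-independent $\mathsf{m}_{\kappa,\Lambda}=\omega^{-3/2}\varrho_{\kappa,\Lambda}$ of Definition~\ref{expedition0011100}, the display equals $H_{\mathrm{vH}}(\mu)-\mu\,\phi_{\mathrm{F}}(\mathsf{m}_{\kappa,\Lambda})+\mathrm{const}$. Whenever $\mathsf{m}_{\kappa,\Lambda}\neq0$, the operator $\phi_{\mathrm{F}}(\mathsf{m}_{\kappa,\Lambda})$ is not a multiple of the identity. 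So the proposition as written holds only at $\mu=0$, and the paper's one-line proof does not notice this.

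Your shift $(\omega-\mu)^{-1}\omega^{-1/2}\varrho_{\kappa,\Lambda}$ and your formula $E_0(\mu)=-\tfrac{1}{2}\|(\omega-\mu)^{-1/2}\omega^{-1/2}\varrho_{\kappa,\Lambda}\|^2$ are the correct repair. Present them as a corrected proposition, with $V$ depending on $\mu$, rather than as a reading of the given one. The same replacement is then needed in Propositions~\ref{expedition0011296}--\ref{expedition0011578} for $\mu<0$. The $\mu=0$ case used for the infinite system is unaffected.

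\textbf{The value at $\mu=0$.} There is a slip here. The constant is $\tfrac{1}{2}\langle\mathsf{m}_{\kappa,\Lambda},\omega\mathsf{m}_{\kappa,\Lambda}\rangle=\tfrac{1}{2}\|\omega^{-1}\varrho_{\kappa,\Lambda}\|^2$, so $E_0=-\tfrac{1}{2}\|\omega^{-1}\varrho_{\kappa,\Lambda}\|^2$, not $-\tfrac{1}{2}\|\omega^{-1/2}\varrho_{\kappa,\Lambda}\|^2$. The latter is the squared norm of the coupling $\omega\mathsf{m}_{\kappa,\Lambda}$. Your general-$\mu$ formula already reduces to the right value.

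\textbf{The domain argument.} Fact~\ref{expedition0010960}(2) concerns $\operatorname{dom}\phi_{\mathrm{F}}(g)$ and says nothing about $\operatorname{dom}d\Gamma(\omega-\mu)$. What you need is $V_{\kappa,\Lambda}^{-1}D_0\subset\operatorname{dom}d\Gamma(\omega-\mu)$, where $D_0$ is the finite-particle span of occupation-number vectors. This holds because $\mathsf{m}_{\kappa,\Lambda}$ is supported on finitely many lattice modes, so $V_{\kappa,\Lambda}$ is a finite product of one-mode displacement operators. Then $D_0$ is a core for $H_{\mathrm{vH}}(\mu)$ by Kato--Rellich with relative bound $0$. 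The two self-adjoint operators agree on $D_0$, so the left side extends the closure of the right side restricted to $D_0$, which is the right side itself. Since a self-adjoint operator has no proper self-adjoint extension, they coincide.
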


\begin{proof}
It suffices to use Fact \ref{expedition0010960} together with the result on the unitary transformation of the free Hamiltonian by the exponential of the Segal field operator.
The computations in \cite{YoshitsuguSekine001,YoshitsuguSekine002} may also serve as a reference.
\end{proof}

By the following proposition, the van Hove Hamiltonian with chemical potential is trace class in the bounded system.

\begin{prop}\label{expedition0011287}
The heat operator $\napiernum^{-\sminvtemperature \physham_{\txtvanhowe,I_{L}^{3},\kappa,\Lambda}(\smchemicalpotential)}$ of the van Hove Hamiltonian with chemical potential is trace class, and in particular satisfies $$\sqfun{\trace}{\napiernum^{-\sminvtemperature \physham_{\txtvanhowe,I_{L}^{3},\kappa,\Lambda}(\smchemicalpotential)}}
=
\frac{1}
{\napiernum^{\sminvtemperature
\fun{\physgse}
{\physham_{\txtvanhowe,I_{L}^{3},\kappa,\Lambda}(\smchemicalpotential)}}}
\sqfun{\trace}
{\napiernum^{-\sminvtemperature \physham_{\txtbsn,\txtfr}(\smchemicalpotential)}}.$$
\end{prop}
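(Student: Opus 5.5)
We prove Proposition \ref{expedition0011287} by combining the unitary equivalence of Proposition \ref{expedition0012088} with the trace-class property of the free heat operator. Write $H = \physham_{\txtvanhowe,I_{L}^{3},\kappa,\Lambda}(\smchemicalpotential)$, $H_0 = \physham_{\txtbsn,\txtfr}(\smchemicalpotential)$, $E = \fun{\physgse}{H}$ and $V = V_{\kappa,\Lambda}$. Proposition \ref{expedition0012088} gives $H = V H_0 \inv{V} + E$ as an identity of self-adjoint operators. By the functional calculus, unitary conjugation commutes with bounded Borel functions, so
\[
\napiernum^{-\sminvtemperature H}
=
\napiernum^{-\sminvtemperature E}\,
V \napiernum^{-\sminvtemperature H_0} \inv{V}.
\]
Since the trace class is a two-sided ideal and the trace is unitarily invariant, it suffices to show that $\napiernum^{-\sminvtemperature H_0}$ is trace class. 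The displayed formula then follows from $\trace(V A \inv{V}) = \trace(A)$.

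For the free part, we use the standard Fock space identity for second quantization. If $T = \omega - \smchemicalpotential$ is a strictly positive self-adjoint operator on $\fun{\lp^{2}}{I_{L}^{3}}$ with $\trace \napiernum^{-\sminvtemperature T} < \infty$, then
\[
\trace\, \napiernum^{-\sminvtemperature \fun{\opfocksndqntdiff_{\txtbsn}}{T}}
=
\prod_{k \in \setlattice_L^3} \bigl(1 - \napiernum^{-\sminvtemperature (\omega(k)-\smchemicalpotential)}\bigr)^{-1}.
\]
To see this, diagonalize $\omega$ in the plane-wave basis using Proposition \ref{expedition0011286}. The occupation-number basis then diagonalizes $\fun{\opfocksndqntdiff_{\txtbsn}}{T}$, so the trace becomes a sum over finitely supported multi-indices $(n_k)$ of $\prod_k \napiernum^{-\sminvtemperature n_k (\omega(k)-\smchemicalpotential)}$. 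By Tonelli this sum equals the product of geometric series. The product converges precisely when $\sum_k \napiernum^{-\sminvtemperature(\omega(k)-\smchemicalpotential)} < \infty$ and every factor is finite.

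Finiteness of the sum follows from the previous proposition, since $\sum_{k} \napiernum^{-\sminvtemperature \omega(k)} < \infty$ and multiplying by $\napiernum^{\sminvtemperature \smchemicalpotential}$ does not affect convergence. Each factor requires $\omega(k) - \smchemicalpotential > 0$ for all $k$. This holds when $\smchemicalpotential < 0$, since $\omega(0) = 0$ is an eigenvalue.

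\textbf{Main obstacle.} The sign of the chemical potential is the delicate point. The text of this section introduces $\smchemicalpotential > 0$, which conflicts with the earlier convention $\smchemicalpotential \leq 0$. For $\smchemicalpotential \geq 0$ the zero mode makes $H_0$ unbounded below or non-trace-class. So I would state the proof for $\smchemicalpotential < 0$, reading the sign convention as $H_0 = \fun{\opfocksndqntdiff_{\txtbsn}}{\omega + \abs{\smchemicalpotential}}$.

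A secondary point is Proposition \ref{expedition0012088} itself. Its proof should give $E$ as an explicit constant, of the form $-\tfrac12\norm{\omega^{1/2} \mathsf{m}_{\kappa,\Lambda}}^2$ up to the chemical potential correction. It should also verify that $E$ is indeed the infimum of the spectrum, which holds because the Fock vacuum is the ground state of $H_0$. Everything after that is domain-free, because only bounded functions of the operators enter.
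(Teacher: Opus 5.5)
Your proof is correct and is essentially the paper's argument. The unitary equivalence of Proposition~\ref{expedition0012088} reduces the claim to the free partition function $\prod_{k \in \setlattice_L^3}\bigl(1 - \napiernum^{-\sminvtemperature(\omega(k)-\smchemicalpotential)}\bigr)^{-1}$, which the paper quotes as well known and you derive from the occupation-number basis. Your sign caveat is justified and applies equally to the paper's proof: for $\smchemicalpotential = 0$ the $k=0$ factor diverges, and for $\smchemicalpotential > 0$ the free Hamiltonian is unbounded below, so one needs $\smchemicalpotential < 0$. Your argument is also unaffected by the fact that for $\smchemicalpotential \neq 0$ the correct Weyl shift involves $(\omega - \smchemicalpotential)^{-1}\omega\,\mathsf{m}_{\kappa,\Lambda}$ rather than $\mathsf{m}_{\kappa,\Lambda}$, because it only uses that some unitary $V$ gives $H = V H_0 \inv{V} + E$.
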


\begin{proof}
As is well known, the partition function of the free Bose gas is $$\sqfun{\trace}
{\napiernum^{-\sminvtemperature \physham_{\txtbsn,\txtfr}(\smchemicalpotential)}}
=
\frac{1}
{\prod_{k \in \setlattice_L^3}
\rbk{1 - \napiernum^{-\sminvtemperature(\omega(k) - \smchemicalpotential)}}}
<
\infty.$$
Since the spectrum is preserved under unitary transformation,
the heat operator of $$\physham_{\txtvanhowe,I_{L}^{3},\kappa,\Lambda}(\smchemicalpotential)
-\fun{\physgse}
{\physham_{\txtvanhowe,I_{L}^{3},\kappa,\Lambda}
(\smchemicalpotential)}$$is also trace class, and we obtain $$\sqfun{\trace}
{\napiernum^{-\sminvtemperature
\physham_{\txtvanhowe,I_{L}^{3},\kappa,\Lambda}
(\smchemicalpotential)}}
=
\napiernum^{-\sminvtemperature
\fun{\physgse}
{\physham_{\txtvanhowe,I_{L}^{3},\kappa,\Lambda}
(\smchemicalpotential)}}
\sqfun{\trace}
{\napiernum^{-\sminvtemperature
\physham_{\txtbsn,\txtfr}(\smchemicalpotential)}}.$$
\end{proof}

Using the above computational results, we define the grand partition function and the grand canonical state. Specifically, \begin{equation}
\begin{aligned}
\smgrandpartitionfunc_{\txtbsn,\txtfr,I_{L}^{3},\sminvtemperature,\smchemicalpotential}
&=
\sqfun{\trace}{\napiernum^{-\sminvtemperature \physham_{\txtbsn,\txtfr}(\smchemicalpotential)}}
=
\frac{1}{\prod_{k \in I_{L}^{3}} \rbk{1 - \napiernum^{-\sminvtemperature(\omega(k) - \smchemicalpotential)}}}, \\
\smgrandpartitionfunc_{\txtvanhowe,I_{L}^{3},\kappa,\Lambda,\sminvtemperature,\smchemicalpotential}
&=
\frac{1}
{\napiernum^{\sminvtemperature
\fun{\physgse}
{\physham_{\txtvanhowe,I_{L}^{3},\kappa,\Lambda}(\smchemicalpotential)}}}
\smgrandpartitionfunc_{\txtbsn,\txtfr,I_{L}^{3},\sminvtemperature,\smchemicalpotential}.
\end{aligned}
\end{equation}

The latter \(\smgrandpartitionfunc
_{\txtvanhowe,I_{L}^{3},\kappa,\Lambda,\sminvtemperature,\smchemicalpotential}\) is called the grand partition function for the van Hove Hamiltonian. Furthermore, as a state on the algebra of all bounded operators \(\opspbddlin{\fun{\spfock_{\txtbsn}}
{\fun{\lp^{2}}{I_{L}^{3}}}}\), \[\oastate[\psi_{\txtgrandcanonical,\txtvanhowe,I_{L}^{3},\kappa,\Lambda,\sminvtemperature,\smchemicalpotential}](A)
=
\frac{1}{\smgrandpartitionfunc_{\txtbsn,\txtvanhowe,I_{L}^{3},\kappa,\Lambda,\sminvtemperature,\smchemicalpotential}}
\sqfun{\trace}{\napiernum^{-\sminvtemperature \physham_{\txtvanhowe,I_{L}^{3},\kappa,\Lambda}(\smchemicalpotential)} A}\]is defined. This is called the grand canonical state for the van Hove Hamiltonian.

As in the discussion of the free Bose gas in \cite{AsaoArai28}, the domain of the grand canonical state is extended to the unbounded Fock canonical commutation relation algebra containing creation and annihilation operators. We proceed under this assumption in the following.

\begin{prop}\label{expedition0011296}
The one-point functions for the creation operator, annihilation operator, and Segal field operator are given by
\begin{equation}
\begin{aligned}
\oastate[\psi_{\txtgrandcanonical,\txtvanhowe,I_{L}^{3},\kappa,\Lambda,\sminvtemperature,\smchemicalpotential}](\opfockcr(f))
&=
-\frac{1}{\sqrt{2}}
\bkt{\mathsf{m}_{\kappa,\Lambda}}{f}, \\
\oastate[\psi_{\txtgrandcanonical,\txtvanhowe,I_{L}^{3},\kappa,\Lambda,\sminvtemperature,\smchemicalpotential}](\opfockan(f))
&=
-\frac{1}{\sqrt{2}}
\cmpconj{\bkt{\mathsf{m}_{\kappa,\Lambda}}{f}}, \\
\oastate[\psi_{\txtgrandcanonical,\txtvanhowe,I_{L}^{3},\kappa,\Lambda,\sminvtemperature,\smchemicalpotential}](\opfocksegal_{\txtfock}(f))
&=
-\opreal
\bkt{\mathsf{m}_{\kappa,\Lambda}}{f}.
\end{aligned}
\end{equation}

\end{prop}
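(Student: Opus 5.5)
The plan is to move the problem to the free grand canonical state with the unitary $V_{\kappa,\Lambda}=\opfockweyl_{\txtfock}(\imunit \mathsf{m}_{\kappa,\Lambda})$. By Proposition \ref{expedition0012088}, the Hamiltonian satisfies
$$\physham_{\txtvanhowe,I_{L}^{3},\kappa,\Lambda}(\smchemicalpotential)=V_{\kappa,\Lambda}\physham_{\txtbsn,\txtfr}(\smchemicalpotential)\inv{V_{\kappa,\Lambda}}+\fun{\physgse}{\physham_{\txtvanhowe,I_{L}^{3},\kappa,\Lambda}(\smchemicalpotential)}.$$
Hence the heat operator is $\napiernum^{-\sminvtemperature \physgse}V_{\kappa,\Lambda}\napiernum^{-\sminvtemperature\physham_{\txtbsn,\txtfr}(\smchemicalpotential)}\inv{V_{\kappa,\Lambda}}$. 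The scalar prefactor cancels against the grand partition function, by Proposition \ref{expedition0011287}. Cyclicity of the trace then gives
$$\oastate[\psi_{\txtgrandcanonical,\txtvanhowe,I_{L}^{3},\kappa,\Lambda,\sminvtemperature,\smchemicalpotential}](A)=\oastate[\psi_{\txtgrandcanonical,\txtbsn,\txtfr,I_{L}^{3},\sminvtemperature,\smchemicalpotential}]\bigl(\inv{V_{\kappa,\Lambda}}AV_{\kappa,\Lambda}\bigr).$$
For unbounded $A$ such as $\opfockan(f)$, this identity is justified within the extension to the unbounded CCR algebra that the paper adopts, following \cite{AsaoArai28}. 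I would check it first on bounded operators and then pass through the invariance of domains in Fact \ref{expedition0010960}.

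Next I compute the conjugated operators with Fact \ref{expedition0010960}. Since $\inv{V_{\kappa,\Lambda}}=\napiernum^{\imunit\opfocksegal(-\imunit\mathsf{m}_{\kappa,\Lambda})}$, I take $f\mapsto-\imunit\mathsf{m}_{\kappa,\Lambda}$ in that Fact. This gives
$$\inv{V_{\kappa,\Lambda}}\opfockan(g)V_{\kappa,\Lambda}=\opfockan(g)-\tfrac{\imunit}{\sqrt2}\bkt{g}{-\imunit\mathsf{m}_{\kappa,\Lambda}}.$$
The constant equals $-\tfrac{1}{\sqrt2}\bkt{g}{\mathsf{m}_{\kappa,\Lambda}}=-\tfrac{1}{\sqrt2}\cmpconj{\bkt{\mathsf{m}_{\kappa,\Lambda}}{g}}$, using the paper's convention that the inner product is antilinear in the first slot. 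The analogous computation for the creation operator yields the constant $-\tfrac{1}{\sqrt2}\bkt{\mathsf{m}_{\kappa,\Lambda}}{g}$.

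The free grand canonical state is gauge invariant, being a function of the number-preserving operator $\physham_{\txtbsn,\txtfr}(\smchemicalpotential)$. Its one-point functions therefore vanish: conjugating by $\napiernum^{\imunit\theta N}$ shows $\oastate(\opfockan(g))=\napiernum^{\imunit\theta}\oastate(\opfockan(g))$ for all $\theta$. Only the constants survive, which gives the first two formulas.

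The Segal field formula follows by linearity, $\opfocksegal=\tfrac{1}{\sqrt2}(\opfockcr+\opfockan)$. The result is $-\tfrac12\bigl(\bkt{\mathsf{m}_{\kappa,\Lambda}}{f}+\cmpconj{\bkt{\mathsf{m}_{\kappa,\Lambda}}{f}}\bigr)=-\opreal\bkt{\mathsf{m}_{\kappa,\Lambda}}{f}$. As a cross-check, Fact \ref{expedition0010960}(2) gives directly $-\opimag\bkt{-\imunit\mathsf{m}_{\kappa,\Lambda}}{f}=-\opreal\bkt{\mathsf{m}_{\kappa,\Lambda}}{f}$.

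The main obstacle is keeping the signs and conjugations consistent. This concerns which side of $V_{\kappa,\Lambda}$ appears after cycling the trace, and the linearity convention of the inner product. I would settle both by verifying the consistency of the two computations of the Segal field. The rigorous handling of unbounded observables in the trace is delegated to the stated extension assumption.
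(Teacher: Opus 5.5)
Your proposal follows the paper's proof step for step. You rewrite the heat operator via Proposition \ref{expedition0012088}, cancel $\napiernum^{-\sminvtemperature \physgse}$ against the partition function, cycle the trace onto $\inv{V_{\kappa,\Lambda}} A V_{\kappa,\Lambda}$ in the free state, read off the constant shifts from Fact \ref{expedition0010960} with $f=-\imunit \mathsf{m}_{\kappa,\Lambda}$, let the free one-point functions vanish, and get the Segal field by linearity. Your signs and conjugations are correct, and your gauge-invariance argument and cross-check via Fact \ref{expedition0010960}(2) improve on the paper's bare assertions.

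There is, however, one step that fails, and it is inherited from the paper. The identity you import from Proposition \ref{expedition0012088} holds only at $\smchemicalpotential=0$. Running your own computation with Fact \ref{expedition0010960} on $\physham_{\txtbsn,\txtfr}(\smchemicalpotential)=\opfocksndqntdiff_{\txtbsn}(\omega-\smchemicalpotential)$ gives
\[
V_{\kappa,\Lambda}\physham_{\txtbsn,\txtfr}(\smchemicalpotential)\inv{V_{\kappa,\Lambda}}
=
\physham_{\txtbsn,\txtfr}(\smchemicalpotential)
+\opfocksegal_{\txtfock}\bigl((\omega-\smchemicalpotential)\mathsf{m}_{\kappa,\Lambda}\bigr)
+\tfrac{1}{2}\bkt{\mathsf{m}_{\kappa,\Lambda}}{(\omega-\smchemicalpotential)\mathsf{m}_{\kappa,\Lambda}}.
\]
This differs from $\physham_{\txtvanhowe,I_{L}^{3},\kappa,\Lambda}(\smchemicalpotential)$ by the non-scalar operator $-\smchemicalpotential\,\opfocksegal_{\txtfock}(\mathsf{m}_{\kappa,\Lambda})$ plus a constant. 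The correct dressing is $\opfockweyl_{\txtfock}(\imunit u_{\smchemicalpotential})$ with $u_{\smchemicalpotential}=(\omega-\smchemicalpotential)^{-1}\omega\,\mathsf{m}_{\kappa,\Lambda}$. With it your argument goes through verbatim, but it yields the three formulas with $\mathsf{m}_{\kappa,\Lambda}$ replaced by $u_{\smchemicalpotential}$. As a single-mode check, completing the square in $(\omega-\smchemicalpotential)a^{\ast}a+\frac{\omega m}{\sqrt{2}}(a+a^{\ast})$ gives $\langle a\rangle=-\frac{1}{\sqrt{2}}\frac{\omega}{\omega-\smchemicalpotential}m$.

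This is not a vacuous case. With periodic boundary conditions the mode $k=0$ has $\omega(0)=0$, so the bounded-system grand canonical state exists only for $\smchemicalpotential<0$, which is exactly where the stated formulas are off. Since $\omega\geq\kappa^{s}$ on the support of $\faftr{\varrho_{\kappa,\Lambda}}$, one has $u_{\smchemicalpotential}\to\mathsf{m}_{\kappa,\Lambda}$ as $\smchemicalpotential\to 0$. The later $\smchemicalpotential=0$ infinite-volume formulas are therefore presumably unaffected, but the bounded-system statement, and both your proof and the paper's, need this correction.
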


\begin{proof}
We argue in a somewhat simplified and formal manner.
For brevity, let $\physgse$ denote the ground state energy of the van Hove Hamiltonian,
and set $V_{\kappa,\Lambda}
= \opfockweyl_{\txtfock}(\imunit \mathsf{m}_{\kappa,\Lambda})$.

(Creation operator): Using Proposition \ref{expedition0012088} for the Hamiltonian
and Fact \ref{expedition0010960} for the creation and annihilation operators, we obtain
\begin{equation}
\begin{aligned}
&\sqfun{\trace}
{\napiernum^{-\sminvtemperature \physham_{\txtvanhowe,I_{L}^{3},\kappa,\Lambda}(\smchemicalpotential)}
\cdot
\opfockcr(f)} \\
&=
\sqfun{\trace}
{V_{\kappa,\Lambda}
\inv{V_{\kappa,\Lambda}}
\napiernum^{-\sminvtemperature \physham_{\txtvanhowe,I_{L}^{3},\kappa,\Lambda}(\smchemicalpotential)}
V_{\kappa,\Lambda}
\inv{V_{\kappa,\Lambda}}
\opfockcr(f)} \\
&=
\napiernum^{-\sminvtemperature \physgse}
\sqfun{\trace}
{\napiernum^{-\sminvtemperature \physham_{\txtbsn,\txtfr,I_{L}^{3}}(\smchemicalpotential)}
\inv{V_{\kappa,\Lambda}}
\opfockcr(f)
V_{\kappa,\Lambda}} \\
&=
\napiernum^{-\sminvtemperature \physgse}
\sqfun{\trace}
{\napiernum^{-\sminvtemperature \physham_{\txtbsn,\txtfr,I_{L}^{3}}(\smchemicalpotential)}
\rbk{\opfockcr(f) + \frac{\imunit}{\sqrt{2}} \bkt{-\imunit \mathsf{m}_{\kappa,\Lambda}}{f}}} \\
&=
-\frac{1}{\sqrt{2}}
\napiernum^{-\sminvtemperature \physgse}
\bkt{\mathsf{m}_{\kappa,\Lambda}}{f}
\sqfun{\trace}
{\napiernum^{-\sminvtemperature \physham_{\txtbsn,\txtfr,I_{L}^{3}}(\smchemicalpotential)}} \\
&=
-\frac{1}{\sqrt{2}}
\bkt{\mathsf{m}_{\kappa,\Lambda}}{f}
\smgrandpartitionfunc_{\txtbsn,\txtfr,I_{L}^{3},\sminvtemperature,\smchemicalpotential}.
\end{aligned}
\end{equation}
The desired result then follows from the evaluation of the grand partition function.

(Annihilation operator): The only difference is in the application of Fact \ref{expedition0010960} to the commutation relations;
specifically,
\begin{equation}
\begin{aligned}
&\sqfun{\trace}
{\napiernum^{-\sminvtemperature \physham_{\txtvanhowe,I_{L}^{3},\kappa,\Lambda}(\smchemicalpotential)}
\cdot
\opfockan(f)} \\
&=
\napiernum^{-\sminvtemperature \physgse}
\sqfun{\trace}
{\napiernum^{-\sminvtemperature \physham_{\txtbsn,\txtfr,I_{L}^{3}}(\smchemicalpotential)}
\rbk{\opfockan(f) - \frac{\imunit}{\sqrt{2}} \bkt{f}{-\imunit \mathsf{m}_{\kappa,\Lambda}}}} \\
&=
-\frac{1}{\sqrt{2}}
\bkt{f}{\mathsf{m}_{\kappa,\Lambda}}
\napiernum^{-\sminvtemperature \physgse}
\sqfun{\trace}
{\napiernum^{-\sminvtemperature \physham_{\txtbsn,\txtfr,I_{L}^{3}}(\smchemicalpotential)}}.
\end{aligned}
\end{equation}
The desired result again follows from the evaluation of the grand partition function.

(Segal field operator): It suffices to take $\frac{1}{\sqrt{2}}$ times the sum over the creation and annihilation operators.
\end{proof}

We next compute the two-point functions for the grand canonical state.

\begin{prop}\label{expedition0011299}
The two-point functions for the creation operator, annihilation operator, and Segal field operator are given as follows.
Using fraction notation for clarity by exploiting commutativity,
\begin{equation}
\begin{aligned}
&\oastate[\psi_{\txtgrandcanonical,\txtvanhowe,I_{L}^{3},\kappa,\Lambda,\sminvtemperature,\smchemicalpotential}](\opfockcr(f) \opfockan(g)) \\
&=
\bkt{g}
{\frac{\napiernum^{-\sminvtemperature (\omega - \smchemicalpotential)}}
{1 - \napiernum^{-\sminvtemperature (\omega - \smchemicalpotential)}} f}
+\onehalf
\bkt{\mathsf{m}_{\kappa,\Lambda}}{f}
\bkt{g}{\mathsf{m}_{\kappa,\Lambda}}, \\
&=
\bkt{g}
{\frac{1}
{\napiernum^{\sminvtemperature (\omega - \smchemicalpotential)} - 1}
f}
+\onehalf
\bkt{\mathsf{m}_{\kappa,\Lambda}}{f}
\bkt{g}{\mathsf{m}_{\kappa,\Lambda}}, \\
&\oastate[\psi_{\txtgrandcanonical,\txtvanhowe,I_{L}^{3},\kappa,\Lambda,\sminvtemperature,\smchemicalpotential}](\opfocksegal_{\txtfock}(f) \opfocksegal_{\txtfock}(g)) \\
&=
\onehalf
\rbk{\opreal \opform{q}_{\txtnonzero,\smchemicalpotential}(f,g)
+\imunit \opimag \bkt{f}{g}}
+\opreal \bkt{\mathsf{m}_{\kappa,\Lambda}}{f} \cdot \opreal \bkt{\mathsf{m}_{\kappa,\Lambda}}{g}.
\end{aligned}
\end{equation}

\end{prop}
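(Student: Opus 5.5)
The plan is to follow the proof of Proposition \ref{expedition0011296}: conjugate by $V_{\kappa,\Lambda}=\opfockweyl_{\txtfock}(\imunit \mathsf{m}_{\kappa,\Lambda})$ and reduce everything to the free Bose gas. We write $\physgse$ for the ground state energy and insert $V_{\kappa,\Lambda}\inv{V_{\kappa,\Lambda}}$ around the heat operator. Proposition \ref{expedition0012088} and cyclicity of the trace then give
$$\sqfun{\trace}{\napiernum^{-\sminvtemperature \physham_{\txtvanhowe,I_{L}^{3},\kappa,\Lambda}(\smchemicalpotential)}\opfockcr(f)\opfockan(g)}
=\napiernum^{-\sminvtemperature \physgse}\sqfun{\trace}{\napiernum^{-\sminvtemperature \physham_{\txtbsn,\txtfr}(\smchemicalpotential)}\inv{V_{\kappa,\Lambda}}\opfockcr(f)V_{\kappa,\Lambda}\,\inv{V_{\kappa,\Lambda}}\opfockan(g)V_{\kappa,\Lambda}}.$$

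\textbf{Step 1: shift the operators.} By Fact \ref{expedition0010960}, applied with $\inv{V_{\kappa,\Lambda}}=\opfockweyl_{\txtfock}(-\imunit\mathsf{m}_{\kappa,\Lambda})$ exactly as in Proposition \ref{expedition0011296}, we get
$$\inv{V_{\kappa,\Lambda}}\opfockcr(f)V_{\kappa,\Lambda}=\opfockcr(f)-\tfrac{1}{\sqrt2}\bkt{\mathsf{m}_{\kappa,\Lambda}}{f},\qquad
\inv{V_{\kappa,\Lambda}}\opfockan(g)V_{\kappa,\Lambda}=\opfockan(g)-\tfrac{1}{\sqrt2}\bkt{g}{\mathsf{m}_{\kappa,\Lambda}}.$$

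\textbf{Step 2: expand and evaluate each term.} Multiplying out gives four terms.
\begin{enumerate}
\item The two cross terms are linear in $\opfockcr$ or $\opfockan$. The free grand canonical state is gauge invariant, so their one-point functions vanish; this is the same fact used in Proposition \ref{expedition0011296}.
\item The constant term gives $\onehalf\bkt{\mathsf{m}_{\kappa,\Lambda}}{f}\bkt{g}{\mathsf{m}_{\kappa,\Lambda}}$.
\item The quadratic term is the free Bose gas two-point function $\bkt{g}{(\napiernum^{\sminvtemperature(\omega-\smchemicalpotential)}-1)^{-1}f}$.
\end{enumerate}
The free two-point function I would take from \cite{AsaoArai28}. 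Alternatively, it can be checked directly on the lattice $\setlattice_L^3$: diagonalize $\omega$ by Proposition \ref{expedition0011286}, then compute the single-mode value $\langle a_k^\ast a_k\rangle=(\napiernum^{\sminvtemperature(\omega(k)-\smchemicalpotential)}-1)^{-1}$ by geometric series. The identity $\napiernum^{-x}/(1-\napiernum^{-x})=1/(\napiernum^{x}-1)$ gives the two displayed forms. Dividing by $\smgrandpartitionfunc_{\txtvanhowe,I_{L}^{3},\kappa,\Lambda,\sminvtemperature,\smchemicalpotential}$ via Proposition \ref{expedition0011287} removes $\napiernum^{-\sminvtemperature\physgse}$ and the free partition function.

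\textbf{Step 3: the Segal field.} Write $\opfocksegal_{\txtfock}(f)\opfocksegal_{\txtfock}(g)=\onehalf(\opfockcr(f)+\opfockan(f))(\opfockcr(g)+\opfockan(g))$. Conjugation shifts each Segal field by the constant $-\opreal\bkt{\mathsf{m}_{\kappa,\Lambda}}{\cdot}$, by Fact \ref{expedition0010960}(2) or equivalently Step 1. The cross terms vanish as before. The constant product gives $\opreal\bkt{\mathsf{m}_{\kappa,\Lambda}}{f}\cdot\opreal\bkt{\mathsf{m}_{\kappa,\Lambda}}{g}$.

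The remaining free part is $\onehalf\bigl(\langle \opfockcr(f)\opfockan(g)\rangle+\langle\opfockan(f)\opfockcr(g)\rangle\bigr)$, since gauge invariance kills $\langle \opfockcr\opfockcr\rangle$ and $\langle\opfockan\opfockan\rangle$. Using the CCR $\opfockan(f)\opfockcr(g)=\opfockcr(g)\opfockan(f)+\bkt{f}{g}$, this equals
$$\onehalf\bigl(\bkt{g}{\smlocaldensityoperator_{\sminvtemperature,\smchemicalpotential} f}+\bkt{f}{\smlocaldensityoperator_{\sminvtemperature,\smchemicalpotential} g}+\bkt{f}{g}\bigr).$$
Since $\smlocaldensityoperator_{\sminvtemperature,\smchemicalpotential}$ is self-adjoint, the first two terms sum to $2\opreal\bkt{f}{\smlocaldensityoperator_{\sminvtemperature,\smchemicalpotential}g}$. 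Splitting $\bkt{f}{g}=\opreal\bkt{f}{g}+\imunit\opimag\bkt{f}{g}$ and using $K_{\sminvtemperature,\smchemicalpotential}=2\smlocaldensityoperator_{\sminvtemperature,\smchemicalpotential}+1$ turns this into $\onehalf(\opreal\opform{q}_{\txtnonzero,\smchemicalpotential}(f,g)+\imunit\opimag\bkt{f}{g})$, with $\opform{q}_{\txtnonzero,\smchemicalpotential}(f,g)=\bkt{f}{K_{\sminvtemperature,\smchemicalpotential}g}$.

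\textbf{Main obstacle.} The delicate point is justifying the trace manipulations for the unbounded operators $\opfockcr(f)\opfockan(g)$. I will handle it as the paper does: work on the extended unbounded CCR domain from \cite{AsaoArai28}, or approximate $f,g$ by finitely many lattice modes, where the heat operator makes every term trace class. Beyond that, the only care needed is bookkeeping of the conjugate-linearity conventions in the sign and conjugation of the shifts.
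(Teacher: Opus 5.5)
Your proposal is correct and follows essentially the same route as the paper: conjugate by $V_{\kappa,\Lambda}$ using Proposition~\ref{expedition0012088} and Fact~\ref{expedition0010960}, discard the cross terms because the free one-point functions vanish, and read off the constant term plus the free two-point function. Your explicit CCR derivation of the free Segal two-point function via $K_{\beta,\mu}=2\rho_{\beta,\mu}+1$ fills in a step the paper only cites. One caveat comes from the paper rather than from you: the $\mu$-independent $V_{\kappa,\Lambda}$ of Proposition~\ref{expedition0012088} completes the square exactly only at $\mu=0$ (for $\mu\neq 0$ the displacement should be $\omega(\omega-\mu)^{-1}\mathsf{m}_{\kappa,\Lambda}$), which is harmless because the paper subsequently works at $\mu=0$.
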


\begin{proof}
(Preparation): We proceed in the same manner as for the one-point functions.
Here again we argue somewhat formally.
For brevity, let $\physgse$ denote the ground state energy of the van Hove Hamiltonian,
and define the unitary operator $V_{\kappa,\Lambda}
= \opfockweyl_{\txtfock}(\imunit \mathsf{m}_{\kappa,\Lambda})$.

($\opfockcr(f) \opfockan(g)$): Let the operator to be evaluated be $$W
=
\napiernum^{-\sminvtemperature \physham_{\txtvanhowe,I_{L}^{3},\kappa,\Lambda}(\smchemicalpotential)}
\cdot
\opfockcr(f)
\cdot
\opfockan(g).$$
Using Proposition \ref{expedition0012088} for the Hamiltonian
and Fact \ref{expedition0010960} for the creation and annihilation operators, we obtain
\begin{equation}
\begin{aligned}
&\inv{V_{\kappa,\Lambda}}
W
V_{\kappa,\Lambda}
=
\inv{V_{\kappa,\Lambda}}
\napiernum^{-\sminvtemperature \physham_{\txtvanhowe,I_{L}^{3},\kappa,\Lambda}(\smchemicalpotential)}
V_{\kappa,\Lambda}
\cdot
\inv{V_{\kappa,\Lambda}}
\opfockcr(f)
V_{\kappa,\Lambda}
\cdot
\inv{V_{\kappa,\Lambda}}
\opfockan(g)
V_{\kappa,\Lambda}
\\ 
&=
\napiernum^{-\sminvtemperature \physgse}
\napiernum^{-\sminvtemperature \physham_{\txtbsn,\txtfr,I_{L}^{3}}(\smchemicalpotential)}
\rbk{\opfockcr(f) + \frac{\imunit}{\sqrt{2}} \bkt{-\imunit \mathsf{m}_{\kappa,\Lambda}}{f}}
\rbk{\opfockan(g) - \frac{\imunit}{\sqrt{2}} \bkt{g}{-\imunit \mathsf{m}_{\kappa,\Lambda}}}
\\ 
&=
\napiernum^{-\sminvtemperature \physgse}
\napiernum^{-\sminvtemperature \physham_{\txtbsn,\txtfr,I_{L}^{3}}(\smchemicalpotential)}
\rbkleft{\opfockcr(f) \opfockan(g)
-\frac{\imunit}{\sqrt{2}} \opfockcr(f) \bkt{g}{-\imunit \mathsf{m}_{\kappa,\Lambda}}}
\\ 
&\qquad
+\rbkright{\frac{\imunit}{\sqrt{2}} \bkt{-\imunit \mathsf{m}_{\kappa,\Lambda}}{f} \opfockan(g)
+\onehalf \bkt{-\imunit \mathsf{m}_{\kappa,\Lambda}}{f} \bkt{g}{-\imunit \mathsf{m}_{\kappa,\Lambda}}}.
\end{aligned}
\end{equation}
By the cyclicity of the trace, $\sqfun{\trace}{W}
= \sqfun{\trace}{\inv{V_{\kappa,\Lambda}}
W
V_{\kappa,\Lambda}}$ holds.
By the above computation, the evaluation for the ideal Bose gas can ultimately be used.
Using Proposition \ref{expedition0011296} for the one-point functions and the ideal Bose gas,
the one-point functions for the creation and annihilation operators vanish, and we obtain
\begin{equation}
\begin{aligned}
&\oastate[\psi_{\txtgrandcanonical,\txtvanhowe,I_{L}^{3},\kappa,\Lambda,\sminvtemperature,\smchemicalpotential}](\opfockcr(f) \opfockan(g))
=
\napiernum^{\sminvtemperature \physgse}
\frac{1}{\smgrandpartitionfunc_{\txtbsn,\txtfr,I_{L}^{3},\sminvtemperature,\smchemicalpotential}}
\sqfun{\trace}{W} \\
&=
\bkt{g}
{\frac{\napiernum^{-\sminvtemperature (\omega - \smchemicalpotential)}}
{1 - \napiernum^{-\sminvtemperature (\omega - \smchemicalpotential)}}
f}
+\onehalf
\bkt{\mathsf{m}_{\kappa,\Lambda}}{f}
\bkt{g}{\mathsf{m}_{\kappa,\Lambda}}.
\end{aligned}
\end{equation}

($\opfocksegal_{\txtfock}(f) \opfocksegal_{\txtfock}(g)$): The computation proceeds in the same manner as for the pair of creation and annihilation operators.
Let the operator to be evaluated be $$W
=
\napiernum^{-\sminvtemperature \physham_{\txtvanhowe,I_{L}^{3},\kappa,\Lambda}(\smchemicalpotential)}
\cdot
\opfocksegal_{\txtfock}(f)
\cdot
\opfocksegal_{\txtfock}(g).$$
By the result on commutators between the Hamiltonian and the Segal field operator,
\begin{equation}
\begin{aligned}
&\inv{V_{\kappa,\Lambda}}
W
V_{\kappa,\Lambda}
=
\inv{V_{\kappa,\Lambda}}
\napiernum^{-\sminvtemperature \physham_{\txtvanhowe,I_{L}^{3},\kappa,\Lambda}(\smchemicalpotential)}
V_{\kappa,\Lambda}
\cdot
\inv{V_{\kappa,\Lambda}}
\opfocksegal_{\txtfock}(f)
V_{\kappa,\Lambda}
\cdot
\inv{V_{\kappa,\Lambda}}
\opfocksegal_{\txtfock}(g)
V_{\kappa,\Lambda} \\
&=
\napiernum^{-\sminvtemperature \physgse}
\napiernum^{-\sminvtemperature \physham_{\txtbsn,\txtfr,I_{L}^{3}}(\smchemicalpotential)}
\rbk{\opfocksegal_{\txtfock}(f) - \opimag \bkt{-\imunit \mathsf{m}_{\kappa,\Lambda}}{f}}
\rbk{\opfocksegal_{\txtfock}(g) - \opimag \bkt{-\imunit \mathsf{m}_{\kappa,\Lambda}}{g}}
\\ 
&=
\napiernum^{-\sminvtemperature \physgse}
\napiernum^{-\sminvtemperature \physham_{\txtbsn,\txtfr,I_{L}^{3}}(\smchemicalpotential)}
\rbk{\opfocksegal_{\txtfock}(f) - \opreal \bkt{\mathsf{m}_{\kappa,\Lambda}}{f}}
\rbk{\opfocksegal_{\txtfock}(g) - \opreal \bkt{\mathsf{m}_{\kappa,\Lambda}}{g}} \\
\end{aligned}
\end{equation}
By the cyclicity of the trace, $\sqfun{\trace}{W}
= \sqfun{\trace}{\inv{V_{\kappa,\Lambda}}
W
V_{\kappa,\Lambda}}$ holds.
The one-point functions vanish as expectation values for the ideal Bose gas.
Using the evaluation of the two-point functions for the ideal Bose gas, we obtain
\begin{equation}
\begin{aligned}
&\oastate[\psi_{\txtgrandcanonical,\txtvanhowe,I_{L}^{3},\kappa,\Lambda,\sminvtemperature,\smchemicalpotential}]
(\opfocksegal_{\txtfock}(f) \opfocksegal_{\txtfock}(g))
=
\napiernum^{\sminvtemperature \physgse}
\frac{1}{\smgrandpartitionfunc_{\txtbsn,\txtfr,I_{L}^{3},\sminvtemperature,\smchemicalpotential}}
\sqfun{\trace}{W} \\
&=
\onehalf
\rbk{\opreal \opform{q}_{\txtnonzero,\smchemicalpotential}(f,g)
+\imunit \opimag \bkt{f}{g}}
+\opreal \bkt{\mathsf{m}_{\kappa,\Lambda}}{f}
\cdot
\opreal \bkt{\mathsf{m}_{\kappa,\Lambda}}{g}.
\end{aligned}
\end{equation}

\end{proof}

By the above arguments, the grand canonical average of the Weyl operator is obtained.

\begin{prop}\label{expedition0011578}
For any $f
\in \fun{\lp^{2}}{I_{L}^{3}}$,
\begin{equation}
\begin{aligned}
\oastate[\psi_{\txtgrandcanonical,\txtvanhowe,I_{L}^{3},\kappa,\Lambda,\sminvtemperature,\smchemicalpotential}]
(\opfockweyl_{\txtfock}(f))
=
\fnexp{-\imunit
\opreal
\bkt{\mathsf{m}_{\kappa,\Lambda}}{f}
-\frac{1}{4}
\opform{q}_{\txtnonzero,\smchemicalpotential}(f)}.
\end{aligned}
\end{equation}
\end{prop}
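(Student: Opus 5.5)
The plan is to repeat the unitary-conjugation argument of Propositions \ref{expedition0011296} and \ref{expedition0011299}, this time for the Weyl operator itself. With $V_{\kappa,\Lambda} = \opfockweyl_{\txtfock}(\imunit \mathsf{m}_{\kappa,\Lambda})$ and $\physgse$ the ground state energy, Proposition \ref{expedition0012088} gives
$$\inv{V_{\kappa,\Lambda}} \napiernum^{-\sminvtemperature \physham_{\txtvanhowe,I_{L}^{3},\kappa,\Lambda}(\smchemicalpotential)} V_{\kappa,\Lambda} = \napiernum^{-\sminvtemperature \physgse} \napiernum^{-\sminvtemperature \physham_{\txtbsn,\txtfr}(\smchemicalpotential)}.$$
By Proposition \ref{expedition0011287}, the prefactor $\napiernum^{-\sminvtemperature\physgse}$ cancels against the normalization $\smgrandpartitionfunc_{\txtvanhowe,I_{L}^{3},\kappa,\Lambda,\sminvtemperature,\smchemicalpotential}$. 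By cyclicity of the trace, which is legitimate because the heat operator is trace class and $\opfockweyl_{\txtfock}(f)$ is bounded, the van Hove expectation becomes the free grand canonical expectation of $\inv{V_{\kappa,\Lambda}} \opfockweyl_{\txtfock}(f) V_{\kappa,\Lambda}$.

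The next step is to compute this conjugated Weyl operator. Exponentiating Fact \ref{expedition0010960}(ii) with $\imunit \mathsf{m}_{\kappa,\Lambda}$ replaced by $-\imunit\mathsf{m}_{\kappa,\Lambda}$ (since $\inv{V_{\kappa,\Lambda}} = \opfockweyl_{\txtfock}(-\imunit\mathsf{m}_{\kappa,\Lambda})$) gives
$$\inv{V_{\kappa,\Lambda}}\opfocksegal_{\txtfock}(f)V_{\kappa,\Lambda} = \opfocksegal_{\txtfock}(f) - \opimag\bkt{-\imunit\mathsf{m}_{\kappa,\Lambda}}{f} = \opfocksegal_{\txtfock}(f) - \opreal\bkt{\mathsf{m}_{\kappa,\Lambda}}{f},$$
which matches the computation in Proposition \ref{expedition0011299}. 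Since conjugation by a unitary commutes with the functional calculus, we get
$$\inv{V_{\kappa,\Lambda}}\opfockweyl_{\txtfock}(f)V_{\kappa,\Lambda} = \napiernum^{-\imunit\opreal\bkt{\mathsf{m}_{\kappa,\Lambda}}{f}}\opfockweyl_{\txtfock}(f).$$
As a cross-check, the Weyl relations give the same phase directly: $\opfockweyl(-g)\opfockweyl(f)\opfockweyl(g) = \napiernum^{-\imunit\opimag\bkt{-g}{f}}\opfockweyl(f)$ with $g=\imunit\mathsf{m}_{\kappa,\Lambda}$. I will fix the signs carefully at this point so that they agree with Proposition \ref{expedition0011296}, whose one-point function $-\opreal\bkt{\mathsf{m}_{\kappa,\Lambda}}{f}$ is the linear term of the expected exponent.

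It remains to evaluate the free bounded-system expectation and show it equals $\napiernum^{-\frac14\opform{q}_{\txtnonzero,\smchemicalpotential}(f)}$. I will use the standard quasi-free property of the grand canonical state of the ideal Bose gas. Concretely, I diagonalize in the discrete eigenbasis $\{e_k\}_{k\in\setlattice_L^3}$ of $\omega$ given by Proposition \ref{expedition0011286}. The Fock space then factorizes, the Weyl operator factorizes into single-mode displacement operators, and each mode is a thermal state. For a single mode, $\operatorname{tr}(\napiernum^{-\sminvtemperature(\omega_k-\smchemicalpotential)N}D(\alpha))/Z_k = \napiernum^{-\frac12|\alpha|^2\coth(\sminvtemperature(\omega_k-\smchemicalpotential)/2)}$. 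This can be computed with Laguerre generating functions or a coherent-state $P$-representation. With $\alpha = \imunit\faftr f(k)/\sqrt2$ (up to normalization), the single-mode factors multiply to $\fnexp{-\frac14\bkt{f}{K_{\sminvtemperature,\smchemicalpotential}f}}$. The product converges because $\sum_k|\faftr f(k)|^2\coth(\cdot)<\infty$, which holds since $\smchemicalpotential<0$ makes $K_{\sminvtemperature,\smchemicalpotential}$ bounded.

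The main obstacle is justifying the infinite product. I plan to handle it by first taking $f$ with finitely many nonzero modes, and then extending to all $f$ by norm continuity of $f\mapsto\opfockweyl_{\txtfock}(f)$ in the strong topology together with continuity of the right-hand side. The remaining care is in matching the normalizations of $\opform{q}_{\txtnonzero,\smchemicalpotential}$ and of the Segal field, so that the constant comes out as exactly $\frac14$.
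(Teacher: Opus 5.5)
Your proposal follows the paper's proof: conjugate by $V_{\kappa,\Lambda}=W_{\mathrm{F}}(\mathsf{i}\,\mathsf{m}_{\kappa,\Lambda})$ to extract the phase $e^{-\mathsf{i}\operatorname{Re}\langle \mathsf{m}_{\kappa,\Lambda},f\rangle}$, cancel $e^{-\beta E_0}$ against the partition function (Proposition~\ref{expedition0011287}), and reduce to the free grand canonical expectation; the paper simply takes that last step from the free Bose gas, whereas you derive it mode by mode with a finite-mode approximation. One caveat you inherit from the paper: Proposition~\ref{expedition0012088} holds only at $\mu=0$, since $W_{\mathrm{F}}(\mathsf{i}\,\mathsf{m})\, d\Gamma(\omega-\mu)\, W_{\mathrm{F}}(\mathsf{i}\,\mathsf{m})^{-1} = d\Gamma(\omega-\mu)+\phi_{\mathrm{F}}((\omega-\mu)\mathsf{m})+\text{const}$, so for the $\mu<0$ that you (rightly) need for the trace-class property the displacement must be $\mathsf{i}\,\omega(\omega-\mu)^{-1}\mathsf{m}_{\kappa,\Lambda}$, and the phase then involves $\omega(\omega-\mu)^{-1}\mathsf{m}_{\kappa,\Lambda}$ rather than $\mathsf{m}_{\kappa,\Lambda}$.
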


\begin{proof}
This can be evaluated by the same method as Proposition \ref{expedition0011299}.
For brevity, let $\physgse$ denote the ground state energy of the van Hove Hamiltonian,
and use the unitary operator $V_{\kappa,\Lambda}
= \opfockweyl_{\txtfock}(\imunit \mathsf{m}_{\kappa,\Lambda})$.

Let the operator to be evaluated from the viewpoint of the trace be $W
=
\opfockweyl_{\txtfock}(f)
\napiernum^{-\sminvtemperature \physham_{\txtvanhowe,I_{L}^{3},\kappa,\Lambda}(\smchemicalpotential)}$.
By Fact \ref{expedition0010960},
\begin{equation}
\begin{aligned}
&\inv{V_{\kappa}}
W
V_{\kappa}
=
\inv{V_{\kappa}}
\opfockweyl_{\txtfock}(f)
\napiernum^{-\sminvtemperature \physham_{\txtvanhowe,I_{L}^{3},\kappa,\Lambda}(\smchemicalpotential)}
V_{\kappa}
\\ 
&=
\napiernum^{-\imunit
\opimag
\bkt{-\imunit \mathsf{m}_{\kappa,\Lambda}}{f}}
\opfockweyl_{\txtfock}(f)
\cdot
\inv{V_{\kappa}}
\napiernum^{-\sminvtemperature \physham_{\txtvanhowe,I_{L}^{3},\kappa,\Lambda}(\smchemicalpotential)}
V_{\kappa}
\\ 
&=
\napiernum^{-\imunit
\opreal
\bkt{\mathsf{m}_{\kappa,\Lambda}}{f}}
\opfockweyl_{\txtfock}(f)
\cdot
\napiernum^{-\sminvtemperature \physgse}
\napiernum^{-\sminvtemperature \physham_{\txtbsn,\txtfr,I_{L}^{3}}(\smchemicalpotential)}
\\ 
&=
\napiernum^{-\imunit
\opreal
\bkt{\mathsf{m}_{\kappa,\Lambda}}{f}}
\napiernum^{-\sminvtemperature \physgse}
\opfockweyl_{\txtfock}(f)
\napiernum^{-\sminvtemperature \physham_{\txtbsn,\txtfr,I_{L}^{3}}(\smchemicalpotential)}.
\end{aligned}
\end{equation}
By the cyclicity of the trace,
\begin{equation}
\begin{aligned}
&\sqfun{\trace}
{W}
=
\sqfun{\trace}
{\inv{V_{\kappa,\Lambda}}
W
V_{\kappa,\Lambda}}
\\ 
&=
\napiernum^{-\imunit
\opreal \bkt{\mathsf{m}_{\kappa,\Lambda}}{f}}
\napiernum^{-\sminvtemperature \physgse}
\sqfun{\trace}
{\opfockweyl_{\txtfock}(f)
\napiernum^{-\sminvtemperature \physham_{\txtbsn,\txtfr,I_{L}^{3}}(\smchemicalpotential)}}
\\ 
&=
\napiernum^{-\imunit
\opreal
\bkt{\mathsf{m}_{\kappa,\Lambda}}{f}}
\smgrandpartitionfunc_{\txtvanhowe,I_{L}^{3},\kappa,\Lambda,\sminvtemperature,\smchemicalpotential}
\cdot
\psi_{\txtgrandcanonical,\sminvtemperature,\smchemicalpotential}(\opfockweyl_{\txtfock}(f))
\end{aligned}
\end{equation}
Summarizing the above arguments,
\begin{equation}
\begin{aligned}
&\oastate[\psi_{\txtgrandcanonical,\txtvanhowe,I_{L}^{3},\kappa,\Lambda,\sminvtemperature,\smchemicalpotential}](\opfockweyl_{\txtfock}(f))
\\ 
&=
\napiernum^{-\imunit
\opreal
\bkt{\mathsf{m}_{\kappa,\Lambda}}{f}}
\fnexp{-\frac{1}{4}
\opform{q}_{\txtnonzero,\smchemicalpotential}(f)}.
\end{aligned}
\end{equation}
\end{proof}

We also prepare the two-point function of the Weyl operator.

\begin{prop}\label{expedition0011592}
For any $f,g
\in \sphilb{H}$,
\begin{equation}
\begin{aligned}
&\oastate[\psi_{\txtgrandcanonical,\txtvanhowe,I_{L}^{3},\kappa,\Lambda,\sminvtemperature,\smchemicalpotential}]
(\opfockweyl_{\txtfock}(f)
\opfockweyl_{\txtfock}(g))
\\ 
&=
\fnexp{-\frac{\imunit}{2}
\opimag
\bkt{f}{g}
-\imunit
\opreal
\bkt{\mathsf{m}_{\kappa,\Lambda}}{f+g}
-\frac{1}{4}
\opform{q}_{\txtnonzero,\smchemicalpotential}(f+g)}.
\end{aligned}
\end{equation}
\end{prop}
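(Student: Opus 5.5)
The quickest route is to reduce to Proposition \ref{expedition0011578} with the Weyl relations. In the Fock representation the Weyl relations hold as operator identities:
\[
\opfockweyl_{\txtfock}(f)\opfockweyl_{\txtfock}(g)
=
\napiernum^{-\frac{\imunit}{2}\opimag\bkt{f}{g}}\opfockweyl_{\txtfock}(f+g).
\]
Since the grand canonical state is linear, applying it to both sides pulls out the scalar phase:
\[
\oastate[\psi_{\txtgrandcanonical,\txtvanhowe,I_{L}^{3},\kappa,\Lambda,\sminvtemperature,\smchemicalpotential}](\opfockweyl_{\txtfock}(f)\opfockweyl_{\txtfock}(g))
=
\napiernum^{-\frac{\imunit}{2}\opimag\bkt{f}{g}}\,
\oastate[\psi_{\txtgrandcanonical,\txtvanhowe,I_{L}^{3},\kappa,\Lambda,\sminvtemperature,\smchemicalpotential}](\opfockweyl_{\txtfock}(f+g)).
\]
Because $f+g\in\fun{\lp^{2}}{I_{L}^{3}}$, Proposition \ref{expedition0011578} applies to $f+g$ and gives
\[
\fnexp{-\imunit\opreal\bkt{\mathsf{m}_{\kappa,\Lambda}}{f+g}-\tfrac14\opform{q}_{\txtnonzero,\smchemicalpotential}(f+g)}.
\]
Multiplying by the phase yields exactly the claimed formula. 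Linearity of $\opreal\bkt{\mathsf{m}_{\kappa,\Lambda}}{\cdot}$ lets one split the linear term as $\opreal\bkt{\mathsf{m}_{\kappa,\Lambda}}{f}+\opreal\bkt{\mathsf{m}_{\kappa,\Lambda}}{g}$ if desired.

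As an independent check, and in keeping with the paper's style, one can also compute directly with $V_{\kappa,\Lambda}=\opfockweyl_{\txtfock}(\imunit\mathsf{m}_{\kappa,\Lambda})$. Set $W=\opfockweyl_{\txtfock}(f)\opfockweyl_{\txtfock}(g)\napiernum^{-\sminvtemperature\physham_{\txtvanhowe,I_{L}^{3},\kappa,\Lambda}(\smchemicalpotential)}$ and conjugate by $V_{\kappa,\Lambda}$. By Fact \ref{expedition0010960}, each Weyl operator acquires its phase $\napiernum^{-\imunit\opreal\bkt{\mathsf{m}_{\kappa,\Lambda}}{\cdot}}$. By Proposition \ref{expedition0012088}, the heat operator becomes $\napiernum^{-\sminvtemperature\physgse}\napiernum^{-\sminvtemperature\physham_{\txtbsn,\txtfr}(\smchemicalpotential)}$. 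Cyclicity of the trace then reduces everything to the free Bose gas quantity $\psi_{\txtgrandcanonical,\sminvtemperature,\smchemicalpotential}(\opfockweyl_{\txtfock}(f)\opfockweyl_{\txtfock}(g))$. That quantity equals $\napiernum^{-\frac{\imunit}{2}\opimag\bkt{f}{g}}\napiernum^{-\frac14\opform{q}_{\txtnonzero,\smchemicalpotential}(f+g)}$, again by the Weyl relation and the quasi-free formula.

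No step here is a real obstacle. The only points to check are the sign convention of the phase, which must match the Weyl relation as stated in the definition of $\oaweyl$, and that $\opfockweyl_{\txtfock}(f)\opfockweyl_{\txtfock}(g)\napiernum^{-\sminvtemperature\physham}$ is trace class, which follows from Proposition \ref{expedition0011287}.
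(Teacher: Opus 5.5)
Your proposal is correct and matches the paper's proof: it too applies the Weyl relation $\opfockweyl_{\txtfock}(f)\opfockweyl_{\txtfock}(g)=\napiernum^{-\frac{\imunit}{2}\opimag\bkt{f}{g}}\opfockweyl_{\txtfock}(f+g)$, pulls the phase out of the state by linearity, and then invokes Proposition \ref{expedition0011578} for $f+g$. Your additional conjugation check with $V_{\kappa,\Lambda}$ is consistent with the paper's sign conventions but is not needed.
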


\begin{proof}
By the Weyl relations, $$\opfockweyl_{\txtfock}(f)
\opfockweyl_{\txtfock}(g)
=
\napiernum^{-\frac{\imunit}{2}
\opimag
\bkt{f}{g}}
\opfockweyl_{\txtfock}(f+g)$$holds.
Then by Proposition \ref{expedition0011578},
\begin{equation}
\begin{aligned}
&\oastate[\psi_{\txtgrandcanonical,\txtvanhowe,I_{L}^{3},\kappa,\Lambda,\sminvtemperature,\smchemicalpotential}]
(\opfockweyl_{\txtfock}(f)
\opfockweyl_{\txtfock}(g))
\\ 
&=
\napiernum^{-\frac{\imunit}{2}
\opimag
\bkt{f}{g}}
\oastate[\psi_{\txtgrandcanonical,\txtvanhowe,I_{L}^{3},\kappa,\Lambda,\sminvtemperature,\smchemicalpotential}]
(\opfockweyl_{\txtfock}(f+g))
\\ 
&=
\napiernum^{-\frac{\imunit}{2}
\opimag
\bkt{f}{g}}
\napiernum^{-\imunit
\opreal
\bkt{\mathsf{m}_{\kappa,\Lambda}}{f+g}}
\napiernum^{-\frac{1}{4}
\opform{q}_{\txtnonzero,\smchemicalpotential}(f+g)}.
\end{aligned}
\end{equation}
\end{proof}

By the following proposition, the grand canonical state for the van Hove model is clearly separated into the term requiring infrared and ultraviolet divergence treatment and the grand canonical state for the free field.

\begin{prop}\label{expedition0011594}
Let $\oastate[\psi_{\txtgrandcanonical,\txtfr,I_{L}^{3},\sminvtemperature,\smchemicalpotential}]$ be the grand canonical state for the free field.
For any $f
\in \sphilb{H}$, the grand canonical state $\oastate[\psi_{\txtgrandcanonical,\txtvanhowe,I_{L}^{3},\kappa,\Lambda,\sminvtemperature,\smchemicalpotential}]$ for the van Hove model satisfies
\begin{equation}
\begin{aligned}
&\oastate[\psi_{\txtgrandcanonical,\txtvanhowe,I_{L}^{3},\kappa,\Lambda,\sminvtemperature,\smchemicalpotential}]
(\opfockweyl_{\txtfock}(f))
=
\napiernum^{-\imunit
\opreal
\bkt{\mathsf{m}_{\kappa,\Lambda}}{f}}
\fun{\oastate[\psi_{\txtgrandcanonical,\txtfr,I_{L}^{3},\sminvtemperature,\smchemicalpotential}]}
{\opfockweyl_{\txtfock}(f)}.
\end{aligned}
\end{equation}
\end{prop}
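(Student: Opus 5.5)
The plan is to compare Proposition \ref{expedition0011578} with the known formula for the free grand canonical state. I will not recompute any traces. Proposition \ref{expedition0011578} already gives
\[
\oastate[\psi_{\txtgrandcanonical,\txtvanhowe,I_{L}^{3},\kappa,\Lambda,\sminvtemperature,\smchemicalpotential}](\opfockweyl_{\txtfock}(f))
=
\napiernum^{-\imunit \opreal \bkt{\mathsf{m}_{\kappa,\Lambda}}{f}}
\fnexp{-\tfrac{1}{4}\opform{q}_{\txtnonzero,\smchemicalpotential}(f)},
\]
so it remains to identify the second factor with $\oastate[\psi_{\txtgrandcanonical,\txtfr,I_{L}^{3},\sminvtemperature,\smchemicalpotential}](\opfockweyl_{\txtfock}(f))$.

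The free-field statement follows from the same proposition by switching off the source. Setting $\varrho_{\kappa,\Lambda}=0$ gives $\mathsf{m}_{\kappa,\Lambda}=0$. Then $V_{\kappa,\Lambda}=\idone$, the van Hove Hamiltonian reduces to $\physham_{\txtbsn,\txtfr}(\smchemicalpotential)$, the ground state energy is $0$, and the partition function reduces to $\smgrandpartitionfunc_{\txtbsn,\txtfr,I_{L}^{3},\sminvtemperature,\smchemicalpotential}$. Proposition \ref{expedition0011578} then reads $\oastate[\psi_{\txtgrandcanonical,\txtfr,I_{L}^{3},\sminvtemperature,\smchemicalpotential}](\opfockweyl_{\txtfock}(f))=\fnexp{-\frac14\opform{q}_{\txtnonzero,\smchemicalpotential}(f)}$. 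This is the standard quasi-free formula for the ideal Bose gas, with $\opform{q}_{\txtnonzero,\smchemicalpotential}(f)=\bkt{f}{K_{\sminvtemperature,\smchemicalpotential}f}$, as in \cite{AsaoArai28}.

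I would also make the identity visible directly, because it is really what the proof of Proposition \ref{expedition0011578} computes. Conjugating by $V_{\kappa,\Lambda}$ and using Fact \ref{expedition0010960} together with Proposition \ref{expedition0012088} gives
\[
\sqfun{\trace}{W}=\napiernum^{-\imunit\opreal\bkt{\mathsf{m}_{\kappa,\Lambda}}{f}}\napiernum^{-\sminvtemperature\physgse}\sqfun{\trace}{\opfockweyl_{\txtfock}(f)\napiernum^{-\sminvtemperature\physham_{\txtbsn,\txtfr}(\smchemicalpotential)}}.
\]
Dividing by $\smgrandpartitionfunc_{\txtvanhowe,I_{L}^{3},\kappa,\Lambda,\sminvtemperature,\smchemicalpotential}=\napiernum^{-\sminvtemperature\physgse}\smgrandpartitionfunc_{\txtbsn,\txtfr,I_{L}^{3},\sminvtemperature,\smchemicalpotential}$ from Proposition \ref{expedition0011287} cancels the factor $\napiernum^{-\sminvtemperature\physgse}$. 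What remains is exactly the phase times the free grand canonical expectation.

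The only delicate point is trace-class justification and the sign of the phase. Finiteness of all traces is guaranteed by Proposition \ref{expedition0011287} and the finiteness of the free partition function in the bounded system. The phase comes from $\opimag\bkt{-\imunit\mathsf{m}_{\kappa,\Lambda}}{f}=\opreal\bkt{\mathsf{m}_{\kappa,\Lambda}}{f}$, and this sign must be checked against the Weyl-operator conjugation rule in Fact \ref{expedition0010960}.
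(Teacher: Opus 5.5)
Your proposal is correct and is essentially the paper's proof, which simply combines Proposition~\ref{expedition0011578} with the free Bose gas formula $\oastate[\psi_{\txtgrandcanonical,\txtfr,I_{L}^{3},\sminvtemperature,\smchemicalpotential}](\opfockweyl_{\txtfock}(f))=\fnexp{-\frac{1}{4}\opform{q}_{\txtnonzero,\smchemicalpotential}(f)}$; your direct conjugation computation is exactly the intermediate line in the proof of Proposition~\ref{expedition0011578}, where the trace already appears as the phase times the van Hove partition function times the free expectation. One caution: obtaining the free formula by switching off the source in Proposition~\ref{expedition0011578} is circular, since that proposition's proof already inserts the free-gas result, so rest that step on \cite{AsaoArai28} or on the direct trace identity instead.
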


\begin{proof}
This follows from Proposition \ref{expedition0011578} and the result for the free Bose gas.
\end{proof}

Let us compute the action on the Weyl operator for the automorphism group defined in Section \ref{expedition0011278}.

\begin{prop}\label{expedition0011094}
For any $t
\in \fldreal$, defining the $\ast$-endomorphism $\alpha_{\txtvanhowe,I_{L}^{3},\kappa,\Lambda,t}$ of the abstract Weyl algebra by $$\alpha_{\txtvanhowe,I_{L}^{3},\kappa,\Lambda,t}(\opfockweyl_{\txtfock}(f))
=
\napiernum^{\imunit \mathsf{M}_{\kappa,\Lambda,t}(f)}
\opfockweyl_{\txtfock}(\napiernum^{\imunit t \omega} f),$$
this is indeed the action of the Hamiltonian of the van Hove model on the Weyl operator.
\end{prop}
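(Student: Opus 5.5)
The plan is to verify the formula on the Fock space for the bounded system with cutoffs, conjugating with the dressing transformation $V_{\kappa,\Lambda}=\opfockweyl_{\txtfock}(\imunit \mathsf{m}_{\kappa,\Lambda})$ so that everything reduces to the free dynamics. I treat the case where the chemical potential is suppressed, as in the statement. For $\smchemicalpotential\neq 0$ one replaces $\omega$ by $\omega-\smchemicalpotential$ in the free evolution; this affects only the one-particle flow and not the structure of the argument.

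\textbf{Step 1 (reduce to the free evolution).} By Proposition \ref{expedition0012088}, $\physham_{\txtvanhowe,I_{L}^{3},\kappa,\Lambda}=V_{\kappa,\Lambda}\physham_{\txtbsn,\txtfr}\inv{V_{\kappa,\Lambda}}+\physgse$. Functional calculus then gives
$$\napiernum^{\imunit t\physham_{\txtvanhowe,I_{L}^{3},\kappa,\Lambda}}=\napiernum^{\imunit t\physgse}\,V_{\kappa,\Lambda}\,\napiernum^{\imunit t\physham_{\txtbsn,\txtfr}}\,\inv{V_{\kappa,\Lambda}}.$$
The scalar phase $\napiernum^{\imunit t\physgse}$ drops out under $\Ad$. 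Hence
$$\alpha_{\txtvanhowe,I_{L}^{3},\kappa,\Lambda,t}(\opfockweyl_{\txtfock}(f))=V_{\kappa,\Lambda}\,\napiernum^{\imunit t\physham_{\txtbsn,\txtfr}}\,\inv{V_{\kappa,\Lambda}}\,\opfockweyl_{\txtfock}(f)\,V_{\kappa,\Lambda}\,\napiernum^{-\imunit t\physham_{\txtbsn,\txtfr}}\,\inv{V_{\kappa,\Lambda}}.$$

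\textbf{Step 2 (phase from the Weyl relations).} Applying the Weyl relations twice gives
$$\opfockweyl_{\txtfock}(a)\opfockweyl_{\txtfock}(f)\opfockweyl_{\txtfock}(-a)=\napiernum^{-\imunit\opimag\bkt{a}{f}}\opfockweyl_{\txtfock}(f).$$
This is equivalent to the Segal-field identity of Fact \ref{expedition0010960}(2). With $a=-\imunit\mathsf{m}_{\kappa,\Lambda}$ we have $\opimag\bkt{a}{f}=\opreal\bkt{\mathsf{m}_{\kappa,\Lambda}}{f}$, so
$$\inv{V_{\kappa,\Lambda}}\opfockweyl_{\txtfock}(f)V_{\kappa,\Lambda}=\napiernum^{-\imunit\opreal\bkt{\mathsf{m}_{\kappa,\Lambda}}{f}}\opfockweyl_{\txtfock}(f).$$
Next, the free dynamics acts by the standard second-quantization identity $\napiernum^{\imunit t\physham_{\txtbsn,\txtfr}}\opfockweyl_{\txtfock}(f)\napiernum^{-\imunit t\physham_{\txtbsn,\txtfr}}=\opfockweyl_{\txtfock}(\napiernum^{\imunit t\omega}f)$. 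Finally, with $a=\imunit\mathsf{m}_{\kappa,\Lambda}$,
$$V_{\kappa,\Lambda}\opfockweyl_{\txtfock}(g)\inv{V_{\kappa,\Lambda}}=\napiernum^{\imunit\opreal\bkt{\mathsf{m}_{\kappa,\Lambda}}{g}}\opfockweyl_{\txtfock}(g).$$
Taking $g=\napiernum^{\imunit t\omega}f$ and collecting the phases yields
$$\napiernum^{\imunit(\opreal\bkt{\mathsf{m}_{\kappa,\Lambda}}{\napiernum^{\imunit t\omega}f}-\opreal\bkt{\mathsf{m}_{\kappa,\Lambda}}{f})}=\napiernum^{\imunit\mathsf{M}_{\kappa,\Lambda,t}(f)},$$
by Definition \ref{expedition0011628}. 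This is exactly the asserted formula.

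\textbf{Step 3 (abstract level).} It remains to check that the prescription is consistent on the abstract Weyl algebra, i.e. defines a one-parameter group of $\ast$-automorphisms. The map $f\mapsto\napiernum^{\imunit t\omega}f$ is unitary, so it preserves $\opimag\bkt{f}{g}$. The phase is additive in $f$ because $\mathsf{M}_{\kappa,\Lambda,t}$ is real-linear. Together these show that the assignment respects both Weyl relations, so it extends to a $\ast$-endomorphism by the uniqueness of the Weyl $\oacstar$-algebra. The group law $\alpha_{t+s}=\alpha_t\alpha_s$ follows from the cocycle identity of Proposition \ref{expedition0011238}, and invertibility follows from $\alpha_{-t}$.

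I expect the main obstacle to be keeping the sign conventions straight in Step 2, specifically the phase in $\opfockweyl(a)\opfockweyl(f)\opfockweyl(-a)$ for $a=\pm\imunit\mathsf{m}_{\kappa,\Lambda}$. I would cross-check the result against the one-point function of Proposition \ref{expedition0011296}. The operator-theoretic points, such as the domains involved in the unitary equivalence, are already covered by Proposition \ref{expedition0012088}.
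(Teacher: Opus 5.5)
Your proof is correct, and it establishes more than the paper's own argument does. The paper's proof consists only of your Step~3. It checks that $W_{\mathrm{F}}(f)\mapsto e^{\mathrm{i}\mathsf{M}_{\kappa,\Lambda,t}(f)}W_{\mathrm{F}}(e^{\mathrm{i}t\omega}f)$ respects the Weyl relations, using unitarity of $e^{\mathrm{i}t\omega}$ and real-linearity of $\mathsf{M}_{\kappa,\Lambda,t}$. It also checks compatibility with the adjoint and the group law via the cocycle identity of Proposition~\ref{expedition0011238}. It never compares this group with $\operatorname{Ad} e^{\mathrm{i}tH_{\mathrm{vH}}}$, which is what the statement actually asserts.

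Your Steps~1--2 supply that identification. You write $H_{\mathrm{vH}}=V_{\kappa,\Lambda}H_{\mathrm{b},\mathrm{fr}}V_{\kappa,\Lambda}^{-1}+E_0$ by Proposition~\ref{expedition0012088}, conjugate the Weyl operator by $V_{\kappa,\Lambda}$ on either side of the free flow, and collect the two phases into $\mathsf{M}_{\kappa,\Lambda,t}(f)$. Your signs are consistent with the paper's conventions; they also reproduce the one-point function of Proposition~\ref{expedition0011296}.

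Once Steps~1--2 are done, Step~3 is redundant in the cutoff setting. $\operatorname{Ad} e^{\mathrm{i}tH_{\mathrm{vH}}}$ is automatically a one-parameter group of $\ast$-automorphisms, it maps generators to phase multiples of generators, and the Fock representation of the simple Weyl algebra is faithful.

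The paper's purely algebraic check buys independence from any Hamiltonian or dressing unitary. The same verification works after the cutoffs are removed, where $\mathsf{m}\notin L^2$ and $V$ no longer exists. That is how the formula is used in Theorem~\ref{expedition0011637} and Section~\ref{expedition0012002}. Your route is tied to $\mathsf{m}_{\kappa,\Lambda}\in L^2$, but it is the one that actually proves the dynamical claim.

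One minor point in your aside on $\mu\neq 0$. The linear term of $d\Gamma(\omega-\mu)+\phi(\omega\mathsf{m}_{\kappa,\Lambda})$ is removed by $W_{\mathrm{F}}(\mathrm{i}\,\omega(\omega-\mu)^{-1}\mathsf{m}_{\kappa,\Lambda})$ rather than by $V_{\kappa,\Lambda}$. So the phase functional changes along with the one-particle flow. This does not affect the statement, which concerns $\mu=0$.
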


\begin{rem}
Although we have only computed the action for the bounded system with cutoffs here,
the action can be computed by exactly the same calculation for the system without cutoffs and for infinite systems,
and the results coincide.
In the subsequent discussion, we also use the above proposition for the system without cutoffs and for infinite systems.
\end{rem}

\begin{proof}
($\ast$-isomorphism: preservation of the Weyl relations): First, $\alpha_{\txtvanhowe,I_{L}^{3},\kappa,\Lambda,0}
= \idone$ is clear.
Since the inverse map for each $t$ is the map with the parameter replaced by $-t$,
it suffices to show that it is a homomorphism.

The operator $\napiernum^{\imunit t \omega}$ is unitary, so it preserves the symplectic form defined from the inner product.
Since $\mathsf{M}_{\kappa,\Lambda,t}$ arises from the inner product, it is additive.
Then by direct computation,
\begin{equation}
\begin{aligned}
&\alpha_{\txtvanhowe,I_{L}^{3},\kappa,\Lambda,t}(\opfockweyl_{\txtfock}(f))
\cdot
\alpha_{\txtvanhowe,I_{L}^{3},\kappa,\Lambda,t}(\opfockweyl_{\txtfock}(g)) \\
&=
\napiernum^{\imunit \rbk{\mathsf{M}_{\kappa,\Lambda,t}(f) + \mathsf{M}_{\kappa,\Lambda,t}(g)}}
\opfockweyl_{\txtfock}(\napiernum^{\imunit t \omega} f)
\opfockweyl_{\txtfock}(\napiernum^{\imunit t \omega} g) \\
&=
\napiernum^{\imunit \rbk{\mathsf{M}_{\kappa,\Lambda,t}(f) + \mathsf{M}_{\kappa,\Lambda,t}(g)}}
\napiernum^{-\frac{\imunit}{2} \opimag \bkt{\napiernum^{\imunit t \omega} f}{\napiernum^{\imunit t \omega} g}}
\opfockweyl_{\txtfock}(\napiernum^{\imunit t \omega}(f+g)), \\
&=
\napiernum^{-\frac{\imunit}{2} \opimag \bkt{f}{g}}
\napiernum^{\imunit \mathsf{M}_{\kappa,\Lambda,t}(f+g)}
\opfockweyl_{\txtfock}(\napiernum^{\imunit t \omega}(f+g)), \\
&\alpha_{\txtvanhowe,I_{L}^{3},\kappa,\Lambda,t}(\opfockweyl_{\txtfock}(f) \opfockweyl_{\txtfock}(g)) \\
&=
\napiernum^{-\frac{\imunit}{2} \opimag \bkt{f}{g}}
\alpha_{\txtvanhowe,I_{L}^{3},\kappa,\Lambda,t}(\opfockweyl_{\txtfock}(f+g)) \\
&=
\napiernum^{-\frac{\imunit}{2} \opimag \bkt{f}{g}}
\napiernum^{\imunit \mathsf{M}_{\kappa,\Lambda,t}(f+g)}
\opfockweyl_{\txtfock}(\napiernum^{\imunit t \omega}(f+g)),
\end{aligned}
\end{equation}
and these are equal.

(Adjoint): Since the map $\mathsf{M}_{\kappa,\Lambda,t}$ satisfies $\mathsf{M}_{\kappa,\Lambda,t}(-f)
= - \mathsf{M}_{\kappa,\Lambda,t}(f)$,
\begin{equation}
\begin{aligned}
&\faadj{\alpha_{\txtvanhowe,I_{L}^{3},\kappa,\Lambda,t}(\opfockweyl_{\txtfock}(f))}
=
\napiernum^{-\imunit \mathsf{M}_{\kappa,\Lambda,t}(f)}
\opfockweyl_{\txtfock}(-\napiernum^{\imunit t \omega} f) \\
&=
\napiernum^{\imunit \mathsf{M}_{\kappa,\Lambda,t}(-f)}
\opfockweyl_{\txtfock}(\napiernum^{\imunit t \omega} (-f)) \\
&=
\fun{\alpha_{\txtvanhowe,I_{L}^{3},\kappa,\Lambda,t}}{\opfockweyl_{\txtfock}(-f)}
=
\fun{\alpha_{\txtvanhowe,I_{L}^{3},\kappa,\Lambda,t}}{\faadj{\opfockweyl_{\txtfock}(f)}}
\end{aligned}
\end{equation}
holds.

(Group property): First,
\begin{equation}
\begin{aligned}
\alpha_{\txtvanhowe,I_{L}^{3},\kappa,\Lambda,t}(\opfockweyl_{\txtfock}(f))
&=
\napiernum^{\imunit \mathsf{M}_{\kappa,\Lambda,t}(f)} \opfockweyl_{\txtfock}(\napiernum^{\imunit t \omega} f), \\
\alpha_{\txtvanhowe,I_{L}^{3},\kappa,\Lambda,s}(\opfockweyl_{\txtfock}(f))
&=
\napiernum^{\imunit \mathsf{M}_{\kappa,\Lambda,s}(f)} \opfockweyl_{\txtfock}(\napiernum^{\imunit s \omega} f).
\end{aligned}
\end{equation}
Then $$\alpha_{\txtvanhowe,I_{L}^{3},\kappa,\Lambda,t}
\circ
\alpha_{\txtvanhowe,I_{L}^{3},\kappa,\Lambda,s}(\opfockweyl_{\txtfock}(f))
=
\napiernum^{\imunit \mathsf{M}_{\kappa,\Lambda,s}(f)}
\napiernum^{\imunit \mathsf{M}_{\kappa,\Lambda,t}(\napiernum^{\imunit s \omega} f)}
\opfockweyl_{\txtfock}(\napiernum^{\imunit t \omega} \napiernum^{\imunit s \omega} f)$$
holds.
On the other hand, $$\alpha_{\kappa,\Lambda,t+s}(\opfockweyl_{\txtfock}(f))
=
\napiernum^{\imunit \mathsf{M}_{\kappa,\Lambda,t+s}(f)}
\opfockweyl_{\txtfock}(\napiernum^{\imunit (t+s) \omega} f)$$holds.
This is the cocycle condition, which was proved in Proposition \ref{expedition0011238}.
\end{proof}

We also prepare the expectation value of the resolvent for the discussion on the resolvent algebra.

\begin{prop}\label{expedition0011613}
The grand canonical average of the resolvent $\oaresolvent(\lambda,f)
= \opfnresolvent{\imunit \lambda - \opfocksegal_{\txtfock}(f)}$ is
\begin{equation}
\begin{aligned}
&\fun{\oastate[\psi_{\txtgrandcanonical,\txtvanhowe,I_{L}^{3},\sminvtemperature,\smchemicalpotential}]}
{\oaresolvent(\lambda,f)}
\\ 
&=
\imunit
\int_0^{(\sgn \lambda) \infty}
\fnexp{-\rbk{\lambda - \imunit \opreal \mathsf{m}_{\kappa,\Lambda}(f)} t
-\frac{t^2}{4} \opform{q}_{\txtnonzero,\smchemicalpotential}(f)}
\opdmsr{t}.
\end{aligned}
\end{equation}

\end{prop}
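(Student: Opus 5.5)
The plan is to reduce the resolvent to Weyl operators by a Laplace transform and then use Proposition \ref{expedition0011578}. For $\lambda>0$, the Segal field operator $\opfocksegal_{\txtfock}(f)$ is self-adjoint, so the spectral theorem gives the strongly convergent Bochner integral
\[
\opfnresolvent{\imunit \lambda - \opfocksegal_{\txtfock}(f)}
=
-\imunit \int_0^{\infty} \napiernum^{-\lambda t}\, \napiernum^{\imunit t \opfocksegal_{\txtfock}(f)} \opdmsr{t}
=
-\imunit \int_0^{\infty} \napiernum^{-\lambda t}\, \opfockweyl_{\txtfock}(tf) \opdmsr{t} .
\]
This holds because, for real $x$, we have $(\imunit\lambda - x)^{-1} = -\imunit\int_0^\infty \napiernum^{-\lambda t + \imunit t x}\opdmsr{t}$. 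For $\lambda<0$ we integrate instead over $(-\infty,0]$, with the sign adjusted so that the integral converges. Both cases are covered by writing $\int_0^{(\sgn\lambda)\infty}$. I would first check this scalar identity pointwise and then lift it to operators by functional calculus.

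Next, apply the grand canonical state. It is given by a trace against a trace-class density operator (Proposition \ref{expedition0011287}), so it is normal. It therefore commutes with the strongly (hence weakly) convergent integral. The integrand is bounded in norm by $\napiernum^{-\abs{\lambda} t}$, so dominated convergence justifies the interchange. We get
\[
\psi\bigl(\oaresolvent(\lambda,f)\bigr) = \mp\imunit\int \napiernum^{-\lambda t}\,\psi\bigl(\opfockweyl_{\txtfock}(tf)\bigr)\opdmsr{t}.
\]
We then insert Proposition \ref{expedition0011578} with $tf$ in place of $f$. By real-linearity of $\opreal\bkt{\mathsf{m}_{\kappa,\Lambda}}{\cdot}$ and quadraticity of $\opform{q}_{\txtnonzero,\smchemicalpotential}$, we have
\[
\psi\bigl(\opfockweyl_{\txtfock}(tf)\bigr) = \exp\Bigl(-\imunit t\,\opreal\mathsf{m}_{\kappa,\Lambda}(f) - \tfrac{t^2}{4}\,\opform{q}_{\txtnonzero,\smchemicalpotential}(f)\Bigr).
\]
Combining the exponents gives the integrand displayed in the statement.

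The main obstacle is bookkeeping of signs and conventions rather than analysis. Two things must match the paper's convention $\oaresolvent(\lambda,f) = (\imunit\lambda - \opfocksegal(f))^{-1}$: the overall prefactor $\imunit$ versus $-\imunit$, and the sign in front of $\imunit\,\opreal\mathsf{m}_{\kappa,\Lambda}(f)$. These depend on whether one pairs the resolvent with $\opfockweyl(tf)$ or $\opfockweyl(-tf)$. I would therefore fix the scalar Laplace identity carefully, choosing the orientation of the integration variable so that the exponent reads $-(\lambda - \imunit\,\opreal\mathsf{m})t$. I would then confirm the sign for $\lambda<0$ separately, using the relation $\faadj{\oaresolvent(\lambda,f)} = \oaresolvent(-\lambda,f)$ as a consistency check. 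If a residual sign mismatch appears, it would be absorbed by the substitution $t\mapsto -t$ together with the symmetry $\opform{q}(-f) = \opform{q}(f)$.
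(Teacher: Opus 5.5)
Your route is the paper's: Laplace-transform the resolvent into Weyl operators, pass the normal grand canonical state $\psi$ through the integral, and insert Proposition~\ref{expedition0011578}. However, the scalar Laplace identity your proof rests on is wrong, and the remedy you propose for the sign issue does not work. For $\lambda>0$ and real $x$,
\[
-\imunit\int_0^\infty \napiernum^{-\lambda t+\imunit t x}\,\opdmsr{t}=\frac{-\imunit}{\lambda-\imunit x}=\frac{1}{x+\imunit\lambda}.
\]
This is $(\imunit\lambda-(-x))^{-1}$, not $(\imunit\lambda-x)^{-1}$. So your integral $-\imunit\int_0^\infty \napiernum^{-\lambda t}\opfockweyl_{\txtfock}(tf)\,\opdmsr{t}$ equals $\oaresolvent(\lambda,-f)$. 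Carried through, it gives the exponent $-(\lambda+\imunit\opreal\mathsf{m}_{\kappa,\Lambda}(f))t-\frac{t^2}{4}\opform{q}_{\txtnonzero,\smchemicalpotential}(f)$. Hence your sentence ``combining the exponents gives the integrand displayed in the statement'' is false as written. The substitution $t\mapsto -t$ cannot repair this. A change of variables does not change the value, and $\psi(\oaresolvent(\lambda,-f))\neq\psi(\oaresolvent(\lambda,f))$ whenever $\opreal\mathsf{m}_{\kappa,\Lambda}(f)\neq 0$. Concretely, the substitution also flips the sign in front of $\lambda$ and reverses the half-line, so it never lands on the displayed form.

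The correct identity pairs the resolvent with $\opfockweyl_{\txtfock}(-tf)$:
\[
\opfnresolvent{\imunit\lambda-\opfocksegal_{\txtfock}(f)}=-\imunit\int_0^{(\sgn\lambda)\infty}\napiernum^{-\lambda t}\,\opfockweyl_{\txtfock}(-tf)\,\opdmsr{t}.
\]
It holds for both signs of $\lambda$ with the same prefactor, so no $\mp$ is needed once the half-line is oriented by $\sgn\lambda$. Since $\psi(\opfockweyl_{\txtfock}(-tf))=\exp\bigl(\imunit t\opreal\mathsf{m}_{\kappa,\Lambda}(f)-\frac{t^2}{4}\opform{q}_{\txtnonzero,\smchemicalpotential}(f)\bigr)$, this yields exactly the displayed integrand, but with prefactor $-\imunit$.

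Your $-\imunit$ was in fact right. The paper's proof uses the same pairing with $\opfockweyl_{\txtfock}(-tf)$ but writes $+\imunit$, and $\imunit\int_0^{(\sgn\lambda)\infty}\napiernum^{\imunit t(\imunit\lambda-\opfocksegal_{\txtfock}(f))}\opdmsr{t}$ equals $-\opfnresolvent{\imunit\lambda-\opfocksegal_{\txtfock}(f)}$. The case $f=0$ settles it: the right-hand side of the statement equals $\imunit/\lambda$, while $\oaresolvent(\lambda,0)=-\frac{\imunit}{\lambda}\idone$. So instead of trying to absorb a ``residual mismatch'', conclude that the formula holds with $-\imunit$ in place of $\imunit$. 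Your analytic justifications (normality of the trace state, the Bochner integral, dominated convergence) are fine and more careful than the paper's.
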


\begin{proof}
By the Laplace transform, or by the argument in \cite{BuchholzGrundling2}, the resolvent is expressed in terms of the Weyl operator as
\begin{equation}
\begin{aligned}
&\opfnresolvent{\imunit \lambda - \opfocksegal_{\txtfock}(f)}
=
\imunit
\int_0^{(\sgn \lambda) \infty}
\napiernum^{\imunit t(\imunit \lambda - \opfocksegal_{\txtfock}(f))}
\opdmsr{t}
\\ 
&=
\imunit
\int_0^{(\sgn \lambda) \infty}
\napiernum^{-\lambda t}
\napiernum^{-\imunit t \opfocksegal_{\txtfock}(f)}
\opdmsr{t}
=
\imunit
\int_0^{(\sgn \lambda) \infty}
\napiernum^{-\lambda t}
\opfockweyl_{\txtfock}(-tf)
\opdmsr{t}.
\end{aligned}
\end{equation}
Then using Proposition \ref{expedition0011578},
\begin{equation}
\begin{aligned}
&\fun{\oastate[\psi_{\txtgrandcanonical,\txtvanhowe,I_{L}^{3},\sminvtemperature,\smchemicalpotential}]}
{\oaresolvent(\lambda,f)}
=
\fun{\oastate[\psi_{\txtgrandcanonical,\txtvanhowe,I_{L}^{3},\sminvtemperature,\smchemicalpotential}]}
{\opfnresolvent{\imunit \lambda - \opfocksegal_{\txtfock}(f)}}
\\ 
&=
\imunit
\int_0^{(\sgn \lambda) \infty}
\napiernum^{-\lambda t}
\fun{\oastate[\psi_{\txtgrandcanonical,\txtvanhowe,I_{L}^{3},\sminvtemperature,\smchemicalpotential}]}
{\opfockweyl_{\txtfock}(-tf)}
\opdmsr{t}
\\ 
&=
\imunit
\int_0^{(\sgn \lambda) \infty}
\napiernum^{-\lambda t}
\napiernum^{-\imunit
\opreal
\bkt{\mathsf{m}_{\kappa,\Lambda}}{-tf}}
\fnexp{-\frac{t^2}{4}
\bkt{f}
{K_{\sminvtemperature,\smchemicalpotential}
f}}
\opdmsr{t}
\\ 
&=
\imunit
\int_0^{(\sgn \lambda) \infty}
\fnexp{-\rbk{\lambda - \imunit \opreal \mathsf{m}_{\kappa,\Lambda}(f)} t
-\frac{t^2}{4} \opform{q}_{\txtnonzero,\smchemicalpotential}(f)}
\opdmsr{t}.
\end{aligned}
\end{equation}

\end{proof}

Let us also prepare the two-point function of the resolvent.

\begin{prop}\label{expedition0011614}
For the resolvent $\oaresolvent(\lambda,f)
= \opfnresolvent{\imunit \lambda - \opfocksegal_{\txtfock}(f)}$,
\begin{equation}
\begin{aligned}
&\fun{\oastate[\psi_{\txtgrandcanonical,\txtvanhowe,I_{L}^{3},\sminvtemperature,\smchemicalpotential}]}
{\oaresolvent(\lambda,f)
\oaresolvent(\mu,g)}
\\ 
&=
\imunit
\int_0^{(\sgn \lambda) \infty}
\int_0^{(\sgn \mu) \infty}
\napiernum^{-\lambda s - \mu t}
\mathsf{T}_{\sminvtemperature,\smchemicalpotential,\kappa,\Lambda}(s,f;t,g)
\opdmsr{s}
\opdmsr{t},
\\ 
&\mathsf{T}_{\sminvtemperature,\smchemicalpotential,\kappa,\Lambda}(s,f;t,g)
\\
&=
\fnexp{-\frac{1}{4}
\opform{q}_{\txtnonzero,\smchemicalpotential}(sf+tg)
+\imunit \opreal \mathsf{m}_{\kappa,\Lambda}(sf+tg)
-\frac{\imunit}{2} st \opimag \bkt{f}{g}}.
\end{aligned}
\end{equation}

\end{prop}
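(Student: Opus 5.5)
The plan is to reduce the statement to the two-point function of Weyl operators, Proposition \ref{expedition0011592}, in the same way that Proposition \ref{expedition0011613} reduced the one-point function of the resolvent to Proposition \ref{expedition0011578}. First I write each resolvent through the Laplace transform used in the proof of Proposition \ref{expedition0011613}:
\begin{equation}
\oaresolvent(\lambda,f)
=
\imunit
\int_0^{(\sgn \lambda) \infty}
\napiernum^{-\lambda s}
\opfockweyl_{\txtfock}(-sf)
\opdmsr{s},
\qquad
\oaresolvent(\mu,g)
=
\imunit
\int_0^{(\sgn \mu) \infty}
\napiernum^{-\mu t}
\opfockweyl_{\txtfock}(-tg)
\opdmsr{t}.
\end{equation}
Both integrals converge in operator norm as Bochner integrals, because the Weyl operators are unitary and $\napiernum^{-\lambda s}$ is integrable on the relevant half-line. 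The product of two norm-convergent Bochner integrals is the double Bochner integral of $\napiernum^{-\lambda s-\mu t}\opfockweyl_{\txtfock}(-sf)\opfockweyl_{\txtfock}(-tg)$ over the product of the half-lines. The grand canonical state is a bounded linear functional on $\opspbddlin{\fun{\spfock_{\txtbsn}}{\fun{\lp^{2}}{I_{L}^{3}}}}$, so it commutes with this integral.

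Second, I evaluate the integrand by Proposition \ref{expedition0011592} with $f\mapsto -sf$ and $g\mapsto -tg$. Since $\opimag\bkt{-sf}{-tg}=st\,\opimag\bkt{f}{g}$, the Weyl-relation phase is $-\frac{\imunit}{2}st\,\opimag\bkt{f}{g}$. The mean term is $-\imunit\opreal\bkt{\mathsf{m}_{\kappa,\Lambda}}{-(sf+tg)}=\imunit\opreal\mathsf{m}_{\kappa,\Lambda}(sf+tg)$. The Gaussian term is $-\frac14\opform{q}_{\txtnonzero,\smchemicalpotential}(sf+tg)$, because the quadratic form is even. Together these three factors give exactly $\mathsf{T}_{\sminvtemperature,\smchemicalpotential,\kappa,\Lambda}(s,f;t,g)$.

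Third, I collect the scalar prefactors $\imunit\cdot\imunit$ from the two Laplace representations. Literally this product is $-1$ rather than the $\imunit$ printed in the statement, so I would record the constant carefully and match it against the paper's conventions.

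The step requiring care is the second one: the interchange of the state with the double integral and the use of Fubini. The norm-convergence argument above handles this, since $\abs{\mathsf{T}}\leq 1$ by non-negativity of $\opform{q}_{\txtnonzero,\smchemicalpotential}$. Everything else is a direct substitution, so the remaining risk is only in this bookkeeping of signs and constants.
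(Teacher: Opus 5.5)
Your route is the paper's. Its one-line proof merges the two Weyl operators by the Weyl relations (this is Proposition \ref{expedition0011592}) and integrates against the Laplace weights. Your identification of the integrand with $\mathsf{T}_{\sminvtemperature,\smchemicalpotential,\kappa,\Lambda}(s,f;t,g)$ is correct.

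You should not hedge on the constant: the prefactor is $-1$, and the printed $\imunit$ is a misprint. Take $f=g=0$. Then $\oaresolvent(\lambda,0)\oaresolvent(\mu,0) = (-\imunit/\lambda)(-\imunit/\mu)\idone = -\frac{1}{\lambda\mu}\idone$. The displayed right-hand side instead equals $\frac{\imunit}{\lambda\mu}$, since $\int_0^{(\sgn\lambda)\infty}\napiernum^{-\lambda s}\opdmsr{s} = \frac{1}{\lambda}$ for either sign of $\lambda$. The same test shows that the single-resolvent formula you copied from Proposition \ref{expedition0011613} should carry $-\imunit$, not $\imunit$, because it must reproduce $\oaresolvent(\lambda,0) = -\frac{\imunit}{\lambda}\idone$. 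This is harmless here, since $(\pm\imunit)^2 = -1$.

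The step that fails as written is the one you singled out yourself. The Laplace integrals are not norm-convergent Bochner integrals. For $f \neq 0$ the spectrum of $\opfocksegal_{\txtfock}(f)$ is all of $\fldreal$, so $\norm{\opfockweyl_{\txtfock}(sf) - \opfockweyl_{\txtfock}(s'f)} = \norm{\opfockweyl_{\txtfock}((s-s')f) - \idone} = 2$ whenever $s \neq s'$. Hence $s \mapsto \opfockweyl_{\txtfock}(-sf)$ is not essentially separably valued, and so not Bochner measurable, in operator norm. The representation holds only strongly, and mere boundedness of the state does not let you move it through a strong integral. What you need is normality, which you can use as follows.

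For each vector $\xi$, the map $s \mapsto \opfockweyl_{\txtfock}(-sf)\xi$ is norm continuous. So $\oaresolvent(\mu,g)\xi$, and then $\oaresolvent(\lambda,f)\oaresolvent(\mu,g)\xi$, are Fock-space-valued Bochner integrals. The scalar integrand is continuous and bounded by $\napiernum^{-\lambda s-\mu t}\norm{\xi}^2$, so Fubini gives $\bkt{\xi}{\oaresolvent(\lambda,f)\oaresolvent(\mu,g)\xi} = -\int_0^{(\sgn\lambda)\infty}\int_0^{(\sgn\mu)\infty}\napiernum^{-\lambda s-\mu t}\bkt{\xi}{\opfockweyl_{\txtfock}(-sf)\opfockweyl_{\txtfock}(-tg)\xi}\opdmsr{s}\opdmsr{t}$.

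By the trace-class property in Proposition \ref{expedition0011287}, the grand canonical state has the form $A \mapsto \sum_n p_n\bkt{\eta_n}{A\eta_n}$ with $p_n \geq 0$ and $\sum_n p_n = 1$. Dominated convergence then exchanges the sum over $n$ with the double integral. The resulting integrand is the Weyl two-point function of Proposition \ref{expedition0011592} evaluated at $(-sf,-tg)$. With this replacement your argument is complete and proves the statement with prefactor $-1$.
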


\begin{proof}
It suffices to combine the product into one using the Weyl relations and then apply Proposition \ref{expedition0011613}.
\end{proof}

\section{Infinite systems on the Weyl algebra}\label{infinite-systems-on-the-weyl-algebra}

By Proposition \ref{expedition0011594}, BEC can essentially be treated by the free-field argument, so it suffices to consider only the situation \(\smchemicalpotential
= 0\) where BEC can occur.

\subsection{Evaluation of expectation values}\label{evaluation-of-expectation-values}

We first discuss with infrared and ultraviolet cutoffs, and then discuss the situation below the critical temperature under the removal of infrared and ultraviolet cutoffs.

\begin{thm}\label{expedition0011612}
The expectation value of the Weyl operator in the $\sminvtemperature$-KMS state $\oastate[\psi_{\txtvanhowe,\sminvtemperature,\kappa,\Lambda}]$ for the van Hove model with infrared and ultraviolet cutoffs can be written as $$\oastate[\psi_{\txtvanhowe,\sminvtemperature,\kappa,\Lambda}]
(\opfockweyl_{\txtfock}(f))
=
\fnexp{-\imunit
\opreal
\bkt{\mathsf{m}_{\kappa,\Lambda}}{f}
-\oneoverfour
\opform{q}_{\txtnonzero}(f)
-\oneoverfour
\opform{q}_{0}(f)}.$$
\end{thm}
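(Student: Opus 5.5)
The plan is to obtain the $\sminvtemperature$-KMS state of the cutoff van Hove model as a thermodynamic limit of the grand canonical states $\oastate[\psi_{\txtgrandcanonical,\txtvanhowe,I_{L}^{3},\kappa,\Lambda,\sminvtemperature,\smchemicalpotential}]$. By Proposition \ref{expedition0011594} the van Hove state factorizes as the phase $\napiernum^{-\imunit \opreal \bkt{\mathsf{m}_{\kappa,\Lambda}}{f}}$ times the free-field grand canonical state. The whole problem therefore reduces to two pieces. First, the limit of the phase factor, which is harmless because of the cutoffs. Second, the limit of the free Bose gas state at $\smchemicalpotential=0$, where the condensate can appear, and this is exactly the free-gas situation treated in \cite{AsaoArai28,YoshitsuguSekine004}.

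The steps, in order, are as follows.

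\begin{enumerate}
\item Fix $f$ of compact support, or more generally $f \in \opformdomain(\opform{q}_0)\cap\sphilb{D}_{\sminvtemperature}$, and regard it as an element of $\fun{\lp^{2}}{I_{L}^{3}}$ for $L$ large. Since $\mathsf{m}_{\kappa,\Lambda}\in\sphilb{H}$ because of the cutoffs, the phase $\opreal\bkt{\mathsf{m}_{\kappa,\Lambda}}{f}$ computed in the box converges to the infinite-volume inner product. Here I will use the lattice Riemann sums over $\setlattice_L^3$ together with $\kappa\le\abs{k}\le\Lambda$.

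\item Choose a sequence $\smchemicalpotential_L<0$ tuned, as in the free-gas treatment, so that the mean density is held fixed above the critical density $\smnumberdensity_c(\sminvtemperature)$. Then split the box expression for $\opform{q}_{\txtnonzero,\smchemicalpotential_L}(f)$ into the $k=0$ mode and the $k\neq 0$ modes.
\begin{itemize}
\item The $k=0$ contribution is $2 L^{-3}\,(\napiernum^{-\sminvtemperature\smchemicalpotential_L}-1)^{-1}\abs{\faftr{f}(0)}^2$ up to the Fourier normalization. In the limit it tends to $2(2\pi)^3\smnumberdensity_0(\sminvtemperature)\abs{\faftr f(0)}^2=\opform{q}_0(f)$.
\item The $k\neq0$ sum converges to $\bkt{f}{K_{\sminvtemperature}f}=\opform{q}_{\txtnonzero}(f)$ by dominated convergence, using $f\in\sphilb{D}_\sminvtemperature$.
\end{itemize}
Combining the two limits with Proposition \ref{expedition0011578} gives the claimed formula.

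\item Verify that the limit functional is indeed a $\sminvtemperature$-KMS state for the infinite-volume dynamics $\alpha_{\txtvanhowe,\kappa,\Lambda,t}$ of Proposition \ref{expedition0011094}. For this I will use Proposition \ref{expedition0011592}, which gives the two-point functions of Weyl operators, and pass the KMS condition of the finite-volume Gibbs states to the limit. The KMS property is preserved under weak-$*$ limits, and the finite-volume dynamics converge on Weyl generators because $\mathsf{M}_{\kappa,\Lambda,t}$ and $\napiernum^{\imunit t\omega}$ converge.

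\item Alternatively, check the KMS condition directly. Here I will note that $\opform{q}_0$ is invariant under $\napiernum^{\imunit t\omega}$ since $\omega(0)=0$, and that the cocycle term cancels against the phase.
\end{enumerate}

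The main obstacle is the second step, namely controlling the condensate mode in the thermodynamic limit. This means showing that the $k=0$ term produces precisely $\opform{q}_0$ and that no additional singular contribution arises from the small nonzero modes near the origin. The latter requires the convergence of $\sum_{k\ne0}L^{-3}\abs{\faftr f(k)}^2/(\napiernum^{\sminvtemperature\omega(k)}-1)$, and for $d=3$, $s\ge1$ this is dominated by $\abs{k}^{-s}$ integrability near $0$. For $s>2$ more care is needed: I expect to need either the restriction to the subspace on which $\opform{q}_{\txtnonzero}$ is finite, or the free-gas result from \cite{YoshitsuguSekine004}, which I will cite rather than reprove.
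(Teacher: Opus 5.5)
Your proposal follows the paper's own route, written out in more detail. The paper's proof simply applies the free-gas thermodynamic limit of \cite{YoshitsuguSekine004,AsaoArai28} to the factorization in Proposition~\ref{expedition0011594}, and that is exactly your Steps 1--2: convergence of the cutoff phase, and the $k=0$ mode with tuned $\smchemicalpotential_L$ producing $\opform{q}_0$.

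Two small corrections. First, in Step 3 the finite-volume dynamics cannot converge in norm on Weyl generators, since $\norm{\opfockweyl(f)-\opfockweyl(g)}=2$ whenever $f\neq g$; your direct check in Step 4 is therefore the sound route to the KMS property. Second, the $\abs{k}^{-s}$ domination of the $k\neq 0$ Riemann sums in $d=3$ works for every $s<3$, so the real threshold for extra care is $s\geq 3$ (where $\smnumberdensity_c(\sminvtemperature)=\infty$), not $s>2$.
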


\begin{rem}
The difference from the free Bose gas is precisely the term $\napiernum^{-\imunit
\opreal
\bkt{\mathsf{m}_{\kappa,\Lambda}}{f}}$ arising from the addition of the field interaction.
\end{rem}

\begin{proof}
It suffices to consider the infinite volume limit discussed in \cite{YoshitsuguSekine004,AsaoArai28} applied to Proposition \ref{expedition0011594}.
\end{proof}

We next remove the infrared and ultraviolet cutoffs. We make one remark concerning the procedure for taking limits. Although we have been working on Fock space up to this point, from the viewpoint of representation theory we should regard the discussion as taking place in the Fock representation of the abstract Weyl algebra \(\oaweyl(\sphilb{D}_{\txtirsingular,\sminvtemperature})\). In the discussion of removing the infrared and ultraviolet cutoffs as well, the state \(\oastate[\psi_{\txtvanhowe,\sminvtemperature,\kappa,\Lambda}]\) should be regarded as the state \(\oastate[\tilde{\psi}_{\txtvanhowe,\sminvtemperature,\kappa,\Lambda}]\) on \(\oaweyl(\sphilb{D}_{\txtirsingular,\sminvtemperature})\) composed with the Fock representation \(\oarepn_{\txtfock}\), that is, \(\oastate[\tilde{\psi}_{\txtvanhowe,\sminvtemperature,\kappa,\Lambda}]
\circ \oarepn_{\txtfock}\), and the limit should be taken for \(\oastate[\tilde{\psi}_{\txtvanhowe,\sminvtemperature,\kappa,\Lambda}]\) in the state space on the algebra \(\oaweyl(\sphilb{D}_{\txtirsingular,\sminvtemperature})\). For brevity of notation, we write \(\oastate[\tilde{\psi}_{\txtvanhowe,\sminvtemperature,\kappa,\Lambda}]\) simply as \(\oastate[\psi_{\txtvanhowe,\sminvtemperature,\kappa,\Lambda}]\).

\begin{thm}\label{expedition0011637}
\begin{enumerate}
\item
Convergence of the automorphism group: For any $A
\in \oaweyl_{\txtirsingular,\sminvtemperature}$ and $t
\in \fldreal$, the limit $$\alpha_{\txtvanhowe,t}(A)
=
\lim_{\kappa \to 0, \Lambda \to \infty}
\alpha_{\txtvanhowe,t,\kappa,\Lambda}(A)$$exists in the norm topology,
and the family $\fml{\alpha_{\txtvanhowe,t}}{t \in \fldreal}$ gives a one-parameter automorphism group on $\oaweyl_{\txtirsingular,\sminvtemperature}$.

\item
Convergence of the KMS state: For any $A
\in \oaweyl_{\txtirsingular,\sminvtemperature}$, there exists a state $\oastate[\psi_{\txtvanhowe,\sminvtemperature}]$ on $\oaweyl_{\txtirsingular,\sminvtemperature}$ satisfying $$\oastate[\psi_{\txtvanhowe,\sminvtemperature}](A)
=
\lim_{\kappa \to 0,\Lambda \to \infty}
\oastate[\psi_{\txtvanhowe,\sminvtemperature,\kappa,\Lambda}](A).$$
In particular, for the Weyl operator, $$\oastate[\psi_{\txtvanhowe,\sminvtemperature}]
(\opfockweyl(f))
=
\lim_{\kappa \to 0,\Lambda \to \infty}
\oastate[\psi_{\txtvanhowe,\sminvtemperature,\kappa,\Lambda}]
(\opfockweyl(f)),
\quad
f \in \sphilb{D}_{\txtirsingular,\sminvtemperature}$$holds,
and the right-hand side is obtained as the natural limit of the expression in Theorem \ref{expedition0011612}.
In particular, the expectation value for the Weyl operator is $$\oastate[\psi_{\txtvanhowe,\sminvtemperature}]
(\opfockweyl(f))
=
\fnexp{-\imunit
\opreal
\bkt{\mathsf{m}}{f}
-\oneoverfour
\opform{q}_{\txtnonzero}(f)
-\oneoverfour
\opform{q}_{0}(f)}.$$

\item
The state $\oastate[\psi_{\txtvanhowe,\sminvtemperature}]$ is a $\sminvtemperature$-KMS state for the automorphism group $\alpha_{\txtvanhowe}$.
\end{enumerate}
\end{thm}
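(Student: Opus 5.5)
The three parts are proved in order. Each step is checked on the Weyl generators $\opfockweyl(f)$, $f \in \sphilb{D}_{\txtirsingular,\sminvtemperature}$, and then extended to $\oaweyl_{\txtirsingular,\sminvtemperature}$ by density and uniform bounds. The key input is that the cutoffs enter only through the scalar functionals $\mathsf{m}_{\kappa,\Lambda}$ and $\mathsf{M}_{\kappa,\Lambda,t}$. These converge pointwise on $\dom \mathsf{m}$ to $\mathsf{m}$ and $\mathsf{M}_t$.

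\emph{Automorphism group.} By Proposition \ref{expedition0011094} and the remark after it,
\[
\alpha_{\txtvanhowe,t,\kappa,\Lambda}(\opfockweyl(f)) = \napiernum^{\imunit \mathsf{M}_{\kappa,\Lambda,t}(f)} \opfockweyl(\napiernum^{\imunit t\omega} f).
\]
Only the phase depends on the cutoffs. Now $\mathsf{M}_{\kappa,\Lambda,t}(f)=\opreal\int \fndef{\kappa\le\abs{k}\le\Lambda}\,\omega^{-3/2}(\napiernum^{\imunit t\omega}-1)\faftr f\,\opdmsr{k}$, and $f\in\dom\mathsf m$ makes the uncut integrand integrable. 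Dominated convergence therefore gives $\mathsf{M}_{\kappa,\Lambda,t}(f)\to\mathsf{M}_t(f)$, so $\alpha_{\txtvanhowe,t,\kappa,\Lambda}(\opfockweyl(f))$ converges in norm to $\napiernum^{\imunit\mathsf M_t(f)}\opfockweyl(\napiernum^{\imunit t\omega}f)$.

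For this to make sense, $\napiernum^{\imunit t\omega}$ must leave $\sphilb{D}_{\txtirsingular,\sminvtemperature}$ invariant. For $\sphilb{D}_\sminvtemperature$ this holds because $\napiernum^{\imunit t\omega}$ commutes with $K_\sminvtemperature$. For $\dom\mathsf m$ it holds because $\napiernum^{\imunit t\omega}$ is a unimodular multiplier in $k$-space.

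Since each $\alpha_{\kappa,\Lambda,t}$ is isometric, convergence extends from finite linear combinations of Weyl operators to all of $\oaweyl_{\txtirsingular,\sminvtemperature}$ by an $\varepsilon/3$ argument. The limit map is a $\ast$-automorphism and a group: the proof of Proposition \ref{expedition0011094} carries over verbatim, with Proposition \ref{expedition0011238} supplying the cocycle identity for $\mathsf M_t$.

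\emph{KMS state.} On generators, Theorem \ref{expedition0011612} gives the phase $\napiernum^{-\imunit\opreal\bkt{\mathsf m_{\kappa,\Lambda}}{f}}$ times a factor independent of the cutoffs. That phase converges to $\napiernum^{-\imunit\opreal\mathsf m(f)}$ by the same dominated convergence. Hence $\oastate[\psi_{\kappa,\Lambda}]$ converges on the $\ast$-algebra spanned by Weyl operators. The states have norm one, so the limit extends to a state on the $\oacstar$-closure, and the stated formula follows.

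\emph{KMS condition.} I would use the standard closedness of the KMS condition under simultaneous limits. Suppose $\omega_n$ is $\beta$-KMS for $\alpha^n$, with $\omega_n\to\omega$ weak$^\ast$ and $\alpha^n_t(A)\to\alpha_t(A)$ in norm for each $t$. Then $\omega$ is $\beta$-KMS for $\alpha$ \cite[Prop.\ 5.3.25]{BratteliRobinson2}. Applied here, the hypotheses are supplied by Part 1 and Part 2, provided each $\psi_{\kappa,\Lambda}$ is $\beta$-KMS for $\alpha_{\kappa,\Lambda}$. That holds because Theorem \ref{expedition0011612} obtains $\psi_{\kappa,\Lambda}$ as the thermodynamic limit of Gibbs states, and the same limit theorem applies.

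The main obstacle is this last uniformity issue. The cited closedness result needs norm convergence of $\alpha^n_t(A)$, possibly uniform in $t$ on compacts, together with continuity in $t$ of the limit. If that is delicate, I would instead verify the KMS condition directly on Weyl pairs. Write $\psi(\opfockweyl(f)\alpha_t(\opfockweyl(g)))$ explicitly using Proposition \ref{expedition0011592}. It is a Gaussian whose $\mathsf m$-contributions are linear phases. Then check that its analytic continuation to $t+\imunit\beta$ equals $\psi(\alpha_t(\opfockweyl(g))\opfockweyl(f))$. This reduces to the known KMS property of the free state with $\opform q_{\txtnonzero}+\opform q_0$, since $\mathsf M_t$ merely shifts the phase compatibly.
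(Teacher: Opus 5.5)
Your proposal is correct and takes the same route as the paper. The cutoffs enter only through the scalar phases $\mathsf{M}_{\kappa,\Lambda,t}(f)$ and $\opreal\mathsf{m}_{\kappa,\Lambda}(f)$, so convergence on Weyl generators plus density and uniform bounds gives Parts 1 and 2, and the KMS property survives the limit; you simply supply the dominated-convergence and $\varepsilon/3$ details that the paper calls clear.

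One caution on Part 3. The closedness result you cite from \cite{BratteliRobinson2} assumes strongly continuous dynamics, and $t\mapsto\alpha_t(\opfockweyl(f))$ is not norm-continuous in the Weyl $\oacstar$-algebra, so your fallback is the argument to use. It works cleanly because $\mathsf{M}_t(g)-\opreal\mathsf{m}(\napiernum^{\imunit t\omega}g)=-\opreal\mathsf{m}(g)$. This makes the $\mathsf{m}$-phase of $\psi(\opfockweyl(f)\alpha_t(\opfockweyl(g)))$ exactly $t$-independent, which reduces the check to the free quasi-free state with $\opform{q}_{\txtnonzero}+\opform{q}_0$.
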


\begin{proof}
For the removal limit of the infrared and ultraviolet cutoffs, it suffices to consider the objects obtained by simply taking the limits in $\kappa,\Lambda$.
Therefore it is sufficient to discuss whether these objects converge in the appropriate limits.

(Convergence of the automorphism group): By Proposition \ref{expedition0011094} and the remark immediately following it,
the limit automorphism is defined for each $t$.
The removal limit concerns only $\mathsf{m}_{\kappa,\Lambda}$,
and the convergence in norm is clear.

(Convergence of the KMS state): The cutoffs act only on the mean functional.
It remains to discuss convergence on $\opfockweyl(f)$ for $f
\in \sphilb{D}_{\txtirsingular,\sminvtemperature}$,
which is clear.
The KMS property of the limit holds because the defining equations from the bounded system are algebraic
and are preserved in the limit.
\end{proof}

\subsection{Selection criterion for physical phonons and the no-go theorem for BEC}\label{expedition0012085}

Toward the following theorem, we define the Segal field operator satisfying the self-consistency condition of \cite{VIYukalov001} as \(\opfocksegal_{\txtselfconsistent}(f)
=
\opfocksegal(f)
+\opreal \mathsf{m}(f)\).

\begin{thm}[Selection criterion for physical phonons]\label{expedition0012077}
For the $\sminvtemperature$-KMS state $\oastate[\psi_{\txtvanhowe,\sminvtemperature}]$ of the van Hove model, $\fun{\oastate[\psi_{\txtvanhowe,\sminvtemperature}]}
{\opfocksegal_{\txtselfconsistent}(f)}
= 0$ holds.
\end{thm}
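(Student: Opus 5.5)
The plan is to compute the one-point function of the Segal field in the limiting KMS state by differentiating the Weyl generating functional. I then check that the answer is exactly $-\opreal \mathsf{m}(f)$, so that adding the shift $\opreal \mathsf{m}(f)$ gives zero.

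First, I work in the GNS representation of $\oastate[\psi_{\txtvanhowe,\sminvtemperature}]$ on $\oaweyl_{\txtirsingular,\sminvtemperature}$. By Theorem \ref{expedition0011637} (2), for $f \in \sphilb{D}_{\txtirsingular,\sminvtemperature}$ the function
\[
t \mapsto \oastate[\psi_{\txtvanhowe,\sminvtemperature}](\opfockweyl(tf)) = \fnexp{-\imunit t \opreal \bkt{\mathsf{m}}{f} - \frac{t^2}{4}\rbk{\opform{q}_{\txtnonzero}(f) + \opform{q}_{0}(f)}}
\]
is continuous, indeed smooth, in $t$. Hence the state is regular and the Segal field in the GNS representation is defined as the self-adjoint generator of $t \mapsto \repn(\opfockweyl(tf))$. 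Since this function is $C^2$, the cyclic vector lies in the form domain of the generator, and the one-point function is well defined. It is given by
\[
\oastate[\psi_{\txtvanhowe,\sminvtemperature}](\opfocksegal(f)) = -\imunit \frac{\opdmsr{}}{\opdmsr{t}}\Big|_{t=0} \oastate[\psi_{\txtvanhowe,\sminvtemperature}](\opfockweyl(tf)).
\]

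Second, differentiating the explicit Gaussian expression at $t=0$ kills the quadratic terms $\opform{q}_{\txtnonzero}$ and $\opform{q}_0$, since they contribute only at order $t$ to the derivative. This leaves $-\imunit \cdot (-\imunit \opreal \mathsf{m}(f)) = -\opreal \mathsf{m}(f)$. The same answer follows more concretely from Proposition \ref{expedition0011296}, which gives $-\opreal \bkt{\mathsf{m}_{\kappa,\Lambda}}{f}$ at finite volume with cutoffs. That value is independent of $L$, so the infinite-volume limit is trivial, and the cutoff removal $\kappa \to 0$, $\Lambda \to \infty$ converges to $-\opreal \mathsf{m}(f)$ precisely because $f \in \dom \mathsf{m}$. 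Since $\opfocksegal_{\txtselfconsistent}(f) = \opfocksegal(f) + \opreal \mathsf{m}(f)$ with the second term a scalar multiple of the identity, linearity of the state gives $\oastate[\psi_{\txtvanhowe,\sminvtemperature}](\opfocksegal_{\txtselfconsistent}(f)) = -\opreal \mathsf{m}(f) + \opreal \mathsf{m}(f) = 0$.

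The main obstacle is justifying that the unbounded field $\opfocksegal(f)$ has a meaningful expectation in the limiting state, together with the interchange of limit and derivative. I would handle this in the GNS picture rather than through Fock-space limits. The explicit characteristic function is entire in $t$ with a Gaussian factor, so all moments exist; in particular the first moment equals the derivative at zero, by Bochner/Stone-type arguments applied to the spectral measure of the generator. The condition $f \in \sphilb{D}_{\txtirsingular,\sminvtemperature}$ guarantees both that $\opreal \mathsf{m}(f)$ is finite and that $\opform{q}_{\txtnonzero}(f) < \infty$. The term $\opform{q}_0(f)$ is finite for $f \in \opformdomain(\opform{q}_0)$ and in any case does not affect the first derivative.
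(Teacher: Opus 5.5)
Your proposal is correct and is essentially the paper's proof. You differentiate the explicit expression for $\oastate[\psi_{\txtvanhowe,\sminvtemperature}](\opfockweyl(tf))$ from Theorem~\ref{expedition0011637} at $t=0$, obtain $\oastate[\psi_{\txtvanhowe,\sminvtemperature}](\opfocksegal(f)) = -\opreal \mathsf{m}(f)$, and then add the scalar shift. Your GNS and spectral-measure justification supplies rigor the paper leaves implicit: regularity follows from continuity of $t \mapsto \oastate[\psi_{\txtvanhowe,\sminvtemperature}](\opfockweyl(tf))$ at $t=0$, and the first moment exists because this characteristic function is $C^2$.
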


\begin{proof}
Differentiating $$\oastate[\psi_{\txtvanhowe,\sminvtemperature}]
(\opfockweyl(tf))
=
\fnexp{-\imunit t
\opreal
\bkt{\mathsf{m}}{f}
-\oneoverfour t^2
\opform{q}_{\txtnonzero}(f)
-\oneoverfour t^2
\opform{q}_{0}(f)}$$from Theorem \ref{expedition0011637} with respect to $t$ and setting $t
= 0$, we obtain $$\oastate[\psi_{\txtvanhowe,\sminvtemperature}]
(\opfocksegal(f))
=
-\fnrestr{\imunit \opod{t}
\oastate[\psi_{\txtvanhowe,\sminvtemperature}]
(\opfockweyl(tf))}
{t = 0}
=
-\opreal
\bkt{\mathsf{m}}{f}.$$
Rearranging yields the desired result.
\end{proof}

\begin{rem}[Self-consistency condition of \cite{VIYukalov001} and nonstandard nature of the field operator]\label{expedition0012090}
The self-consistency condition imposed in \cite{VIYukalov001} requires $\physmean{\opfocksegal_{\txtselfconsistent}(f)}
= 0$ for the phonon field operator $\opfocksegal_{\txtselfconsistent}(f)$.
This can be interpreted as a condition for defining phonons as small-amplitude fluctuations from the background field,
namely a physical selection principle corresponding to the absorption of macroscopic components into the background field.
Since phonons as quasiparticles are by definition constructed so as not to carry a mean-field component,
the so-called phonon condensation is an object that should be excluded by redefinition of the field.

However, the field operator defined to satisfy this condition does not coincide,
at least in the van Hove model, with the field given by the Bogoliubov-type transformation $\opfocksegal(f)
\mapsto \opfocksegal(f) - \opimag \mathsf{m}(f)$ naturally obtained from automorphisms of the canonical commutation relations,
that is, from the conjugation action by Weyl operators.
The phonons in \cite{VIYukalov001} are defined not simply as the Bogoliubov transform of the original field,
but as the fluctuations remaining after removing the lattice distortion as the mean displacement of atoms as a background field.
This operation is generally not realized by automorphisms of the canonical commutation relations,
and the field operator $\opfocksegal_{\txtselfconsistent}(f)$ should be regarded as a field redefined based on physical requirements.
From this viewpoint,
Theorem \ref{expedition0012077} can be interpreted not as merely reproducing the self-consistency condition,
but as a statement that extracts the underlying selection criterion for physical phonons from the perspective of operator algebras.

However, the self-consistency condition is merely a constraint concerning the definition of the field,
and is not a condition that prescribes properties of states,
in particular requirements for equilibrium states.
While the redefinition in the sense of eliminating the expectation value of the field is achieved by introducing the operator $\opfocksegal_{\txtselfconsistent}(f)$,
this alone is insufficient to establish the no-go theorem for BEC.
Therefore, in order to mathematically establish the no-go theorem for BEC,
an additional selection principle for states is essentially required.

In this sense, the standpoint of this paper does not conflict with the physical claims of \cite{VIYukalov001}.
Rather, it should be clearly formulated by decomposing them into two independent elements,
namely the redefinition of the field and the selection principle for equilibrium states.
\end{rem}

In light of the above remark, we consider the state selection principle as follows. We regard the equilibrium state for phonons in \cite{VIYukalov001} as the assertion that, after absorbing various situations such as lattice distortions that can arise from infrared divergences, the state should be a mild state reformulated as a new equilibrium state. Here, as a selection principle for states possessing mildness in which macroscopic temporal anomalies have disappeared after a sufficiently long time, we adopt the time cluster property.

\begin{defn}[State selection principle for physical phonons]\label{expedition0012082}
As the selection principle for equilibrium states of phonons, we specify the preservation of mixing,
in particular the time cluster property \cite{BratteliRobinson1,BratteliRobinson2}.
\end{defn}

Under this definition, the following theorem is obtained.

\begin{thm}[No-go theorem for BEC of the van Hove model or phonons]\label{expedition0012089}
Under the state selection principle of Definition \ref{expedition0012082}, the van Hove model does not exhibit BEC.
\end{thm}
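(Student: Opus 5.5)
The plan is to reduce the statement to the corresponding result for the free Bose gas, using the explicit form of the $\sminvtemperature$-KMS state from Theorem \ref{expedition0011637}. By that theorem,
\[
\oastate[\psi_{\txtvanhowe,\sminvtemperature}](\opfockweyl(f))
=
\napiernum^{-\imunit \opreal \bkt{\mathsf{m}}{f}}
\fnexp{-\oneoverfour \opform{q}_{\txtnonzero}(f) - \oneoverfour \opform{q}_{0}(f)}.
\]
The second factor is exactly the two-parameter family of free Bose gas states studied in \cite{AsaoArai28,YoshitsuguSekine004}. The phase factor is a character on the Weyl algebra. Consider the map $\tau(\opfockweyl(f)) = \napiernum^{\imunit \opreal \bkt{\mathsf{m}}{f}} \opfockweyl(f)$ for $f \in \sphilb{D}_{\txtirsingular,\sminvtemperature}$. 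Since $f \mapsto \opreal \bkt{\mathsf{m}}{f}$ is additive and real, $\tau$ preserves the Weyl relations and extends to a $\ast$-automorphism of $\oaweyl_{\txtirsingular,\sminvtemperature}$. We then get $\oastate[\psi_{\txtvanhowe,\sminvtemperature}] = \oastate[\psi_{\txtfr,\sminvtemperature}] \circ \tau$, where $\oastate[\psi_{\txtfr,\sminvtemperature}]$ is the free KMS state with condensate density $\smnumberdensity_0(\sminvtemperature)$.

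The next step is to transport the dynamics through $\tau$. By Proposition \ref{expedition0011094} (in its no-cutoff form), $\alpha_{\txtvanhowe,t}(\opfockweyl(f)) = \napiernum^{\imunit \mathsf{M}_t(f)} \opfockweyl(\napiernum^{\imunit t\omega}f)$. We have $\mathsf{M}_t(f) = \opreal \bkt{\mathsf{m}}{\napiernum^{\imunit t\omega}f} - \opreal \bkt{\mathsf{m}}{f}$, which is precisely the coboundary of the character. So one checks directly on Weyl generators that $\tau \circ \alpha_{\txtvanhowe,t} = \alpha_{\txtfr,t} \circ \tau$, with $\alpha_{\txtfr,t}(\opfockweyl(f)) = \opfockweyl(\napiernum^{\imunit t\omega} f)$. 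This requires $\napiernum^{\imunit t\omega}$ to leave $\sphilb{D}_{\txtirsingular,\sminvtemperature}$ invariant. That holds because $\omega$ commutes with $K_{\sminvtemperature}$, and the phase does not alter $\abs{\faftr{f}}$ or $\faftr{f}(0)$. Consequently, for all $A,B$,
\[
\oastate[\psi_{\txtvanhowe,\sminvtemperature}](A \alpha_{\txtvanhowe,t}(B))
=
\oastate[\psi_{\txtfr,\sminvtemperature}](\tau(A) \alpha_{\txtfr,t}(\tau(B))).
\]
Hence the van Hove KMS state has the time cluster property (in the sense of Definition \ref{expedition0012082}) if and only if the free one does, and $\tau$ is a bijection on the algebra.

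Finally, I invoke the free Bose gas result of \cite{AsaoArai28,YoshitsuguSekine004}. Among the free $\sminvtemperature$-KMS states, time clustering forces $\smnumberdensity_0(\sminvtemperature) = 0$, that is, $\opform{q}_0 \equiv 0$ on $\sphilb{D}_{\txtirsingular,\sminvtemperature}$. Heuristically, take $A = \opfockweyl(f)$ and $B = \opfockweyl(g)$. The correlation then contains the factor $\fnexp{-\frac{1}{2}\opreal \opform{q}_0(f,\napiernum^{\imunit t\omega}g)}$, and $\faftr{(\napiernum^{\imunit t\omega}g)}(0) = \faftr{g}(0)$ because $\omega(0)=0$. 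The $\opform{q}_{\txtnonzero}$ cross term decays by Riemann--Lebesgue. The limit $t\to\infty$ therefore equals the product of the one-point functions times $\fnexp{-\frac{1}{2}\opform{q}_0(f,g)}$, which differs from $1$ unless $\smnumberdensity_0 = 0$ (choose $\faftr{f}(0)\faftr{g}(0) \neq 0$). The vanishing of $\opform{q}_0$ is exactly the absence of BEC, which concludes the argument.

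The main obstacle I expect is the Riemann--Lebesgue step for $\opform{q}_{\txtnonzero}(f,\napiernum^{\imunit t\omega}g)$. The kernel $K_{\sminvtemperature} \sim 2/(\sminvtemperature\abs{k}^s)$ is singular at $k=0$. One needs $\cmpconj{\faftr{f}} K_{\sminvtemperature} \faftr{g} \in \fun{\lp^{1}}{\fldreal^{3}}$, which should follow from Cauchy--Schwarz in the form $\opform{q}_{\txtnonzero}$ for $f,g \in \sphilb{D}_{\sminvtemperature}$. One also needs the radial phase $\napiernum^{\imunit t\abs{k}^s}$ to produce decay, which holds since $\abs{k}^s$ has nonvanishing gradient off the origin. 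If this becomes delicate, I would instead cite the cluster-property equivalence directly from \cite{AsaoArai28,YoshitsuguSekine004} for the free gas. With that, the transfer via $\tau$ is the only genuinely new ingredient.
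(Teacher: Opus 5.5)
Your proposal is correct and takes the paper's route. The paper's proof simply reduces to the free Bose gas equivalence of \cite{AsaoArai26,YoshitsuguSekine004}, on the grounds that the van Hove KMS state differs from the free one only by the phase $\napiernum^{-\imunit \opreal \bkt{\mathsf{m}}{f}}$. Your character automorphism $\tau$ (using that $\mathsf{M}_t$ is the coboundary of $f \mapsto \opreal \bkt{\mathsf{m}}{f}$) and your explicit computation of the cluster limit make that reduction precise where the paper leaves it implicit.

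One sign needs fixing. With the paper's phase $\napiernum^{-\imunit \opreal \bkt{\mathsf{m}}{f}}$, the identities $\oastate[\psi_{\txtvanhowe,\sminvtemperature}] = \oastate[\psi_{\txtfr,\sminvtemperature}] \circ \tau$ and $\tau \circ \alpha_{\txtvanhowe,t} = \alpha_{\txtfr,t} \circ \tau$ hold for $\tau(\opfockweyl(f)) = \napiernum^{-\imunit \opreal \bkt{\mathsf{m}}{f}} \opfockweyl(f)$, i.e.\ for the inverse of the map you wrote. This does not affect the transfer of the time cluster property.
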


\begin{proof}
By \cite{AsaoArai26,YoshitsuguSekine004}, for the free Bose gas, the validity of the time cluster property and the no-go theorem for BEC are equivalent.
Since the BEC behavior of the van Hove model is determined in the same way as for the free Bose gas,
the van Hove model does not exhibit BEC under the assumption of the time cluster property.
\end{proof}

\begin{rem}
Also by \cite{AsaoArai26,YoshitsuguSekine004}, the validity of the time cluster property,
the validity of the space cluster property,
and the condition on the order parameter on the resolvent algebra defined in those references are all equivalent.

If the uniform component of the field is to be absorbed into the background (distortion),
then the component remaining at long distances should not be regarded as a phonon.
At least in the van Hove model,
the very definition of phonons can be regarded as prohibiting the occurrence of off-diagonal long-range order.
Mathematically, this is the validity of the space cluster property.
At least in the van Hove model,
these cluster property assumptions are equivalent as conditions prohibiting the occurrence of BEC.
\end{rem}

By the above arguments, the contribution corresponding to the zero-mode condensation term \(\smnumberdensity_0(\sminvtemperature)\) appearing in the quasi-free state is physically not permitted as long as phonons are defined as linear fluctuations in equilibrium, and only the sector \cite{IzumiOjima002} satisfying \(\smnumberdensity_0(\sminvtemperature)
= 0\) is appropriate as a representation for phonons.

\subsection{Reduction of the algebra of physical observables for higher-order nonlinear dispersion and the no-go theorem}\label{expedition0012081}

The results of the preceding section hold regardless of the degree of the dispersion relation. However, as mentioned in Section \ref{expedition0012078}, there is the problem of treating the background field. The following theorem can be proved regarding the circumstances under which higher-order nonlinear dispersion is absorbed into the background.

\begin{thm}\label{expedition0011638}
Let the dispersion relation of the van Hove model be $\omega_s(k)
= \abs{k}^{s}$ for $s
> 2$.
Then $\opform{q}_0(f)
= 0$ holds for every $f
\in \sphilb{D}_{\txtirsingular,\sminvtemperature}$ that generates the algebra of physical observables.
In particular, Bose-Einstein condensation vanishes on the algebra of physical observables incorporating the infrared singularity condition.
\end{thm}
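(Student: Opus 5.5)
The plan is to show directly that every $f \in \sphilb{D}_{\txtirsingular,\sminvtemperature} = \dom \mathsf{m} \cap \sphilb{D}_{\sminvtemperature}$ satisfies $\faftr{f}(0) = 0$. Since $\opform{q}_0(f) = 2(2\pi)^d \smnumberdensity_0(\sminvtemperature)\abs{\faftr{f}(0)}^2$, this gives $\opform{q}_0(f)=0$ whatever the value of the condensate density $\smnumberdensity_0(\sminvtemperature)$. Consequently, by Theorem \ref{expedition0011637}, the state restricted to $\oaweyl_{\txtirsingular,\sminvtemperature}$ coincides with the BEC-free expression $\fnexp{-\imunit \opreal \bkt{\mathsf{m}}{f} - \oneoverfour \opform{q}_{\txtnonzero}(f)}$. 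Thus the condensate is invisible on the algebra of physical observables.

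The key step is the infrared analysis of $\dom \mathsf{m}$ for the point source $\faftr{\varrho}\equiv 1$ and $\omega_s(k) = \abs{k}^s$. I would interpret membership in $\dom \mathsf{m}$ as absolute convergence of
\[
\int_{\greuctr{k}{3}} \frac{\abs{\faftr{f}(k)}}{\abs{k}^{3s/2}} \opdmsr{k}
\]
near the origin, which is the natural reading of Definition \ref{expedition0011100}. On $\sphilb{D}_{\txtirsingular,\sminvtemperature} \subset \opformdomain(\opform{q}_0)$ we have $f \in \lp^1$, so $\faftr{f}$ is continuous.

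Suppose $\faftr{f}(0) = c \neq 0$. By continuity, $\abs{\faftr{f}(k)} \geq \abs{c}/2$ on some ball $\abs{k} \leq r$. Passing to polar coordinates, the integral over that ball is bounded below by
\[
\frac{\abs{c}}{2} \cdot 4\pi \int_0^r \rho^{2 - 3s/2} \opdmsr{\rho}.
\]
This integral diverges exactly when $2 - 3s/2 \leq -1$, that is, when $s \geq 2$. In particular it diverges for $s > 2$, so $f \notin \dom \mathsf{m}$, a contradiction. Hence $\faftr{f}(0) = 0$. This is the computation sketched in the remark following Definition \ref{expedition0011100}, made rigorous via continuity of $\faftr{f}$.

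The step I expect to be the main obstacle is pinning down what exactly $\dom \mathsf{m}$ means. If it is defined as the set where the integral converges only conditionally, or as an $\lp^2$-type pairing, the lower-bound argument needs adjusting. For a conditional limit such as $\lim_{\kappa\to0}$ over $\abs{k}\ge\kappa$, I would first restrict to the real part of $\cmpconj{c}\faftr{f}(k)$. On a small ball this real part is at least $\abs{c}^2/2$, and it gives a one-signed divergent contribution that cannot be cancelled. The remainder is handled by splitting $\faftr{f} = c + (\faftr{f}-c)$ on the ball: the constant piece diverges, while the convergence of the total forces the remainder piece to diverge with the opposite sign in a way incompatible with $\abs{\faftr{f}(k) - c}$ being small relative to $\abs{c}$.

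Finally, I would note why the restriction $s > 2$ is used. At the borderline $s = 2$ the same divergence is logarithmic and the argument still works, but for $s < 2$ it fails, consistent with the dichotomy drawn after Theorem \ref{expedition0012087}. The last sentence of the theorem then follows because the condensate contribution to two-point functions, and hence to the order parameter on $\oaweyl_{\txtirsingular,\sminvtemperature}$, is carried solely by $\opform{q}_0$, which has been shown to vanish identically.
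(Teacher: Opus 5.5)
Your proposal is correct and follows the same route as the paper: the infrared integrability condition defining $\dom \mathsf{m}$ forces $\faftr{f}(0)=0$, and therefore $\opform{q}_0(f)=0$ whatever the condensate density is. Your version is in fact tighter than the paper's sketch in two respects. First, you use the weight $\abs{k}^{-3s/2}$ that Definition~\ref{expedition0011100} actually dictates; the paper's proof writes $\abs{k}^{-s}$, and with that weight the radial divergence would only set in at $s \geq 3$. Second, you replace the heuristic step $\faftr{f}(k)=O(\abs{k}^{s-2})$ with a clean continuity-plus-lower-bound contradiction.
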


\begin{proof}
In this case, $f
\in \dom \mathsf{m}$ must satisfy $$\abs{\int_{k \in \greuctr{k}{3}}
\frac{\faftr{f}}{\abs{k}^{s}}
\opdmsr{k}}
\sim
\int_{k \in \greuctr{k}{3}}
\abs{k}^{2-s} \abs{\faftr{f}}
\opdmsr{k}
<
\infty.$$
In particular, for any $\delta
> 0$, the singularity around the origin must satisfy $\faftr{f}(k)
= \fun{O}{\abs{k}^{s-2}}$,
and in particular for $s
> 2$, $\faftr{f}(0)
= 0$ must hold.
Since $\opform{q}_0(f)
= 0$ holds for such $f$,
Bose-Einstein condensation must vanish on the algebra of physical observables.
\end{proof}

\begin{rem}
For the higher-order dispersion relation $\omega_s$ with any $s > 2$,
the occurrence of BEC is eliminated on the algebra of physical observables incorporating the infrared singularity condition.
However, this does not hold for $1 \leq s \leq 2$.
In particular, the no-go theorem for BEC with the linear dispersion, which is the original acoustic phonon,
requires a state selection principle, at least as an argument for the van Hove model, in the form of Theorem \ref{expedition0012089}.

However, since the model under discussion is a non-physical model prioritizing mathematical tractability,
this may simply be a deficiency of the model itself.
\end{rem}

\subsection{Physical requirements for equilibrium states and the no-go theorem for quasiparticle BEC}\label{physical-requirements-for-equilibrium-states-and-the-no-go-theorem-for-quasiparticle-bec}

Here we clarify the position of the state selection principle for physical phonons introduced in the preceding sections. In \cite{VIYukalov001}, the important point is the self-consistent construction of the Hamiltonian, and we emphasize again at the outset that the discussion in this paper, where the Hamiltonian is fixed to the van Hove model, is fundamentally different in nature.

As confirmed in Section \ref{expedition0012081}, if the dispersion relation is of sufficiently high order, infrared divergences are mathematically treated as a reduction of the algebra of physical observables, and BEC vanishes as a result. Therefore, the essential problem lies in how to exclude quasiparticle BEC for low-order dispersion, in particular for the linear dispersion \(s
= 1\).

In the Hubbard--phonon interaction system of \cite{YoshitsuguSekine001,YoshitsuguSekine002}, and in the van Hove model which behaves identically for the boson field, the Hamiltonian is equivalent to the free Hamiltonian via a unitary transformation under the infrared regularization condition. The problem is therefore not the construction of the Hamiltonian, but rather, regarding the no-go theorem for BEC in equilibrium states in \cite{VIYukalov001}, the question is rephrased as what kind of equilibrium states should be regarded as equilibrium states for phonons. What we actually adopted was Definition \ref{expedition0012082}. In particular, the equilibrium state for phonons was regarded as a mild state in which correlations stabilize in the infinite time limit and which does not possess anomalous long-time memory or macroscopic order.

This requirement was formulated as the time cluster property. By known results for the free Bose gas \cite{AsaoArai26,YoshitsuguSekine004}, the time cluster property is equivalent to the space cluster property, and furthermore these are equivalent to the no-go theorem for Bose-Einstein condensation. The space cluster property is a property asserting spatial mildness in addition to time mildness, and this is consistent with the physical requirement representing the absorption of distortions into the background; we also make use of the fact that these coincide for the van Hove model and the free Bose gas.

Since the van Hove model is formally equivalent to the free Bose gas via an appropriate transformation, this equivalence carries over directly to the model in question. Therefore, quasiparticle BEC does not occur in equilibrium states satisfying time and space cluster properties. In this sense, the no-go theorem for quasiparticle BEC in this paper proposes an understanding not as a result depending on the details of a particular Hamiltonian, but as a consequence of the physical requirements imposed on equilibrium states.

This standpoint is also consistent with the self-consistent definition of phonons in \cite{VIYukalov001}. Namely, since phonons are defined as small-amplitude fluctuations from the background, their macroscopic component is absorbed into the background and does not remain as a phonon degree of freedom. In this paper, we translate this structure as a condition on the side of states, expressing the mildness of the equilibrium state through cluster properties. In particular, the no-go theorem for BEC of quasiparticles with the original linear dispersion has been formulated in a form equivalent to the selection of the class of physically admissible equilibrium states.

\section{Finite temperature and equilibrium states on the resolvent algebra}\label{expedition0012002}

The resolvent algebra was defined in Section \ref{expedition0012083}. Here we begin with the definition of the van Hove model on the resolvent algebra. We then appropriately transplant the bounded-system, finite-temperature setup from Section \ref{expedition0011587}, where bounded systems and finite temperature on the Weyl algebra were discussed, and in particular investigate the ideal structure of the resolvent algebra following the approach of \cite{DetlevBuchholz001,YoshitsuguSekine004,YoshitsuguSekine005}.

\subsection{Formulation of the van Hove model and basic arguments}\label{formulation-of-the-van-hove-model-and-basic-arguments}

We define each object based on the discussion on the Weyl algebra.

For any \(t
\in \fldreal\), we define the \(\ast\)-endomorphism of the abstract resolvent algebra by \begin{equation}
\begin{aligned}
\alpha_{1,\kappa,\Lambda,t}(\oaresolvent(z,f))
&=
\fun{\oaresolvent}{z + \imunit \mathsf{M}_{\kappa,\Lambda,t}(f), f}, \\
\alpha_{2,t}(\oaresolvent(z,f))
&=
\fun{\oaresolvent}{z, \napiernum^{\imunit t \omega} f}, \\
\alpha_{\kappa,\Lambda,t}(\oaresolvent(z,f))
&=
\funrbk{\alpha_{1,\kappa,\Lambda,t} \circ \alpha_{2,t}}{\oaresolvent(z,f)} \\
&=
\fun{\oaresolvent}{z + \imunit \mathsf{M}_{\kappa,\Lambda,t}(f), \napiernum^{\imunit t \omega} f}
\end{aligned}
\end{equation} and call it the automorphism group of the van Hove model with infrared and ultraviolet cutoffs, or simply the automorphism group of the van Hove model with cutoffs or the automorphism group of the van Hove model.

The maps \(\alpha_{t}\) and \(\alpha_{1,t}\) without cutoffs are defined as \(\ast\)-endomorphisms of the abstract resolvent algebra \(\oaresolventalgebra(\dom \mathsf{m}, \sigma)\), in a form arising from the domain of the functional \(\mathsf{m}\): \begin{equation}
\begin{aligned}
\alpha_{1,t}(\oaresolvent(z,f))
&=
\fun{\oaresolvent}{z + \imunit \mathsf{M}_{t}(f), f}, \\
\alpha_{t}(\oaresolvent(z,f))
&=
\funrbk{\alpha_{1,t} \circ \alpha_{2,t}}{\oaresolvent(z,f)} \\
&=
\fun{\oaresolvent}{z + \imunit \mathsf{M}_{t}(f), \napiernum^{\imunit t \omega} f}
\end{aligned}
\end{equation} and call it the automorphism group of the van Hove model, or the automorphism group of the van Hove model without cutoffs.

\begin{prop}\label{expedition0012084}
The above $\alpha_{\kappa,\Lambda}$ is an automorphism group.
In particular, even when the domain is restricted to $\oaresolventalgebra(\dom \mathsf{m},\sigma)$, $\alpha_{\kappa,\Lambda}$ is an automorphism group on it.
With an appropriate restriction of the domain, $\alpha$ is an automorphism group on the restricted subalgebra.
\end{prop}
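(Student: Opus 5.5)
Each $\alpha_{\kappa,\Lambda,t}$ is a $\ast$-endomorphism of the abstract resolvent algebra, and I will establish this on the generators. Because $\oaresolventalgebra_0$ is the universal unital $\ast$-algebra subject to the resolvent relations, it suffices to check that the proposed images
\[
\oaresolvent'(z,f) = \fun{\oaresolvent}{z + \imunit \mathsf{M}_{\kappa,\Lambda,t}(f), \napiernum^{\imunit t \omega} f}
\]
again satisfy all the relations (extended to complex $z$). The checks split according to the two factors. For $\alpha_{2,t}$, unitarity of $\napiernum^{\imunit t\omega}$ preserves $\sigma$ and linearity, so every relation is immediate. For $\alpha_{1,\kappa,\Lambda,t}$, the key facts are that $\mathsf{M}_{\kappa,\Lambda,t}$ is real-linear and real-valued, and that the shift $z \mapsto z + \imunit \mathsf{M}(f)$ is the resolvent-algebra counterpart of multiplying $\opfockweyl(f)$ by the phase $\napiernum^{\imunit \mathsf{M}(f)}$ in Proposition \ref{expedition0011094}. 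I will verify the relations one by one.
\begin{enumerate}
\item $\oaresolvent(z,0)$ is fixed because $\mathsf{M}(0)=0$.
\item The adjoint relation holds because $-\overline{z+\imunit \mathsf{M}(f)} = -\bar z + \imunit \mathsf{M}(f)$.
\item The scaling relation holds because $\nu z + \imunit \mathsf{M}(\nu f) = \nu(z + \imunit \mathsf{M}(f))$.
\item The resolvent identity is preserved because differences of first arguments are unchanged.
\item The commutator relation is preserved because $\sigma$ is untouched.
\item In the product relation, the first arguments shift additively, by $\imunit \mathsf{M}(f) + \imunit \mathsf{M}(g) = \imunit \mathsf{M}(f+g)$.
\end{enumerate}
One subtlety is that after the shift, $z + \imunit\mathsf{M}(f)$ may leave the region $\fldcmp \setminus \imunit\fldreal$ on which the complexified relations are stated. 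Since $\opreal(z + \imunit \mathsf{M}(f)) = \opreal z$, this does not happen. Continuity to the $\oacstar$-completion then follows from the universal property of the $\oacstar$-norm of \cite{BuchholzGrundling2}, since any $\ast$-homomorphism out of $\oaresolventalgebra_0$ into a $\oacstar$-algebra is contractive for that norm.

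Next I will establish the group law. Composing,
\[
\alpha_{\kappa,\Lambda,t}\circ\alpha_{\kappa,\Lambda,s}(\oaresolvent(z,f)) = \fun{\oaresolvent}{z + \imunit \mathsf{M}_{s}(f) + \imunit \mathsf{M}_{t}(\napiernum^{\imunit s\omega} f),\ \napiernum^{\imunit (t+s)\omega} f}.
\]
By the cocycle identity of Proposition \ref{expedition0011238}, the first argument equals $z + \imunit\mathsf{M}_{t+s}(f)$, which gives $\alpha_{t+s}$. The map $\alpha_0$ is the identity because $\mathsf{M}_0 = 0$. Hence each $\alpha_t$ is invertible with inverse $\alpha_{-t}$, so it is an automorphism.

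For the restriction statements, I will show that $\alpha_{\kappa,\Lambda,t}$ maps $\oaresolventalgebra(\dom\mathsf{m},\sigma)$ into itself. This requires $\napiernum^{\imunit t\omega}\dom\mathsf{m} \subset \dom\mathsf{m}$, which holds because $\dom\mathsf{m}$ is defined by an integrability condition on $\abs{\faftr f}$ that is invariant under multiplication by a unimodular phase. By Proposition \ref{expedition0011838}(1), $\oaresolventalgebra(\dom\mathsf{m},\sigma)$ sits isometrically inside the full algebra, so the restriction is again an automorphism group.

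For $\alpha$ without cutoffs, the same algebraic verification applies verbatim on $\oaresolventalgebra(\dom \mathsf{m},\sigma)$. The one extra point, and the step I expect to be the main obstacle, is that $\mathsf{M}_t(f)$ must be finite. This follows because $(\napiernum^{\imunit t\omega}-1)f$ lies in $\dom\mathsf{m}$ whenever $f$ does, by the phase invariance above and the fact that $\dom\mathsf{m}$ is a linear subspace. The "appropriate restriction" in the statement is then exactly $\dom\mathsf{m}$, or $\sphilb{D}_{\txtirsingular,\sminvtemperature}$ if one also wants to respect the state. Here $\sphilb{D}_{\txtirsingular,\sminvtemperature}$ is also invariant, because $K_{\sminvtemperature}$ commutes with $\napiernum^{\imunit t\omega}$.
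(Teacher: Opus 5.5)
Your proposal is correct and follows the same route as the paper's proof. Like the paper, you check that the resolvent relations are preserved on generators, deduce the group law from the cocycle identity of Proposition \ref{expedition0011238}, and obtain the restricted statements from the invariance of $\dom \mathsf{m}$ under the unimodular phase $\napiernum^{\imunit t \omega}$. Your write-up is more careful than the paper's sketch. Your observation $\opreal \rbk{z + \imunit \mathsf{M}(f)} = \opreal z$ shows that the paper's caveat about the first argument vanishing is automatically satisfied. You also explicitly verify that $\mathsf{M}_t(f)$ is finite on $\dom \mathsf{m}$ in the cutoff-free case.
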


\begin{proof}
Since the treatment itself does not change with only the restriction of the domain, we argue using the notation with cutoffs.
By Proposition \ref{expedition0011238},
the auxiliary functional $\mathsf{M}_{\kappa,\Lambda,t}$ is a cocycle.

(Automorphism property): Under the assumption that the first argument of the resolvent function does not become $0$,
it suffices to directly verify the preservation of the resolvent relations.

(Restricted automorphism property): Choose $f
\in \dom \mathsf{m}$ arbitrarily.
Then $\alpha_{\kappa,\Lambda,t}(\oaresolvent(z,f))
=
\oaresolvent(z + \imunit \mathsf{M}_{\kappa,\Lambda}(f), \napiernum^{\imunit t \omega} f)$ is well-defined.
In particular, $\napiernum^{\imunit t \omega} f$ satisfies $\abs{\napiernum^{\imunit t \omega} f}
= \abs{f}$, so integrability is preserved and $\dom \mathsf{m}$ is left invariant.
Therefore $\fnrestr{\alpha_{\kappa,\Lambda}}{\oaresolventalgebra(\dom \mathsf{m},\sigma)}$ is an automorphism group on $\oaresolventalgebra(\dom \mathsf{m},\sigma)$.

(Group property): It suffices to compute directly using the cocycle property of the auxiliary functional $\mathsf{M}_{\kappa,\Lambda,t}$.
\end{proof}

\begin{rem}
Here we defined the van Hove model via the automorphism group.
By examining the derivative or generator of the automorphism group, one can indeed confirm that the van Hove model is defined.
Details will be discussed in the subsequent detailed analysis of the van Hove model.
\end{rem}

Based on Proposition \ref{expedition0011614}, we define the state \(\oastate[\psi_{\txtvanhowe,I_{L}^{3},\sminvtemperature,\smchemicalpotential}]\) as the quasi-free state whose two-point function in the bounded system satisfies \begin{equation}
\begin{aligned}
&\fun{\oastate[\psi_{\txtvanhowe,I_{L}^{3},\sminvtemperature,\smchemicalpotential}]}
{\oaresolvent(\lambda,f)
\oaresolvent(\mu,g)}
\\ 
&=
\imunit
\int_0^{(\sgn \lambda) \infty}
\int_0^{(\sgn \mu) \infty}
\napiernum^{-\lambda s - \mu t}
\mathsf{T}_{\sminvtemperature,\smchemicalpotential,\kappa,\Lambda}(s,f;t,g)
\opdmsr{s}
\opdmsr{t},
\\ 
&\mathsf{T}_{\sminvtemperature,\smchemicalpotential,\kappa,\Lambda}(s,f;t,g)
\\
&=
\fnexp{-\frac{\opform{q}_{\txtnonzero,\smchemicalpotential}(sf+tg)}{4}
+\imunit \opreal \mathsf{m}_{\kappa,\Lambda}(sf+tg)
-\frac{\imunit}{2} st \opimag \bkt{f}{g}}.
\end{aligned}
\end{equation}

\begin{prop}\label{expedition0011634}
The state $\oastate[\psi_{\txtvanhowe,I_{L}^{3},\sminvtemperature,\smchemicalpotential}]$ is a $\sminvtemperature$-KMS state for the automorphism group $\alpha_{\txtvanhowe,I_{L}^{3},\kappa,\Lambda,\smchemicalpotential}$ of Definition \ref{expedition0011278}.
\end{prop}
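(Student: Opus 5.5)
We transport the KMS property from the Weyl algebra to the resolvent algebra through the Laplace transform, instead of checking the KMS condition on resolvents directly. On the Weyl level, Proposition \ref{expedition0011578} together with Proposition \ref{expedition0011094} shows that the grand canonical (Gibbs) state $\oastate[\psi_{\txtgrandcanonical,\txtvanhowe,I_{L}^{3},\kappa,\Lambda,\sminvtemperature,\smchemicalpotential}]$ is a $\sminvtemperature$-KMS state for the dynamics $\alpha_{\txtvanhowe,I_{L}^{3},\kappa,\Lambda,t}(\opfockweyl(f))=\napiernum^{\imunit\mathsf{M}_{\kappa,\Lambda,t}(f)}\opfockweyl(\napiernum^{\imunit t\omega}f)$. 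This is standard for a bounded system: the heat operator is trace class by Proposition \ref{expedition0011287}, the dynamics is $\Ad\napiernum^{\imunit t\physham}$, and the KMS condition for Gibbs states follows from cyclicity of the trace.

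\textbf{Step 1: identify the two sides.} The Fock representation is regular, so $\oaresolvent(\lambda,f)\mapsto\opfnresolvent{\imunit\lambda-\opfocksegal_{\txtfock}(f)}$ defines a representation of the resolvent algebra. By Propositions \ref{expedition0011613} and \ref{expedition0011614}, the state $\oastate[\psi_{\txtvanhowe,I_{L}^{3},\sminvtemperature,\smchemicalpotential}]$, defined through its two-point function, coincides on products of resolvents with the Gibbs state composed with this representation. The same Laplace representation applies to arbitrary finite products of resolvents, using the Weyl relations repeatedly. Hence the two states agree on $\oaresolventalgebra_{\txtfin}$, and by continuity on the whole algebra.

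\textbf{Step 2: match the dynamics.} We must check that the automorphism $\alpha_{\kappa,\Lambda,t}$ on resolvents is the Laplace transform of the Weyl dynamics. Starting from
\[
\opfnresolvent{\imunit\lambda-\opfocksegal(f)}=\imunit\int_0^{(\sgn\lambda)\infty}\napiernum^{-\lambda u}\opfockweyl(-uf)\opdmsr{u},
\]
apply $\Ad\napiernum^{\imunit t\physham}$. This yields
\[
\imunit\int\napiernum^{-\lambda u}\napiernum^{-\imunit u\mathsf{M}_{\kappa,\Lambda,t}(f)}\opfockweyl(-u\napiernum^{\imunit t\omega}f)\opdmsr{u},
\]
which is the resolvent $\oaresolvent(\lambda+\imunit\mathsf{M}_{\kappa,\Lambda,t}(f),\napiernum^{\imunit t\omega}f)$ with the complexified first argument. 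The sign convention has to be checked against the definition of $\alpha_{1,\kappa,\Lambda,t}$. Once this holds, the abstract dynamics is implemented by the Hamiltonian unitary group in the Fock/Gibbs representation.

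\textbf{Step 3: conclude.} Since the state is a Gibbs state $\trace(\opdmat\,\cdot)$ for the implementing Hamiltonian, the KMS condition holds for all bounded operators, and in particular on the image of the resolvent algebra. Concretely, $F_{A,B}(z)=\trace(\opdmat A\,\napiernum^{\imunit z\physham}B\napiernum^{-\imunit z\physham})$ is analytic in the strip $0<\opimag z<\sminvtemperature$, bounded and continuous on its closure, with the correct boundary values. This follows from trace-class estimates of $\napiernum^{-s\physham}$ for $0\le s\le\sminvtemperature$.

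\textbf{Expected main obstacle.} The delicate step is Step 2. It requires justifying the interchange of the automorphism with the Bochner integral, which is fine in norm since the integrand is bounded by $\napiernum^{-\abs{\lambda}u}$. It also requires handling the complex shift of the first argument, where the analytic extension of the resolvent relations from the preliminaries must be used. The integral still converges because $\mathsf{M}_{\kappa,\Lambda,t}(f)$ is real, so $\opreal(z+\imunit\mathsf{M})=\lambda\neq0$. A secondary point is that the KMS state is only specified via two-point functions. This is resolved by noting that the state is quasi-free, so its values on all finite products are determined by the Gaussian Weyl formula of Proposition \ref{expedition0011578}, which is what Step 1 uses.
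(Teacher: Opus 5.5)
Your proposal is correct and follows the same route as the paper. The paper likewise argues that $\mathsf{T}_{\sminvtemperature,\smchemicalpotential,\kappa,\Lambda}$ is determined by the Gibbs trace state on the Weyl algebra and deduces the KMS property from the KMS property of the trace, using the Bratteli--Robinson criterion that the KMS condition need only be checked on a dense $\ast$-subalgebra; your Steps 1--3 spell out this argument, with extension by continuity and KMS of the Gibbs state on all bounded Fock-space operators taking the place of that criterion.
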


\begin{proof}
The function $\mathsf{T}_{\sminvtemperature,\smchemicalpotential,\kappa,\Lambda}(s,f;t,g)$ is determined by the $\sminvtemperature$-KMS state on the Weyl algebra,
and in particular is determined by the trace.
The rest follows from the KMS property of the trace and the KMS condition on a dense $\ast$-algebra \cite{BratteliRobinson2}.
\end{proof}

\begin{prop}
Let the chemical potential be $\smchemicalpotential
= 0$.
Then the infinite volume limit $\oastate[\psi_{\txtvanhowe,\sminvtemperature,\smchemicalpotential}]$ as a state on $\oaresolventalgebra_{\txtirsingular,\sminvtemperature}
=
\oaresolventalgebra(\sphilb{D}_{\txtirsingular,\sminvtemperature})$ exists for the automorphism group $\alpha_{\kappa,\Lambda,\smchemicalpotential}$ of Definition \ref{expedition0011278}.
The infinite volume limit is also a $\sminvtemperature$-KMS state for the automorphism group of the infinite system.
\end{prop}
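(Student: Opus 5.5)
The proof reduces the resolvent-algebra statement to the Weyl-algebra results already obtained, namely Theorem \ref{expedition0011612} and Theorem \ref{expedition0011637}, through the Laplace-transform representation of Proposition \ref{expedition0011613}. I take the state on \(\oaresolventalgebra_{\txtirsingular,\sminvtemperature}\) to be determined by its values on finite products \(\prod_j \oaresolvent(\lambda_j,f_j)\) with \(f_j \in \sphilb{D}_{\txtirsingular,\sminvtemperature}\). These products span the dense \(\ast\)-subalgebra \(\oaresolventalgebra_{\txtfin}(\sphilb{D}_{\txtirsingular,\sminvtemperature},\sigma)\). Using the Weyl relations exactly as in Proposition \ref{expedition0011614}, I write each such expectation as an iterated Laplace integral
\[
\imunit^n \int \cdots \int \napiernum^{-\sum_j \lambda_j t_j}\,
\oastate[\psi_{\txtvanhowe,I_{L}^{3},\sminvtemperature,\smchemicalpotential}]\bigl(\opfockweyl(-\textstyle\sum_j t_j f_j)\bigr)\,
\napiernum^{\text{phase}}\,\opdmsr{t_1}\cdots\opdmsr{t_n}.
\]
The phase is \(-\frac{\imunit}{2}\sum_{j<k} t_j t_k \opimag\bkt{f_j}{f_k}\).

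The key steps are as follows. First, at fixed \(\kappa,\Lambda\), I use the known infinite-volume limit of the Weyl expectation from Theorem \ref{expedition0011612}. This gives pointwise convergence of the integrand to \(\fnexp{-\imunit\opreal\mathsf{m}_{\kappa,\Lambda}(g) - \frac14\opform{q}_{\txtnonzero}(g) - \frac14\opform{q}_0(g)}\), where \(g = -\sum_j t_j f_j\). Second, I remove the cutoffs using part 2 of Theorem \ref{expedition0011637}. This is legitimate because \(f_j \in \dom\mathsf{m}\), so \(\opreal\mathsf{m}_{\kappa,\Lambda}(g) \to \opreal\mathsf{m}(g)\). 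Third, the integrand is bounded in modulus by \(\napiernum^{-\sum_j \abs{\lambda_j}\abs{t_j}}\), uniformly in \(L,\kappa,\Lambda\), because every Weyl expectation has modulus at most \(1\). Dominated convergence therefore lets me pass every limit inside the integrals, so the limit functional exists on \(\oaresolventalgebra_{\txtfin}\). Positivity and normalization survive pointwise limits. The norm bound \(\abs{\psi(A)} \le \norm{A}\) then extends the functional uniquely to a state on the \(\oacstar\)-completion.

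For the KMS property, the finite-volume state is \(\sminvtemperature\)-KMS by Proposition \ref{expedition0011634}. I would prove that the limit is KMS via the standard closedness argument, as in \cite{BratteliRobinson2}, Proposition 5.3.25. This needs two inputs. The first is the weak\(^\ast\) convergence of the states, already established. The second is the convergence \(\alpha_{L,\kappa,\Lambda,t}(A) \to \alpha_t(A)\) in norm, uniformly for \(t\) in compacts, for \(A\) in the dense subalgebra. For generators, \(\alpha_t(\oaresolvent(z,f)) = \oaresolvent(z + \imunit\mathsf{M}_t(f), \napiernum^{\imunit t\omega} f)\). Two facts handle this. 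The resolvent identity gives
\[
\norm{\oaresolvent(z,f) - \oaresolvent(w,f)} \le \abs{z-w}\,\norm{\oaresolvent(z,f)}\,\norm{\oaresolvent(w,f)},
\]
which controls the shift in \(\mathsf{M}_t\), since that shift converges for \(f \in \dom\mathsf{m}\). The finite-volume dispersion converges strongly to the infinite-volume one.

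The main obstacle is the change of the second argument from \(\napiernum^{\imunit t\omega_L} f\) to \(\napiernum^{\imunit t\omega} f\). The map \(f \mapsto \oaresolvent(\lambda,f)\) is \emph{not} norm continuous in the resolvent algebra. Norm convergence of the dynamics therefore fails, and the KMS condition has to be verified differently. I would work directly with the two-point functions: check that \(F(t) = \psi(A\,\alpha_t(B))\) for products of resolvents extends analytically to the strip \(0 < \opimag t < \sminvtemperature\) with the correct boundary value. I would get this from the explicit Gaussian form of the Weyl expectation, using that \(\opform{q}_0\) is invariant under \(\napiernum^{\imunit t\omega}\) since \(\omega(0) = 0\), and then integrate against the Laplace kernels. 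This is the one step where I expect real work. It amounts to transplanting the free-Bose-gas KMS verification of \cite{YoshitsuguSekine004} and tracking the extra phase \(\mathsf{M}_t\), which is linear and so harmless.
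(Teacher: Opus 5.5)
Your proposal is correct and is essentially the paper's argument made explicit. The paper's one-line proof also builds the resolvent-algebra state from the infinite-volume Weyl-algebra KMS state of Theorems \ref{expedition0011612} and \ref{expedition0011637} through the Laplace representation (the function $\mathsf{T}$), and inherits the KMS property from it on a dense subalgebra; your dominated-convergence step, and your correct remark that the finite-volume dynamics does not converge in norm on the resolvent algebra, supply what that proof leaves implicit. The step you flag as real work is short: the phase $\napiernum^{\imunit \mathsf{M}_t(g)}\,\napiernum^{-\imunit \opreal \mathsf{m}(f+\napiernum^{\imunit t\omega}g)} = \napiernum^{-\imunit \opreal \mathsf{m}(f+g)}$ does not depend on $t$, so the Weyl-level KMS function is a constant phase times the free-Bose-gas one (hence bounded by $1$ on the strip), and integrating it against the Laplace kernels, which are dominated by $\napiernum^{-\sum_j \abs{\lambda_j}\abs{t_j}}$, gives the required analytic function for products of resolvents.
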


\begin{proof}
This follows by the same argument as Proposition \ref{expedition0011634}.
\end{proof}

We summarize the situation in which BEC can occur.

\begin{cor}\label{expedition0012086}
The $\sminvtemperature$-KMS state $\oastate[\psi_{\txtvanhowe,\sminvtemperature}]$ obtained in the infinite volume limit is a quasi-free state,
and the two-point function is
\begin{equation}
\begin{aligned}
\fun{\oastate[\psi_{\txtvanhowe,\sminvtemperature}]}
{\oaresolvent(\lambda,f)
\oaresolvent(\mu,g)}
&=
\imunit
\int_0^{(\sgn \lambda) \infty}
\int_0^{(\sgn \mu) \infty}
\napiernum^{-\lambda s - \mu t}
\mathsf{T}_{\sminvtemperature,\kappa,\Lambda}(s,f;t,g)
\opdmsr{s}
\opdmsr{t},
\\ 
\mathsf{T}_{\sminvtemperature,\kappa,\Lambda}(s,f;t,g)
&=
\fnexp{\imunit \opreal \mathsf{m}_{\kappa,\Lambda}(sf+tg)
-\frac{\imunit}{2} st \opimag \bkt{f}{g}}
\\
&\qquad\times
\fnexp{-\frac{\opform{q}_{\txtnonzero}(sf+tg)}{4}
-\frac{\opform{q}_0(sf+tg)}{4}}.
\end{aligned}
\end{equation}

\end{cor}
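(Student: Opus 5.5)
The two-point function in the corollary follows from the Weyl-operator expectation in Theorem \ref{expedition0011637}, transported to the resolvent algebra by the Laplace transform. Quasi-freeness follows from the Gaussian form of that expectation.

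First I rewrite each resolvent as a Weyl integral. As in the proof of Proposition \ref{expedition0011613}, in any regular representation
\[
\oaresolvent(\lambda,f)=\imunit\int_0^{(\sgn\lambda)\infty}\napiernum^{-\lambda s}\opfockweyl(-sf)\,\opdmsr{s},
\]
the integral converging in norm since $\norm{\opfockweyl(-sf)}=1$. The product $\oaresolvent(\lambda,f)\oaresolvent(\mu,g)$ is then a double Bochner integral of $\napiernum^{-\lambda s-\mu t}\opfockweyl(-sf)\opfockweyl(-tg)$. The Weyl relations turn the integrand into $\napiernum^{-\frac{\imunit}{2}st\opimag\bkt{f}{g}}\opfockweyl(-(sf+tg))$.

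Second, I apply the state. I take $\oastate[\psi_{\txtvanhowe,\sminvtemperature}]$ to be the limit state, realized in its GNS representation, and pass it through the integral by continuity and boundedness. Theorem \ref{expedition0011637} gives the integrand
\[
\fnexp{\imunit\opreal\mathsf{m}(sf+tg)-\tfrac14\opform{q}_{\txtnonzero}(sf+tg)-\tfrac14\opform{q}_0(sf+tg)},
\]
using that the forms are quadratic, so $\opform{q}(-h)=\opform{q}(h)$. This is exactly $\mathsf{T}_{\sminvtemperature,\kappa,\Lambda}$ of the corollary, with $\mathsf{m}_{\kappa,\Lambda}$ understood as its limit $\mathsf{m}$ on $\sphilb{D}_{\txtirsingular,\sminvtemperature}$.

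Alternatively, the formula can be reached by passing to the limit directly in the bounded-system formula of Proposition \ref{expedition0011614}. The infinite-volume limit turns $\opform{q}_{\txtnonzero,\smchemicalpotential}$ into $\opform{q}_{\txtnonzero}+\opform{q}_0$, by the free Bose gas results of \cite{AsaoArai28,YoshitsuguSekine004} as invoked in Theorem \ref{expedition0011612}. Dominated convergence is justified because the integrand has modulus at most $\napiernum^{-\lambda s-\mu t}$, which is integrable on the relevant quarter-plane.

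Third, I establish quasi-freeness. The map $h\mapsto\oastate(\opfockweyl(h))$ is a Gaussian: a linear phase times $\fnexp{-\frac14\opform{q}_{\txtbec}(h)}$, with $\opform{q}_{\txtbec}$ a nonnegative quadratic form. So the state on the Weyl algebra is quasi-free, up to the translation by the character $\napiernum^{-\imunit\opreal\mathsf{m}}$, in the sense of \cite{BratteliRobinson2}. Truncated $n$-point functions of the fields vanish for $n\geq3$. By the same Laplace representation, every finite product of resolvents is determined by products of Weyl operators, so the state on $\oaresolventalgebra_{\txtfin}$ is determined by its Weyl data. The KMS property transfers from Theorem \ref{expedition0011637}.

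The main obstacle is the interchange of limits, in volume and cutoffs, with the Laplace integrals. The point is that $\opform{q}_0$ only emerges in the infinite-volume limit and is defined only on $\lp^1\cap\lp^2$. I would handle this by restricting throughout to $f,g\in\sphilb{D}_{\txtirsingular,\sminvtemperature}$, which makes $\mathsf{m}$ finite and $\opform{q}_0$ well defined. On this domain the pointwise convergence of $\mathsf{T}$ together with the uniform bound $\abs{\mathsf{T}}\leq1$ gives convergence by dominated convergence. Since the statement is about the limit, I would read its subscripts $\kappa,\Lambda$ as removed.
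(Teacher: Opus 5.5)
Your route is the same as the paper's. The paper's proof only says that the corollary follows from the Weyl-algebra expectation of Theorem \ref{expedition0011637}. Your Laplace representation of each resolvent, the Weyl relation $W(-sf)W(-tg)=e^{-\frac{i}{2}st\operatorname{Im}\langle f,g\rangle}W(-(sf+tg))$, and the evaluation of the state on $W(-(sf+tg))$ supply exactly the bookkeeping that reference leaves implicit, and the integrand does come out as $\mathsf{T}$. If you keep the cutoffs, as the displayed $\mathsf{m}_{\kappa,\Lambda}$ suggests, the identical computation works with Theorem \ref{expedition0011612} in place of Theorem \ref{expedition0011637}.

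What you should not claim is an exact match of the whole formula, because the scalar prefactor does not come out as displayed. Writing $R(\lambda,f)R(\mu,g)$ as a double integral multiplies the prefactors of the two single-resolvent integrals. With the representation of Proposition \ref{expedition0011613}, that gives $i\cdot i=-1$ in front, not $i$. That single-resolvent formula also has a sign slip, which you can see by testing $f=0$ against $R(\lambda,0)=-i/\lambda$. The correct version is $R(\lambda,f)=-i\int_0^{(\operatorname{sgn}\lambda)\infty}e^{-\lambda s}W(-sf)\,ds$, and the product again carries $(-i)^2=-1$.

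The case $f=g=0$ settles the matter. The left side is $R(\lambda,0)R(\mu,0)=-1/(\lambda\mu)$, while the displayed right side equals $i/(\lambda\mu)$. So your argument proves the corollary with $-1$ in place of the prefactor $i$. That prefactor is a typo inherited from Proposition \ref{expedition0011614}, and you should say so explicitly rather than assert agreement.

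There is also a smaller technical correction. In a regular representation, $s\mapsto W(-sf)$ is in general only strongly continuous, not norm continuous. The Laplace integral is therefore an absolutely convergent strong integral, not a norm Bochner integral. That is still enough to move the GNS vector state inside the integral, once you note that the GNS representation is regular because $s\mapsto\omega(W(sf))$ is continuous by the Gaussian formula.
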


\begin{proof}
This follows from the arguments for the Weyl algebra,
in particular by Theorem \ref{expedition0011637}.
\end{proof}

\begin{rem}
The integrand $\mathsf{T}_{\sminvtemperature,\kappa,\Lambda}$ is essentially the expectation value on the Weyl algebra.
\end{rem}

As with the Weyl algebra, we remove the infrared and ultraviolet cutoffs.

\begin{thm}
\begin{enumerate}
\item
Convergence of the automorphism group: For any $A
\in \oaresolventalgebra_{\txtirsingular,\sminvtemperature}$ and $t
\in \fldreal$, the limit $$\alpha_{\txtvanhowe,t}(A)
=
\lim_{\kappa \to 0, \Lambda \to \infty}
\alpha_{\txtvanhowe,t,\kappa,\Lambda}(A)$$exists in the strong topology of $\oaresolventalgebra_{\txtirsingular,\sminvtemperature}$,
and the family $\fml{\alpha_{\txtvanhowe,t}}{t \in \fldreal}$ gives a strongly continuous one-parameter automorphism group on $\oaresolventalgebra_{\txtirsingular,\sminvtemperature}$.

\item
Convergence of the KMS state: For any $A
\in \oaresolventalgebra_{\txtirsingular,\sminvtemperature}$, there exists a state $\oastate[\psi_{\txtvanhowe,\sminvtemperature,\kappa,\Lambda}]$ on $\oaresolventalgebra_{\txtirsingular,\sminvtemperature}$ satisfying $$\oastate[\psi_{\txtvanhowe,\sminvtemperature}](A)
=
\lim_{\kappa \to 0,\Lambda \to \infty}
\oastate[\psi_{\txtvanhowe,\sminvtemperature,\kappa,\Lambda}](A).$$
In particular, for the generators of the resolvent algebra, $$\oastate[\psi_{\txtvanhowe,\sminvtemperature}]
(\oaresolvent(\lambda,f))
=
\lim_{\kappa \to 0,\Lambda \to \infty}
\oastate[\psi_{\txtvanhowe,\sminvtemperature,\kappa,\Lambda}]
(\oaresolvent(\lambda,f))
\quad
f \in \sphilb{D}_{\txtirsingular,\sminvtemperature}$$holds,
and the right-hand side is obtained as the natural limit of the expression in Theorem \ref{expedition0011612}.

\item
The state $\oastate[\psi_{\txtvanhowe,\sminvtemperature}]$ is a $\sminvtemperature$-KMS state for the automorphism group $\alpha_{\txtvanhowe}$.
\end{enumerate}
\end{thm}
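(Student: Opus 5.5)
The plan is to mirror the proof of Theorem \ref{expedition0011637}, transporting the statements from the Weyl algebra to the resolvent algebra through the Laplace-transform representation used in Proposition \ref{expedition0011613}. Since every object of interest is determined by its values on generators, the work reduces to statements about $\oaresolvent(\lambda,f)$ with $f \in \sphilb{D}_{\txtirsingular,\sminvtemperature}$. Density and the bound $\norm{\oaresolvent(\lambda,f)} = 1/\abs{\lambda}$ then carry each statement to finite products, to $\oaresolventalgebra_{\txtfin}$, and finally to all of $\oaresolventalgebra_{\txtirsingular,\sminvtemperature}$.

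\emph{Convergence of the automorphism group.} By definition,
\[
\alpha_{\txtvanhowe,t,\kappa,\Lambda}(\oaresolvent(\lambda,f)) = \oaresolvent\bigl(\lambda + \imunit \mathsf{M}_{\kappa,\Lambda,t}(f), \napiernum^{\imunit t \omega} f\bigr).
\]
For $f \in \dom \mathsf{m}$ we have $\mathsf{M}_{\kappa,\Lambda,t}(f) \to \mathsf{M}_t(f)$ by dominated convergence, since $\abs{\faftr{f}}\abs{k}^{-3/2}$ is integrable. The first step is a resolvent-identity estimate in the first argument, valid for $\opreal z, \opreal w \neq 0$:
\[
\norm{\oaresolvent(z,g) - \oaresolvent(w,g)} \le \abs{z-w}\,\norm{\oaresolvent(z,g)}\,\norm{\oaresolvent(w,g)}.
\]
The shift $\imunit\mathsf{M}$ is purely imaginary, so $\opreal z = \lambda$ is unchanged. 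This yields norm convergence on generators, and hence on the dense $\ast$-algebra by multiplicativity and uniform boundedness. Each $\alpha_{\txtvanhowe,t,\kappa,\Lambda}$ is isometric, so the convergence extends to the closure. The group property in the limit follows from the cocycle identity of Proposition \ref{expedition0011238}, exactly as in Proposition \ref{expedition0012084}.

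\emph{Main obstacle: strong continuity in $t$.} The map $t \mapsto \oaresolvent(\lambda, \napiernum^{\imunit t \omega} f)$ is not norm continuous in the resolvent algebra, because $\norm{\oaresolvent(\lambda,f)-\oaresolvent(\lambda,g)}$ does not tend to zero as $g \to f$. I would therefore interpret ``strong'' as continuity in the GNS representation of the limit state, or more generally in regular representations. There, for regular representations, continuity follows from the Laplace-transform formula for the resolvent in terms of Weyl operators together with strong continuity of the Weyl operators $\opfockweyl(\napiernum^{\imunit t \omega} f)$.

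\emph{Convergence of the KMS state and the KMS property.} On generators, I would use the formula of Corollary \ref{expedition0012086}. The integrand $\mathsf{T}$ converges pointwise, because $\opreal \mathsf{m}_{\kappa,\Lambda}(sf+tg) \to \opreal \mathsf{m}(sf+tg)$, and it is dominated by $\napiernum^{-\lambda s - \mu t}$ since the remaining factors have modulus at most $1$. Dominated convergence then gives the limit, and higher products are handled the same way through the quasi-free structure. The resulting functionals $\oastate[\psi_{\txtvanhowe,\sminvtemperature,\kappa,\Lambda}]$ are states of norm $1$, so convergence on a dense set, together with uniform boundedness, gives a limit state on the whole algebra. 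Finally, the KMS condition passes to the limit by the same argument as in Theorem \ref{expedition0011637}(iii): for generators $A,B$, the functions $F_{A,B}^{\kappa,\Lambda}$ converge on the boundary of the strip $0 \le \opimag z \le \sminvtemperature$ by the two convergences above. Uniform boundedness and the Phragmén--Lindelöf principle then give convergence of the analytic interpolants in the interior.
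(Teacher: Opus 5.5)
Your proposal follows the paper's route. The paper's proof only says that the argument of Theorem~\ref{expedition0011637} carries over: the cutoffs enter solely through $\mathsf{m}_{\kappa,\Lambda}$, convergence is ``clear'', and the KMS condition survives the limit. Your resolvent-identity estimate with the purely imaginary shift, dominated convergence for the states, and the boundary-value argument for the KMS functions supply exactly those steps. Your remark that $t \mapsto \alpha_{\txtvanhowe,t}(A)$ cannot be norm continuous on the resolvent algebra is correct, and the paper does not address it. However, you should claim the substitute continuity only in the GNS representation of $\oastate[\psi_{\txtvanhowe,\sminvtemperature}]$, where it follows from $\oastate[\psi_{\txtvanhowe,\sminvtemperature}](\opfockweyl((\napiernum^{\imunit t\omega}-1)f)) \to 1$ (note that $\opform{q}_0$ vanishes on $(\napiernum^{\imunit t\omega}-1)f$ because $\omega(0)=0$). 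It fails in arbitrary regular representations, since regularity only controls rays and composing with a shift by a discontinuous linear functional destroys continuity in $t$.
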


\begin{proof}
This can be shown in the same way as Theorem \ref{expedition0011637}.
\end{proof}

\subsection{No-go theorem for BEC and ideal structure}\label{no-go-theorem-for-bec-and-ideal-structure}

We reformulate the results on the Weyl algebra obtained in the Subsections \ref{expedition0012081} in terms of the ideal structure of the resolvent algebra. The degrees of freedom carrying the non-condensed and condensed components, together with the constraints on physical quantities imposed by infrared singularities, are described in a unified manner as algebraic structures.

Let \(X = \opformdomain(\opform{q}_{\txtnonzero})\) denote the domain of the quasi-bilinear form \(\opform{q}_{\txtnonzero}\) corresponding to the non-condensed component, which we take as the fundamental symplectic space, and let \(\oaresolventalgebra = \oaresolventalgebra(X, \sigma)\) denote the associated resolvent algebra. This algebra contains the full degrees of freedom at the stage where neither infrared singularities nor the presence of condensation are distinguished. The condensed component is given by the quasi-bilinear form \(\opform{q}_0\), and its domain \(X_0 = \opformdomain(\opform{q}_0)\) is the subspace carrying the degrees of freedom corresponding to BEC.

Due to the presence of infrared divergences, not every test function is physically admissible. The infrared singularity determines the domain \(\dom \mathsf{m} \subset X\) of a linear functional \(\mathsf{m}\), on which physical quantities are finitely defined. Accordingly, the space of physically admissible degrees of freedom is given by \(X_{\txtphys} = X \cap \dom \mathsf{m}\).

This restriction is realized by a closed two-sided ideal of the resolvent algebra. Specifically, we define the ideal corresponding to the infrared singular directions as \[\oaideal{J}_{\txtircapitalized} = \clos{\opideal \set{\oaresolvent(\lambda, f)}{\lambda \in
\fldreal^{\txtexcludezero}, f \in X \setminus X{\txtphys}}} \subset \oaresolventalgebra.\] The resolvent algebra of physical quantities is then given by \[\oaresolventalgebra_{\txtphys} = \setquot{\oaresolventalgebra}{\oaideal{J}_{\txtircapitalized}},\] and by the universality of the resolvent algebra, we have \(\oaresolventalgebra_{\txtphys}
\eqisom \oaresolventalgebra(X_{\txtphys}, \sigma)\). Similarly, the degrees of freedom corresponding to the condensed component are described by the closed two-sided ideal \[\oaideal{J}_0 = \clos{\opideal \set{\oaresolvent(\lambda, f)}{\lambda \in \fldreal^{\txtexcludezero}, f
\in X_0}} \subset \oaresolventalgebra.\]

With these preparations, the following theorem holds.

\begin{thm}
For any $s > 2$, let the dispersion relation be $\omega_s(k) = \abs{k}^s$. Then for every $f \in
X_{\txtphys}$, we have $\opform{q}_0(f) = 0$. Consequently, $X_0 \cap X_{\txtphys} = \setone{0}$, and the
condensed component vanishes on $X_{\txtphys}$. In particular, $\oaideal{J}_0 \subset
\oaideal{J}_{\txtircapitalized}$ holds.
\end{thm}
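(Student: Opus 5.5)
The plan is to reduce the statement to the infrared computation already carried out in the proof of Theorem \ref{expedition0011638}, and then read off the algebraic consequences from the definitions of $X_{\txtphys}$, $X_0$ and the two ideals. The proof has three steps, in this order: the pointwise vanishing of $\opform{q}_0$ on $X_{\txtphys}$, the triviality of $X_0 \cap X_{\txtphys}$, and the ideal inclusion $\oaideal{J}_0 \subset \oaideal{J}_{\txtircapitalized}$.

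\emph{Step 1: $\opform{q}_0$ vanishes on $X_{\txtphys}$.} Take $f \in X_{\txtphys}$ on which $\opform{q}_0$ is evaluated, so $f \in \opformdomain(\opform{q}_0) = \fun{\lp^{1}}{\fldreal^{3}} \cap \fun{\lp^{2}}{\fldreal^{3}}$ and $f \in \dom \mathsf{m}$. Because $f \in \lp^{1}$, the transform $\faftr{f}$ is continuous, so $\faftr{f}(0)$ is meaningful. Suppose $\faftr{f}(0) = c \neq 0$. Then on a small ball $\abs{k} \leq r$ we have $\abs{\faftr{f}(k)} \geq \abs{c}/2$. In polar coordinates the integrand of $\mathsf{m}(f)$ near the origin behaves like $\abs{c}\, \abs{k}^{2 - 3s/2}$ with respect to $\opdmsr{\abs{k}}$. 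For $s > 2$ the exponent is strictly less than $-1$, so the integral diverges. This contradicts $f \in \dom \mathsf{m}$, hence $\faftr{f}(0) = 0$. The definition $\opform{q}_0(f) = 2(2\pi)^3 \smnumberdensity_0(\sminvtemperature) \abs{\faftr{f}(0)}^2$ then gives $\opform{q}_0(f) = 0$. This is exactly the argument of Theorem \ref{expedition0011638}, made precise by using continuity of $\faftr{f}$ instead of the heuristic ``$\sim$'' estimate.

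\emph{Step 2: $X_0 \cap X_{\txtphys} = \setone{0}$.} I will read $X_0$ as the set of directions that actually carry the condensate, namely those $f$ with $\faftr{f}(0) \neq 0$, together with $0$. With this reading, Step 1 gives the triviality of the intersection immediately, so the condensed component vanishes on $X_{\txtphys}$. If $X_0$ were read literally as the full form domain $\lp^{1} \cap \lp^{2}$, the intersection would clearly not be trivial. Fixing this interpretation, which is consistent with the role of $\opform{q}_0$ as generating the center, is a bookkeeping point I would state explicitly at the start of the proof.

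\emph{Step 3: the ideal inclusion.} It suffices to show that each generator $\oaresolvent(\lambda,f)$ of $\oaideal{J}_0$ with $f \in X_0 \setminus \setone{0}$ lies in $\oaideal{J}_{\txtircapitalized}$; closedness and the two-sided ideal property of $\oaideal{J}_{\txtircapitalized}$ then give $\oaideal{J}_0 \subset \oaideal{J}_{\txtircapitalized}$. By Step 2, any such $f$ satisfies $f \in X \setminus X_{\txtphys}$, so $\oaresolvent(\lambda,f)$ is one of the generators of $\oaideal{J}_{\txtircapitalized}$.

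I expect the main obstacle to be the degenerate direction $f = 0$. Since $\oaresolvent(\lambda,0) = -\frac{\imunit}{\lambda}\idone$, allowing $f = 0$ among the generators of $\oaideal{J}_0$ would make $\oaideal{J}_0$ the whole algebra, and the inclusion would fail. I would resolve this by explicitly excluding $f = 0$ from the generating set of $\oaideal{J}_0$, in line with the reading of $X_0$ fixed in Step 2. Once this is done, the remaining analytic content is only the infrared divergence estimate of Step 1.
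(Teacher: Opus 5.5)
Your proposal is correct and follows the paper's route. The paper gives no separate proof of this statement and relies on the infrared argument of Theorem \ref{expedition0011638}; your Step 1 reproduces that argument more carefully, using continuity of $\widehat{f}$ for $f \in L^{1}$ and the weight $\omega^{-3/2}$ of Definition \ref{expedition0011100}, so the radial exponent $2 - \frac{3s}{2}$ really is $< -1$ for $s > 2$ (the weight $\left| k \right|^{-s}$ written in the paper's proof would still allow $\widehat{f}(0) \neq 0$ for $2 < s < 3$).

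Your two caveats are also right, and the paper does not address them. Read literally with $X_0 = Q(\mathsf{q}_0)$, the intersection $X_0 \cap X_{\mathrm{phys}}$ contains every $f$ whose Fourier transform is smooth and supported in an annulus away from the origin, and $f = 0$ would put the identity into $\mathcal{J}_0$. Restricting the generators of $\mathcal{J}_0$ to $f \in X_0 \cap X$ with $\widehat{f}(0) \neq 0$ is therefore exactly the reading under which $\mathcal{J}_0 \subset \mathcal{J}_{\mathrm{IR}}$ holds. You should state $f \in X$ explicitly, since only such $f$ label elements of $\mathcal{R}(X,\sigma)$.
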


This theorem shows that the degrees of freedom corresponding to BEC are entirely contained within the ideal determined by the infrared singularity. In particular, the resolvent algebra of physical quantities is \(\oaresolventalgebra_{\txtphys} =
\setquot{\oaresolventalgebra}{\oaideal{J}_{\txtircapitalized}}\), and on this quotient algebra \(\oaideal{J}_0 = 0\) holds.

Thus, within the framework of ideal theory for the resolvent algebra, the absence of Bose--Einstein condensation in the equilibrium state of quasi-particles is described as the elimination of the degrees of freedom carrying the condensed component by the ideal corresponding to the infrared singularity.

\bibliography{myref.bib}

\end{document}

%% file: mycommands.tex
\makeatletter
\providecommand*{\dashv}{\mathrel{\mathpalette\@Dashv\vDash}}
\newcommand*{\@dashv}[2]{\reflectbox{$\m@th#1#2$}}
\makeatother
\newcommand{\abs}[1]{\left| #1 \right|} 
\NewDocumentCommand{\weaknorm}{O{\dbk} m}{#1{#2}} 
\newcommand{\norm}[1]{\left\Vert #1 \right\Vert} 

\newcommand{\bkt}[2]{\left\langle #1,\,#2 \right\rangle} 
\newcommand{\rbkt}[2]{\left( #1,\,#2 \right)} 
\newcommand{\isomto}{\mathrel{\rightarrowtail\kern-1.9ex\twoheadrightarrow}} 
\newcommand{\cbk}[1]{\left\{ #1 \right\}} 
\newcommand{\dbk}[1]{\left\langle #1 \right\rangle} 
\newcommand{\pairbk}[1]{\rbk{#1}} 
\newcommand{\rbkleft}[1]{\left( #1 \right.} 
\newcommand{\rbkright}[1]{\left. #1 \right)} 
\newcommand{\rbk}[1]{\left( #1 \right)} 
\newcommand{\sqbk}[1]{\left[ #1 \right]} 
\newcommand{\funrbk}[2]{\fun{\rbk{#1}}{#2}} 
\newcommand{\fun}[2]{#1 \rbk{#2}} 
\newcommand{\sqfun}[2]{#1 \sqbk{#2}} 
\newcommand{\closedinterval}[2]{\sqbk{#1,\,#2}} 
\newcommand{\commutator}[2]{\sqbk{#1,\,#2}} 

\DeclareMathOperator{\sgn}{sgn} 
\NewDocumentCommand{\imunit}{O{\mathsf{i}}}{#1} 
\NewDocumentCommand{\placeholder}{O{\bullet}}{#1} 
\NewDocumentCommand{\trace}{O{\operatorname{Tr}}}{#1} 
\newcommand{\Ad}{\operatorname{Ad}} 
\newcommand{\Ker}{\operatorname{Ker}} 
\newcommand{\bigmiddleslash}[2]{\left. #1 \middle/ #2 \right.} 
\newcommand{\clos}[1]{\overline{#1}} 
\newcommand{\cmpconj}[1]{\overline{#1}} 
\newcommand{\diracdelta}{\delta} 
\newcommand{\dom}{\operatorname{dom}} 
\newcommand{\eqcsq}[1]{\sqbk{#1}} 
\newcommand{\eqisom}{\cong} 
\newcommand{\fml}[2]{\cbk{#1}_{#2}} 
\newcommand{\hyphen}{\hbox{-}} 
\newcommand{\idone}{1} 
\newcommand{\invrbk}[1]{\rbk{#1}^{-1}} 
\newcommand{\inv}[1]{#1^{-1}} 
\newcommand{\napiernum}{\mathsf{e}} 
\newcommand{\onehalf}{\frac{1}{2}} 
\newcommand{\oneoverfour}{\frac{1}{4}} 
\newcommand{\opideal}{\operatorname{Ideal}} 
\newcommand{\opimag}{\operatorname{Im}} 
\newcommand{\opod}[1]{\frac{d}{d #1}} 
\newcommand{\opreal}{\operatorname{Re}} 
\newcommand{\setSymbolDownLeft}[2]{{\vphantom{#2}}_{#1}{#2}} 
\newcommand{\setSymbolUpLeft}[2]{{\vphantom{#2}}^{#1}{#2}} 
\DeclareMathOperator{\Rep}{Rep} 
\NewDocumentCommand{\agvariety}{O{\mathcal}}{#1} 
\NewDocumentCommand{\cmdrel}{O{\omega}}{#1} 
\NewDocumentCommand{\dfsp}{O{A}}{#1} 
\NewDocumentCommand{\eqcpointed}{O{\eqcsq} m}{#1{#2}_{\ast}} 
\NewDocumentCommand{\fnheaviside}{O{H}}{#1} 
\NewDocumentCommand{\fthol}{O{\mathcal{O}}}{#1} 
\NewDocumentCommand{\ftmero}{O{\mathcal{M}}}{#1} 
\NewDocumentCommand{\grcentralizer}{O{Z}}{#1} 
\NewDocumentCommand{\grmetform}{O{2} m m}{\grmet[#1] \! \rbkt{#2}{#3}} 
\NewDocumentCommand{\grmet}{O{2}}{\setSymbolDownLeft{#1}{g}} 
\NewDocumentCommand{\grnormalizer}{O{N}}{#1} 
\NewDocumentCommand{\gropasym}{O{A}}{#1} 
\NewDocumentCommand{\gropsym}{O{S}}{#1} 
\NewDocumentCommand{\grpermorderedpair}{O{\mathcal{P}}}{#1} 
\NewDocumentCommand{\grsym}{O{\mathfrak{S}} m}{#1_{#2}} 
\NewDocumentCommand{\gtbase}{O{\mathcal}}{#1} 
\NewDocumentCommand{\gtfilter}{O{\mathcal}}{#1} 
\NewDocumentCommand{\gtfmlclosed}{O{\mathcal}}{#1} 
\NewDocumentCommand{\gtfmlopen}{O{\mathcal}}{#1} 
\NewDocumentCommand{\gtopenball}{O{U}}{#1} 
\NewDocumentCommand{\gtopencover}{O{\mathcal}}{#1} 
\NewDocumentCommand{\gtopennbh}{O{\mathcal}}{#1} 
\NewDocumentCommand{\gtpreopencover}{O{\mathcal}}{#1} 
\NewDocumentCommand{\gtsubbase}{O{\mathcal}}{#1} 
\NewDocumentCommand{\gtvicinity}{O{\mathcal}}{#1} 
\NewDocumentCommand{\lasp}{O{\mathcal}}{#1} 
\NewDocumentCommand{\latpright}{O{\top} m}{#2^{#1}} 
\NewDocumentCommand{\latp}{O{t} m}{\setSymbolUpLeft{#1}{#2}} 
\NewDocumentCommand{\lpdistribution}{O{\mu} m}{#2_{\ast,#1}} 
\NewDocumentCommand{\lpmollifier}{O{\rho}}{#1} 
\NewDocumentCommand{\lpofpositive}{O{\chi}}{#1} 
\NewDocumentCommand{\manliederiv}{O{L}}{#1} 
\NewDocumentCommand{\mansmoothnbh}{O{\mathcal}}{#1} 
\NewDocumentCommand{\mblfmldsysgenerated}{O{d} m}{\fun{#1}{#2}} 
\NewDocumentCommand{\mblfmlgenerated}{O{\sigma} m}{\fun{#1}{#2}} 
\NewDocumentCommand{\oacorrfn}{O{\Gamma}}{#1} 
\NewDocumentCommand{\oagnsvector}{O{\Omega}}{#1} 
\NewDocumentCommand{\oaideal}{O{\mathcal}}{#1} 
\NewDocumentCommand{\oanumberoperator}{O{A}}{#1} 
\NewDocumentCommand{\oaposcone}{O{\mathcal{P}}}{#1} 
\NewDocumentCommand{\oapressure}{O{P}}{#1} 
\NewDocumentCommand{\oarepn}{O{\pi}}{#1} 
\NewDocumentCommand{\oaspnormalstate}{O{N}}{#1} 
\NewDocumentCommand{\oasppurestate}{O{P}}{#1} 
\NewDocumentCommand{\oaspstate}{O{E}}{#1} 
\NewDocumentCommand{\oastatevector}{O{\Omega}}{#1} 
\NewDocumentCommand{\oastate}{O{\omega}}{#1} 
\NewDocumentCommand{\opdilation}{O{\delta}}{#1} 
\NewDocumentCommand{\opdmat}{O{\rho}}{#1} 
\NewDocumentCommand{\opfockan}{O{a}}{#1} 
\NewDocumentCommand{\opfockcran}{O{a}}{#1^{\#}} 
\NewDocumentCommand{\opfockcrdagger}{O{a}}{#1^{\dagger}} 
\NewDocumentCommand{\opfockcr}{O{a}}{#1^{\ast}} 
\NewDocumentCommand{\opfocknumber}{O{N}}{#1} 
\NewDocumentCommand{\opfocksegalconj}{O{\pi}}{#1} 
\NewDocumentCommand{\opfocksegal}{O{\phi}}{#1} 
\NewDocumentCommand{\opspecmeas}{O{E}}{#1} 
\NewDocumentCommand{\opspec}{O{} m}{\fun{\sigma_{#1}}{#2}} 
\NewDocumentCommand{\optransl}{O{\tau}}{#1} 
\NewDocumentCommand{\physaction}{O{\mathcal{A}}}{#1} 
\NewDocumentCommand{\physcharge}{O{e}}{\mathrm{#1}} 
\NewDocumentCommand{\physcplconst}{O{\mathsf{g}}}{#1} 
\NewDocumentCommand{\physelectrostaticcapasity}{O{\mathrm{Cap}}}{#1} 
\NewDocumentCommand{\physenergy}{O{E}}{#1} 
\NewDocumentCommand{\physgse}{O{E}}{#1_{0}} 
\NewDocumentCommand{\physham}{O{H}}{#1} 
\NewDocumentCommand{\physlagdensity}{O{\mathcal{L}}}{#1} 
\NewDocumentCommand{\physlag}{O{L}}{#1} 
\NewDocumentCommand{\physliouvilean}{O{L}}{#1} 
\NewDocumentCommand{\physmass}{O{m}}{#1} 
\NewDocumentCommand{\prbcharfun}{O{\chi}}{#1} 
\NewDocumentCommand{\prbdist}{O{\mathcal{P}}}{#1} 
\NewDocumentCommand{\prbgaussianmeasure}{O{\msrcal{N}}}{#1} 
\NewDocumentCommand{\prbnormaldist}{O{N}}{#1} 
\NewDocumentCommand{\prbprocess}{O{X}}{#1} 
\NewDocumentCommand{\prbqspace}{O{\mathcal{Q}}}{#1} 
\NewDocumentCommand{\prbspsample}{O{\Omega}}{#1} 
\NewDocumentCommand{\psh}{O{\mathfrak}}{#1} 
\NewDocumentCommand{\qtquantumchannel}{O{\mathcal{L}}}{#1} 
\NewDocumentCommand{\repn}{O{\pi}}{#1} 
\NewDocumentCommand{\schattencls}{O{\mathbb{K}}}{#1} 
\NewDocumentCommand{\setfmlcylinder}{O{\mathcal{C}}}{#1} 
\NewDocumentCommand{\setfml}{O{\mathcal}}{#1} 
\NewDocumentCommand{\setindex}{O{\mathcal} m}{#1{#2}} 
\NewDocumentCommand{\setlattice}{O{\Gamma}}{#1} 
\NewDocumentCommand{\setspecial}{O{\mathcal} m}{#1{#2}} 
\NewDocumentCommand{\shdiffform}{O{\sheaf{A}}}{#1} 
\NewDocumentCommand{\sheaf}{O{\mathfrak}}{#1} 
\NewDocumentCommand{\smchemicalpotential}{O{\mu}}{#1} 
\NewDocumentCommand{\smenergydensity}{O{\varrho}}{#1} 
\NewDocumentCommand{\smfluctuationwithdmat}{O{\beta} m}{\smuncertaintywithdmat[#1]{#2}^2} 
\NewDocumentCommand{\sminvtemperature}{O{\beta}}{#1} 
\NewDocumentCommand{\smlocaldensityoperator}{O{\rho}}{#1} 
\NewDocumentCommand{\smmicrocanonicalstate}{O{\beta} m}{\physmean{#2}_{#1}} 
\NewDocumentCommand{\smnumberdensity}{O{\rho}}{#1} 
\NewDocumentCommand{\smooth}{O{\mathcal{E}}}{#1} 
\NewDocumentCommand{\smparticlenumber}{O{N}}{#1} 
\NewDocumentCommand{\smpressure}{O{p}}{#1} 
\NewDocumentCommand{\smspecificfreeenergy}{O{\bar{f}}}{#1} 
\NewDocumentCommand{\smthermalvac}{O{\beta}}{\Omega_{#1}} 
\NewDocumentCommand{\smuncertaintywithdmat}{O{\beta} m}{\rbk{\triangle #2}_{#1}} 
\NewDocumentCommand{\sphilb}{O{\mathcal}}{#1} 
\NewDocumentCommand{\splowerhalf}{O{\mathbb{H}}}{#1_{\txtneg}} 
\NewDocumentCommand{\spupperhalf}{O{\mathbb{H}}}{#1_{\txtnonneg}} 
\NewDocumentCommand{\topmetric}{O{d}}{#1} 
\NewDocumentCommand{\vaoutnormal}{O{\widehat}}{#1} 
\newcommand{\category}[1]{\mathop{\mathsf{#1}}} 

\newcommand{\catpresheaf}[1]{\category{PSh}} 
\newcommand{\dstrapiddec}{\mathcal{S}} 
\newcommand{\dsttempered}{\dstrapiddec^{\ast}} 
\newcommand{\faadj}[1]{#1^{\ast}} 
\newcommand{\faftr}[1]{\widehat{#1}} 
\newcommand{\fldcmp}{\fld{C}} 
\newcommand{\fldmultiplicativegroup}[1]{#1^{\times}} 
\newcommand{\fldreal}{\fld{R}} 
\newcommand{\fld}[1]{\mathbb{#1}} 
\newcommand{\fndef}[1]{\boldsymbol{1}_{#1}} 
\newcommand{\fnexp}[1]{\fun{\exp}{#1}} 
\newcommand{\fnrestr}[2]{\left. #1 \right|_{#2}} 
\newcommand{\greuctr}[2]{\fldreal_{#1}^{#2}} 
\newcommand{\gtclos}[1]{\overline{#1}} 
\newcommand{\lp}{L} 
\newcommand{\mblfn}{M} 
\newcommand{\msrcal}[1]{\mathcal{#1}} 

\newcommand{\oacstar}{C^{\ast}} 
\newcommand{\oaresolventalgebra}{\oa{R}} 
\newcommand{\oaresolvent}{R} 
\newcommand{\oastaralgebra}{\operatorname{\ast \hyphen \mathrm{alg}}} 
\newcommand{\oaweyl}{\oa{W}} 
\newcommand{\oa}[1]{\mathcal{#1}} 
\newcommand{\opdmsr}[1]{\mathop{d #1}} 
\newcommand{\opfnresolvent}[1]{\invrbk{#1}} 
\newcommand{\opfocksndqntdiff}{d \Gamma} 
\newcommand{\opfockweyl}{W} 
\newcommand{\opformdomain}{\mathop{Q}} 
\newcommand{\opform}[1]{\mathsf{#1}} 
\newcommand{\opspbddlin}[1]{\fun{\mathbb{B}}{#1}} 
\newcommand{\opspecint}[1]{\mathcal{E}} 
\newcommand{\physmean}[1]{\dbk{#1}} 
\newcommand{\ringratint}{\mathbb{Z}} 

\newcommand{\seq}[2]{\if\relax\detokenize{#1}\relax \rbk{#1} \else \rbk{#1}_{#2} \fi} 
\newcommand{\setcardop}[1]{\operatorname{card} #1} 
\newcommand{\setisomorphism}[1]{\operatorname{Iso}} 
\newcommand{\setone}[1]{\cbk{#1}}
\newcommand{\setquot}[2]{\bigmiddleslash{#1}{#2}} 
\newcommand{\set}[2]{\left\{#1 \, \middle| \, #2\right\}}
\newcommand{\smgrandpartitionfunc}{\Xi} 
\newcommand{\spfock}{\mathcal{F}} 
\newcommand{\txtbec}{\mathrm{BEC}} 
\newcommand{\txtborel}{\mathrm{b}} 
\newcommand{\txtbsn}{\mathrm{b}} 
\newcommand{\txtdiscrete}{\mathrm{d}} 
\newcommand{\txtexcludezero}{\times} 
\newcommand{\txtfin}{\mathrm{fin}} 
\newcommand{\txtfock}{\mathrm{F}} 
\newcommand{\txtfr}{\mathrm{fr}} 
\newcommand{\txtircapitalized}{\mathrm{IR}} 
\newcommand{\txtirsingular}{\mathrm{irs}} 
\newcommand{\txtneg}{\mathrm{-}} 
\newcommand{\txtnonneg}{\mathrm{+}} 
\newcommand{\txtnonzero}{\mathrm{nz}} 
\newcommand{\txtphys}{\mathrm{phys}} 
\newcommand{\txtreal}{\mathrm{real}} 
\newcommand{\txtgrandcanonical}{\mathrm{GC}} 
\newcommand{\txtselfconsistent}{\mathrm{sc}} 
\newcommand{\txtsym}{\mathrm{s}} 
\newcommand{\txtvanhowe}{\mathrm{vH}} 